\newtheorem{thm}{Theorem}[section]
\newtheorem{lem}[thm]{Lemma}
\newtheorem{cor}[thm]{Corollary}
\newtheorem{pro}[thm]{Proposition}
\newtheorem{ex}[thm]{Example}
\newtheorem{rmk}[thm]{Remark}
\newtheorem{defi}[thm]{Definition}
\newcommand {\emptycomment}[1]{}
\newcommand{\lon }{\,\rightarrow\,}
\newcommand{\be }{\begin{equation}}
\newcommand{\ee }{\end{equation}}
\newcommand{\g}{\mathfrak g}
\newcommand{\huaB}{\mathcal{B}}
\newcommand{\huaH}{\mathcal{H}}
\newcommand{\huaZ}{\mathcal{Z}}
\newcommand{\frkR}{\mathfrak R}
\newcommand{\Courant}[1]{\left\llbracket  #1\right\rrbracket }
\newcommand{\Id}{{\rm{Id}}}
\newcommand{\br}[1]{   [ \cdot,    \cdot  ]   }
\newcommand{\Hom}{\mathrm{Hom}}
\newcommand{\Der}{\mathrm{Der}}
\newcommand{\gl}{\mathfrak {gl}}
\newcommand{\ad}{\mathrm{ad}}
\newcommand{\sgn}{\mathrm{sgn}}
\newcommand{\Ni}{\mathsf{nLie}}
\begin{document}

\title{Reynolds $n$-Lie algebras and NS-$n$-Lie algebras}

\author{Shuai Hou}
\address{Department of Mathematics, Jilin University, Changchun 130012, Jilin, China}
\email{houshuai19@mails.jlu.edu.cn}

\author{Yunhe Sheng}
\address{Department of Mathematics, Jilin University, Changchun 130012, Jilin, China}
\email{shengyh@jlu.edu.cn}


\begin{abstract}
In this paper, first we introduce the notion of a Reynolds operator on an $n$-Lie algebra and illustrate the relationship between Reynolds operators and derivations on an $n$-Lie algebra. We give the cohomology theory of Reynolds operators on an $n$-Lie algebra and study infinitesimal deformations of Reynolds operators using the second  cohomology group. Then we introduce the notion of    NS-$n$-Lie algebras, which are generalizations  of both $n$-Lie algebras and $n$-pre-Lie algebras. We show that an NS-$n$-Lie algebra gives rise to  an $n$-Lie algebra together with a representation on itself.   Reynolds operators and   Nijenhuis operators on an $n$-Lie algebra naturally induce  NS-$n$-Lie algebra structures. Finally, we construct Reynolds $(n+1)$-Lie algebras and Reynolds $3$-Lie algebras from Reynolds $n$-Lie algebras and Reynolds commutative associative algebras respectively.
\end{abstract}

\subjclass[2010]{17B56, 17B40,  17A42}

\keywords{Reynolds $n$-Lie algebras, Reynolds operator, cohomology, NS-$n$-Lie algebra, Nijenhuis operator}

\maketitle

\tableofcontents

\allowdisplaybreaks


\section{Introduction}


Reynolds operators occurred for the first time in O. Reynolds' famous study  of turbulence theory into fluid dynamics (\cite{Re}). In the turbulent flows models of fluid dynamics, especially in the
Reynolds-averaged Navier-Stokes equations, Reynolds operators   often take the average over the fluid flow under the group of time translations.
Afterwards, the Reynolds operator was named  in \cite{Ka1}, where
the operator was considered  as a mathematical subject in general.
Let $A$ be an algebra over a filed. A Reynolds operator is linear map $R:A\rightarrow A$ satisfying the following identity
\begin{eqnarray*}
R(xy)=RxRy+R\Big((x-Rx)(y-Ry)\Big),\quad \forall x,y\in A.
\end{eqnarray*}
The identity is called the {\bf Reynolds identity}. In \cite{Dubreil-Jacotin}, the authors expanded the Reynolds identity and got the equivalent form
\begin{eqnarray*}
R(x)R(y)=R(xR(y))+R(R(x)y)-R(R(x)R(y)),\quad \forall x,y\in A.
\end{eqnarray*}
See \cite{FM,Mi1,Mi2,Ro2} for more studies of Reynolds operators. In particular, the
  free Reynolds algebras were given in \cite{gao-guo}. Recently,  A. Das introduced the notion of a Reynolds operators on a Lie algebra in \cite{Das-1} in the study of twisted Rota-Baxter operators.

The notion of an $n$-Lie algebra was introduced  by Filippov  in \cite{Filippov}. The $n$-Lie algebra is the algebraic
structure corresponding to Nambu mechanics \cite{N}. Recently,  $n$-Lie algebras have been widely studied on account of its appearing in many fields of mathematics and physics \cite{Bagger-J-1,Bagger-J}. For example, when $n=3$, $3$-Lie algebras are related to the study of  supersymmetry and gauge symmetry transformations of the word-volume theory of multiple M2-branes. See \cite{review} for more details about $n$-Lie algebras.
The notion of a Nijenhuis operator on an $n$-Lie algebra was introduced
in  \cite{Liu-Jie-Feng} to study deformations of $n$-Lie algebras. It is well-known that  Nijenhuis operators play a
significant role in deformation theories on account of their relationship with trivial infinitesimal deformations.
Deformations of $n$-Lie algebras have been studied extensively in \cite{Arfa,Makhlouf,Takhtajan1}.
In \cite{LP}, Leroux introduced the notion of an NS-algebra which consists of two binary operations and show that a Nijenhuis operator can define an NS-algebra.
In \cite{LG}, the authors studied the relationship
between the category of Nijenhuis algebras and the category of NS-algebras. Recently, the notion  of an NS-Lie algebra was also  introduced in \cite{Das-1}. A Reynolds operators on a Lie algebra, more generally a twisted Rota-Baxter operator on a Lie algebra induces an NS-Lie algebra naturally.

In this paper, first we introduce the notion of a Reynolds operator on an $n$-Lie algebra. Note that there are close relationships between Reynolds operators and derivations on $n$-Lie algebras.  We study the cohomology theory of Reynolds operator   and  use the second cohomology group to study
infinitesimal deformations of Reynolds operators on $n$-Lie algebras.  Then
we introduce a new algebraic structure which is called  an NS-$n$-Lie algebra. An NS-$n$-Lie algebra naturally gives rise to an $n$-Lie algebra and a representation on itself. We show that a Reynolds operator and a Nijenhuis operator on an $n$-Lie algebra
naturally induce an NS-$n$-Lie algebra respectively. Finally, according to constructions of $n$-Lie algebras, we give various constructions of Reynolds $n$-Lie algebras.

The paper is organized as follows. In Section \ref{sec:L},  we introduce the notion of a Reynolds operator on an $n$-Lie algebra and illustrate the relationship between Reynolds operators and derivations. Furthermore, we establish the cohomology theory of a Reynolds operator on an $n$-Lie algebra. Applications are given to study infinitesimal deformations of Reynolds operators. In Section \ref{sec:GM}, we introduce the notion of
an NS-$n$-Lie algebra and show that a Reynolds operator and a Nijenhuis operator on an $n$-Lie algebra naturally
induce an NS-$n$-Lie algebra respectively. In Section \ref{sec:Con},  we construct Reynolds $(n+1)$-Lie algebras and Reynolds $3$-Lie algebras from Reynolds $n$-Lie algebras and Reynolds  commutative associative algebras respectively.

\vspace{2mm}


\vspace{2mm}
\noindent
{\bf Acknowledgements. } This research is  supported by National Science Foundation of China
(11922110).

\section{Cohomologies and deformations of Reynolds operators on $n$-Lie algebras}\label{sec:L}
\subsection{Reynolds operators on $n$-Lie algebras}

In this subsection, we introduce the notion of a Reynolds operator on an $n$-Lie algebra and provide the replicating property of Reynolds $n$-Lie algebras.
We also find that there is a close relationship between Reynolds operators and derivations on an $n$-Lie algebra.
\begin{defi}{\rm\cite{Filippov}}
An {\bf$n$-Lie algebra} is a vector space $\g$
together with a skew-symmetric linear multiplication $[\cdot,\cdots,\cdot]_{\g}: \wedge^n \g\rightarrow \g$ such that for all $x_i,y_i\in \g, 1\leq i\leq n$, the following Filippov Identity is satisfied:
\begin{equation}\label{FI-Identity}
[x_1, \cdots, x_{n-1}, [y_1, \cdots, y_n]_{\g}]_{\g}=\sum_{i=1}^n[y_1, \cdots, y_{i-1}, [x_1, \cdots, x_{n-1},y_{i}]_{\g},y_{i+1},\cdots, y_n]_{\g}.
\end{equation}
\end{defi}
A {\bf derivation} on $\g$ is a linear map $D:\g\rightarrow\g$ such that
\begin{equation}\label{eq:der}
 D([x_1, \cdots, x_n]_{\g})=\sum_{i=1}^n[x_1, \cdots, D(x_i), \cdots, x_n]_{\g},\quad\forall x_1, \cdots, x_n\in \g.
\end{equation}

For $x_1,x_2,\cdots,x_{n-1}\in \g,$ define $\ad:\wedge ^{n-1}\g\rightarrow \gl(\g)$ by
$$\ad_{x_1,x_2,\cdots,x_{n-1}}y:=[x_1,x_2,\cdots,x_{n-1},y]_{\g},\quad \forall y\in \g.$$
Then $\ad_{x_1,x_2,\cdots,x_{n-1}}$ is a derivation, i.e.
$$\ad_{x_1,x_2,\cdots,x_{n-1}}[y_1, \cdots, y_{n}]_{\g}=\sum^{n}_{i=1}[y_1, \cdots, y_{i-1},\ad_{x_1,x_2,\cdots,x_{n-1}}y_{i},y_{i+1},\cdots,y_{n}]_{\g}.$$

\begin{defi}{\rm\cite{KA}}\label{defi-representation}
Let $V$ be a vector space. A {\bf representation} of an $n$-Lie algebra $(\g,[\cdot,\cdots,\cdot]_{\g})$ on $V$ is a multilinear map $\rho:\wedge^{n-1}\g\rightarrow \gl(V)$, such that for all $x_1,\cdots,x_{n-1},y_1,\cdots,y_n\in \g$, the following equalities hold:
\begin{eqnarray}
 \label{n-representation-1}[\rho(\mathfrak{X}),\rho(\mathfrak{Y})]_{\g}&=&\rho(\mathfrak{X}\circ\mathfrak{Y}),\\
  \label{n-representation-2}\rho(x_1,\cdots,x_{n-2},[y_{1},\cdots,y_{n}]_{\g})&=&\sum_{i=1}^n(-1)^{n-i}\rho (y_{1},\cdots,\widehat{y_{i}},\cdots y_{n})\rho (x_{1},\cdots,x_{n-2},y_i),
  \end{eqnarray}
  where $\mathfrak{X}\circ \mathfrak{Y}=\sum\limits_{i=1}^{n-1}y_1\wedge\cdots\wedge y_{i-1}\wedge[x_1,\cdots,x_{n-1},y_i]_{\g}\wedge y_{i+1}\wedge\cdots\wedge y_{n-1}$ and $\mathfrak{X}=x_1\wedge \cdots \wedge x_{n-1},\mathfrak{Y}=y_1\wedge \cdots \wedge y_{n-1}$.
\end{defi}

Given a representation $(V;\rho)$, there is a semi-direct product $n$-Lie algebra structure on $\g\oplus V$ given by
\begin{eqnarray*}
[x_1+v_1,\cdots,x_n+v_n]_{\g\oplus V}=[x_1,\cdots,x_n]_{\g}+\sum\limits_{i=1}^{n}(-1)^{n-i}\rho(x_1,\cdots,\widehat{x_{i}},\cdots,x_{n})(v_i),\quad x_i\in \g,v_i\in V,
\end{eqnarray*}
where $\widehat{x_{i}}$ means that $x_i$ is omitted.

We denote this semi-direct product $n$-Lie algebra by $\g\ltimes_{\rho} V.$ In particular, when $n=2,$ i.e. for a Lie algebra, we obtain the usual notion of a semi-direct product Lie algebra.

 Let $(V;\rho)$ be a representation of an $n$-Lie algebra $(\g,[\cdot,\cdots,\cdot]_{\g})$. Denote the space of $m$-cochains by
$$C_{\Ni}^{m}(\g;V)=\Hom (\otimes^{m-1} (\wedge^{n-1}\g)\wedge\g,V),\quad(m\geq1).$$

The coboundary operator ${\partial_{\rho}}:C_{\Ni}^{m}(\g;V)\rightarrow C_{\Ni}^{m+1}(\g;V)$ is defined by
\begin{eqnarray*}&&
({\partial_{\rho}}f)(\mathfrak{X}_1,\cdots,\mathfrak{X}_m,x_{m+1})\\
&=&\sum_{1\leq j<k\leq m}(-1)^{j} f(\mathfrak{X}_1,\cdots,\widehat{\mathfrak{X}_{j}},\cdots,\mathfrak{X}_{k-1},
\mathfrak{X}_j\circ\mathfrak{X}_k,\mathfrak{X}_{k+1},\cdots,\mathfrak{X}_{m},x_{m+1})\\
&&+\sum_{j=1}^{m}(-1)^{j}f(\mathfrak{X}_1,\cdots,\widehat{\mathfrak{X}_{j}},\cdots,\mathfrak{X}_{m},
[\mathfrak{X}_j,x_{m+1}]_{\g})\\
&&+\sum_{j=1}^{m}(-1)^{j+1}\rho(\mathfrak{X}_j)f(\mathfrak{X}_1,\cdots,\widehat{\mathfrak{X}_{j}},
\cdots,\mathfrak{X}_{m},x_{m+1})\\&&
+\sum_{i=1}^{n-1}(-1)^{n+m-i+1}\rho(x^1_m,\cdots,\widehat{x^i_m},\cdots,x^{n-1}_m, x_{m+1})f(\mathfrak{X}_1,\cdots,\mathfrak{X}_{m-1},x^{i}_m),
\end{eqnarray*}
for any $\mathfrak{X}_{i}=x^1_{i}\wedge\cdots\wedge x^{n-1}_{i}\in \wedge^{n-1}\g, i=1,2,\cdots,m,x_{m+1}\in \g.$
It was proved in {\rm\cite{Casas,Takhtajan1}} that $\partial_{\rho}\circ\partial_{\rho}=0$. Thus, $(\oplus^{+\infty}_{m=1} C_{\Ni}^{m}(\g;V),\partial_{\rho}) $ is a cochain complex.
\begin{defi}
The cohomology of the $n$-Lie algebra $\g$ with coefficients in $V$ is the cohomology of the cochain complex $(\oplus^{+\infty}_{m=1} C_{\Ni}^{m}(\g;V),\partial_{\rho}) $. The corresponding $m$-th cohomology group is denoted by $\huaH_{\Ni}^{m}(\g;V),$ for $m\geq 1.$
\end{defi}

\begin{defi}
Let $(\g,[\cdot,\cdots,\cdot]_{\g})$ be an $n$-Lie algebra. A linear map $R:\g\rightarrow\g$ is called a {\bf Reynolds operator} if
\begin{equation}
\label{n-Reynolds} [Rx_1,\cdots,Rx_n]_{\g}=\sum^n_{i=1}(-1)^{n-i} R[Rx_1,\cdots,\widehat{Rx_i},\cdots,Rx_n,x_i]_{\g}-R[Rx_1,\cdots,Rx_n]_{\g},
\end{equation}
where $x_1,x_2,\cdots,x_n\in \g.$
Moreover, an $n$-Lie algebra $\g$ with a Reynolds operator $R$ is called a {\bf Reynolds $n$-Lie algebra}. We denote it by $(\g,[\cdot,\cdots,\cdot]_{\g},R).$
\end{defi}

\begin{defi}
Let $R$ and $R'$ be Reynolds operators on an $n$-Lie algebra $\g.$ A {\bf homomorphism} of
Reynolds operators from $R$ to $R'$ consists of a pair $(\phi,\varphi)$ of an $n$-Lie algebra homomorphism $\phi:\g\rightarrow \g$ and a linear map $\psi:\g\rightarrow \g$ satisfying
\begin{eqnarray}\label{condition-1}
\phi\circ R=R'\circ \psi,
\end{eqnarray}
i.e. we have the following commutative diagram
\[\xymatrix{
 \mathfrak g \ar[d]_{R} \ar[r]^{\psi}
                & \mathfrak g \ar[d]^{R'}  \\
  \mathfrak g \ar[r]^{\phi}
                & \mathfrak g             }\].

\end{defi}

\begin{thm}\label{Reynolds-n-Lie algebra}
Let $(\g,[\cdot,\cdots,\cdot]_{\g},R)$ be a Reynolds $n$-Lie algebra. Define   $[\cdot,\cdots,\cdot]_{R}:\wedge^n\g\lon \g$   by
\begin{equation}
\label{induce-n-Lie}[x_1,\cdots,x_n]_{R}=\sum^n_{i=1}(-1)^{n-i}[Rx_1,\cdots,\widehat{Rx_i},\cdots,Rx_n,x_i]_{\g}-[Rx_1,\cdots,Rx_n]_{\g},
\end{equation}
for all $x_1,x_2,\cdots,x_n\in \g.$ Then
\begin{itemize}
\item[{\rm (a)}] $[Rx_1,\cdots,Rx_n]_{\g}=R([x_1,\cdots,x_n]_{R})$;
\item[{\rm (b)}] $(\g,[\cdot,\cdots,\cdot]_{R})$ is an $n$-Lie algebra;
\item[{\rm (c)}] $(\g,[\cdot,\cdots,\cdot]_{R},R)$ is a Reynolds $n$-Lie algebra;
\item[{\rm (d)}] The pair $(R,R)$ is a Reynolds $n$-Lie algebra homomorphism from $(\g,[\cdot,\cdots,\cdot]_{R},R)$ to $(\g,[\cdot,\cdots,\cdot]_{\g},R)$.
\end{itemize}
\end{thm}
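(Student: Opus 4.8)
The plan is to treat (a) and (d) as direct reformulations of the defining identities, to obtain (c) from (a) by a short computation, and to reserve the real work for the Filippov identity in (b). First I would prove (a) by applying $R$ to the defining formula (\ref{induce-n-Lie}) and using linearity, which gives
\[
R([x_1,\cdots,x_n]_{R})=\sum^n_{i=1}(-1)^{n-i}R[Rx_1,\cdots,\widehat{Rx_i},\cdots,Rx_n,x_i]_{\g}-R[Rx_1,\cdots,Rx_n]_{\g}.
\]
The right-hand side is exactly the right-hand side of the Reynolds identity (\ref{n-Reynolds}), so it equals $[Rx_1,\cdots,Rx_n]_{\g}$; thus (a) is merely a transcription of (\ref{n-Reynolds}) through the definition of $[\cdot,\cdots,\cdot]_{R}$. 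Part (d) then costs almost nothing: a homomorphism of Reynolds operators asks for an $n$-Lie morphism $\phi$ together with a linear $\psi$ satisfying (\ref{condition-1}). Taking $\phi=\psi=R$, the morphism condition $R([x_1,\cdots,x_n]_{R})=[Rx_1,\cdots,Rx_n]_{\g}$ is precisely (a), and (\ref{condition-1}) reduces to the trivial identity $R\circ R=R\circ R$; hence $(R,R)$ is the asserted homomorphism (once (b) guarantees that the source is a genuine Reynolds $n$-Lie algebra).

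For (c) I would expand both sides of the Reynolds identity for $[\cdot,\cdots,\cdot]_{R}$ and check they coincide. Expanding the left-hand side $[Rx_1,\cdots,Rx_n]_{R}$ directly via (\ref{induce-n-Lie}) replaces every argument $Rx_j$ by $R^2x_j$ except in the distinguished slot, yielding $\sum_i(-1)^{n-i}[R^2x_1,\cdots,\widehat{R^2x_i},\cdots,R^2x_n,Rx_i]_{\g}-[R^2x_1,\cdots,R^2x_n]_{\g}$. On the right-hand side each summand has the form $R([\cdots]_{R})$, so applying (a) converts every inner $[\cdots]_{R}$ into the $[\cdot,\cdots,\cdot]_{\g}$-bracket of the $R$-images of its arguments, producing the same expression. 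Hence (c) follows immediately from (a).

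The heart of the theorem is (b). Skew-symmetry of $[\cdot,\cdots,\cdot]_{R}$ is the easy half: the first sum in (\ref{induce-n-Lie}) is exactly the $V$-component obtained by evaluating the semidirect product bracket of $\g\ltimes_{\ad}\g$ at the elements $(Rx_i,x_i)$ (with $x_i$ placed in the $V=\g$ slot), while $[Rx_1,\cdots,Rx_n]_{\g}$ is the base component; since the semidirect bracket is skew-symmetric, both components are skew-symmetric in $x_1,\cdots,x_n$, and so is their difference. The Filippov identity is where the effort lies. My plan is to expand both sides of
\[
[x_1,\cdots,x_{n-1},[y_1,\cdots,y_n]_{R}]_{R}=\sum_{k=1}^n[y_1,\cdots,[x_1,\cdots,x_{n-1},y_k]_{R},\cdots,y_n]_{R}
\]
using (\ref{induce-n-Lie}). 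In each outer bracket the distinguished inner argument either lands inside an $R$, in which case (a) turns it into a $[\cdot,\cdots,\cdot]_{\g}$-bracket of $R$-images and reduces that term to a nested $[\cdot,\cdots,\cdot]_{\g}$-bracket, or it remains outside $R$, in which case I expand it once more by (\ref{induce-n-Lie}), again arriving at nested $[\cdot,\cdots,\cdot]_{\g}$-brackets. After this reduction every term on both sides is a nested bracket of elements drawn from $\{Rx_j,Ry_k,x_j,y_k\}$ in $(\g,[\cdot,\cdots,\cdot]_{\g})$, and the asserted equality becomes a consequence of repeated application of the Filippov identity (\ref{FI-Identity}).

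The main obstacle is purely organizational: controlling the large number of terms and signs produced by this double expansion and matching them against the instances of (\ref{FI-Identity}) that close the argument. I expect it is cleanest to group the expanded terms according to whether the distinguished argument was absorbed by $R$, and to treat separately the summand in which $y_k$ sits in the distinguished slot, since that is exactly where part (a) (equivalently the Reynolds identity (\ref{n-Reynolds})) must interact with (\ref{FI-Identity}). I anticipate that no input beyond skew-symmetry of $[\cdot,\cdots,\cdot]_{\g}$, the Filippov identity (\ref{FI-Identity}), and part (a) is required.
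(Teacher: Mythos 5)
Your proposal is correct and follows essentially the same route as the paper: (a) as a transcription of the Reynolds identity \eqref{n-Reynolds}, (c) and (d) as quick consequences of (a), and (b) by expanding both sides of the Filippov identity via \eqref{induce-n-Lie}, absorbing the $R$-terms through (a), and cancelling with repeated use of \eqref{FI-Identity} --- exactly the computation the paper carries out, and your anticipated ingredients (skew-symmetry of $[\cdot,\cdots,\cdot]_{\g}$, \eqref{FI-Identity}, and part (a)) are precisely what it uses. Your semidirect-product reading of skew-symmetry via $x\mapsto(Rx,x)$ into $\g\ltimes_{\ad}\g$ is a tidy touch where the paper just calls it straightforward, but it is not a genuinely different argument.
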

\begin{proof}
\item[{\rm (a)}.] It follows directly from \eqref{n-Reynolds}.
\item[{\rm (b)}.] It is straightforward to deduce that $[\cdot,\cdots,\cdot]_{R}$ is skew-symmetric.
For $x_1,\cdots,x_{n-1},y_1,\cdots,y_n\in \g,$ by \eqref{FI-Identity} and \eqref{induce-n-Lie}, we have
\begin{eqnarray*}
&&[x_1,\cdots,x_{n-1},[y_1,\cdots,y_n]_{R}]_{R}-\sum\limits^{n}_{i=1}[y_1,\cdots,y_{i-1},[x_1,\cdots,x_{n-1},y_i]_{R},y_{i+1},\cdots,y_n]_{R}\\
&=&[Rx_1,\cdots,Rx_{n-1},[y_1,\cdots,y_n]_{R}]_{\g}+\sum\limits^{n-1}_{i=1}(-1)^{n-i}[Rx_1,\cdots,\widehat{Rx_i},\cdots,Rx_{n-1},[Ry_1,\cdots,Ry_n]_{\g},x_i]_{\g}\\
&&-[Rx_1,\cdots,Rx_{n-1},[Ry_1,\cdots,Ry_n]_{\g}]_{\g}-\sum\limits^{n}_{i=1}[Ry_1,\cdots,Ry_{i-1},[x_1,\cdots,x_{n-1},y_{i}]_{R},Ry_{i+1},\cdots,Ry_{n}]_{\g}
\\&&+\sum\limits^{n}_{i=1}[Ry_1,\cdots,Ry_{i-1},[Rx_1,\cdots,Rx_{n-1},Ry_i]_{\g},Ry_{i+1},\cdots,Ry_{n}]_{\g}\\
&&-\sum\limits^{n}_{i=1}\sum\limits_{j=1,j\neq i}^{n}(-1)^{n+i-j-1}[[Rx_1,\cdots,Rx_{n-1},Ry_i]_{\g},Ry_1,\cdots,,\widehat{Ry_{j}},\cdots,Ry_{n},y_{j}]_{\g}\\
&=&\sum^n_{i=1}(-1)^{n-i}[Rx_1,\cdots,Rx_{n-1},[Ry_1,\cdots,\widehat{Ry_i},\cdots,Ry_n,y_i]_{\g}]_{\g}-2[Rx_1,\cdots,Rx_{n-1},[Ry_1,\cdots,Ry_n]_{\g}]_{\g}\\
&&+\sum\limits^{n-1}_{i=1}(-1)^{n-i}[Rx_1,\cdots,\widehat{Rx_i},\cdots,Rx_{n-1},[Ry_1,\cdots,Ry_n]_{\g},x_i]_{\g}\\
&&-\sum\limits^{n}_{i=1}\sum\limits^{n-1}_{j=1}(-1)^{n-j}[Ry_1,\cdots,Ry_{i-1},[Rx_1,\cdots,\widehat{Rx_j},\cdots,Rx_{n-1},Ry_{i},x_j]_{\g},Ry_{i+1},\cdots,Ry_{n}]_{\g}\\
&&-\sum\limits^{n}_{i=1}[Ry_1,\cdots,Ry_{i-1},[Rx_1,\cdots,Rx_{n-1},y_{i}]_{\g},Ry_{i+1},\cdots,Ry_{n}]_{\g}\\
&&+\sum\limits^{n}_{i=1}2[Ry_1,\cdots,Ry_{i-1},[Rx_1,\cdots,Rx_{n-1},Ry_{i}]_{\g},Ry_{i+1},\cdots,Ry_{n}]_{\g}\\
&&-\sum\limits^{n}_{i=1}\sum\limits_{j=1,j\neq i}^{n}(-1)^{n+i-j-1}[[Rx_1,\cdots,Rx_{n-1},Ry_i]_{\g},Ry_1,\cdots,\widehat{Ry_{j}},\cdots,Ry_{n},y_{j}]_{\g}\\
&=&0.
\end{eqnarray*}
Thus, $(\g,[\cdot,\cdots,\cdot]_{R})$ is an $n$-Lie algebra.
\item[{\rm (c)}.] For $x_1,x_2,\cdots,x_n\in \g,$ by \eqref{induce-n-Lie} and Item {\rm (a)}, we have
\begin{eqnarray*}
[Rx_1,Rx_2,\cdots,Rx_n]_{R}&=&\sum^n_{i=1}(-1)^{n-i} [R^{2}x_1,\cdots,\widehat{R^{2}x_i},\cdots,R^{2}x_n,Rx_i]_{\g}-[R^{2}x_1,\cdots,R^{2}x_n]_{\g}\\
                           &=&\sum^n_{i=1}(-1)^{n-i}R[Rx_1,\cdots,\widehat{Rx_i},\cdots,Rx_n,x_i]_{R}-R[Rx_1,\cdots,Rx_n]_{R},
\end{eqnarray*}
which implies that $R$ is a Reynolds operator on the $n$-Lie algebra $(\g,[\cdot,\cdots,\cdot]_{R}).$
\item[{\rm (d)}.] By Item {\rm (a)}, $R$ is an $n$-Lie algebra homomorphism. Moreover, $R$ commutes with itself. Therefore,
$(R,R)$ is a Reynolds $n$-Lie algebra homomorphism from the Reynolds $n$-Lie algebra $(\g,[\cdot,\cdots,\cdot]_{R},R)$ to $(\g,[\cdot,\cdots,\cdot]_{\g},R)$
\end{proof}

Next, we study the relationship between Reynolds operators and derivations on an $n$-Lie algebra.
\begin{pro}
Let $R:\g\rightarrow \g$ be a Reynolds operator on an $n$-Lie algebra $(\g,[\cdot,\cdots,\cdot]_{\g})$. If $R$ is invertible, then $(R^{-1}-\frac{1}{n-1}\Id):\g\rightarrow\g$ is a derivation on $\g$, where $\Id$ is the identity operator.
\end{pro}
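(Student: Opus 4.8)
The plan is to use the invertibility of $R$ to pass everything through $R^{-1}$ and reduce the claim to a direct verification of the derivation property \eqref{eq:der} for $D:=R^{-1}-\frac{1}{n-1}\Id$. Since $R$ is a bijection, every $n$-tuple of arguments can be written as $y_i=Rx_i$ with $x_i=R^{-1}y_i$, so it suffices to check \eqref{eq:der} on such tuples. The first step is to rewrite the right-hand side of the Reynolds identity \eqref{n-Reynolds} in these variables. Inside the sum the last slot becomes $x_i=R^{-1}y_i$, and by skew-symmetry moving $R^{-1}y_i$ from the $n$-th slot back to the $i$-th slot costs a sign $(-1)^{n-i}$, which cancels the prefactor $(-1)^{n-i}$. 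Hence \eqref{n-Reynolds} takes the cleaner form
\[
[y_1,\cdots,y_n]_{\g}=\sum_{i=1}^n R[y_1,\cdots,R^{-1}y_i,\cdots,y_n]_{\g}-R[y_1,\cdots,y_n]_{\g}.
\]

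Applying $R^{-1}$ to both sides and rearranging gives the key identity
\[
\sum_{i=1}^n [y_1,\cdots,R^{-1}y_i,\cdots,y_n]_{\g}=R^{-1}[y_1,\cdots,y_n]_{\g}+[y_1,\cdots,y_n]_{\g}.
\]
Next I would compute $\sum_{i=1}^n [y_1,\cdots,D(y_i),\cdots,y_n]_{\g}$ by linearity, splitting $D=R^{-1}-\frac{1}{n-1}\Id$. The $R^{-1}$-part is handled by the displayed key identity, while the $\frac{1}{n-1}\Id$-part contributes $\frac{n}{n-1}[y_1,\cdots,y_n]_{\g}$, since each of its $n$ summands simply equals $[y_1,\cdots,y_n]_{\g}$. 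Collecting the scalar coefficients of $[y_1,\cdots,y_n]_{\g}$ produces $1-\frac{n}{n-1}=-\frac{1}{n-1}$, so the total equals $R^{-1}[y_1,\cdots,y_n]_{\g}-\frac{1}{n-1}[y_1,\cdots,y_n]_{\g}=D([y_1,\cdots,y_n]_{\g})$, which is exactly \eqref{eq:der}.

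The computation is entirely elementary, and the only step requiring genuine care is the sign bookkeeping in the reduction of \eqref{n-Reynolds}: one must confirm that transporting $R^{-1}y_i$ across the $n-i$ entries $y_{i+1},\cdots,y_n$ yields precisely $(-1)^{n-i}$, so that it cancels the prefactor rather than doubling it. The final scalar identity $1-\frac{n}{n-1}=-\frac{1}{n-1}$ is what forces the specific shift $\frac{1}{n-1}\Id$; for $n=2$ this specializes to $R^{-1}-\Id$, recovering the Lie-algebra statement of \cite{Das-1}.
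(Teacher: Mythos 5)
Your proposal is correct and takes essentially the same route as the paper: both pass through the key identity $R^{-1}[x_1,\cdots,x_n]_{\g}=\sum_{i=1}^n[x_1,\cdots,R^{-1}x_i,\cdots,x_n]_{\g}-[x_1,\cdots,x_n]_{\g}$, obtained from \eqref{n-Reynolds} via the substitution $x_i=R^{-1}y_i$, and then verify \eqref{eq:der} for $R^{-1}-\frac{1}{n-1}\Id$ by collecting the scalar coefficients $1-\frac{n}{n-1}=-\frac{1}{n-1}$. The only difference is expository: you spell out the sign cancellation $(-1)^{n-i}\cdot(-1)^{n-i}=1$ from skew-symmetry, which the paper leaves implicit.
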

\begin{proof}
Let $R:\g\rightarrow \g$ be a Reynolds operator on $\g$ such that $R$ is invertible.
By \eqref{n-Reynolds}, we have
\begin{eqnarray*}
R^{-1}[x_1,\cdots,x_n]_{\g}=\sum^n_{i=1} [x_1,\cdots,R^{-1}x_i,\cdots,x_n]_{\g}-[x_1,\cdots,x_n]_{\g},
\end{eqnarray*}
where $x_1,x_2,\cdots,x_n\in \g.$ Then we have

\begin{eqnarray*}
(R^{-1}-\frac{1}{n-1}\Id)[x_1,\cdots,x_n]_{\g}=\sum^n_{i=1} [x_1,\cdots,(R^{-1}-\frac{1}{n-1}\Id)x_i,\cdots,x_n]_{\g}.
\end{eqnarray*}
This shows that $(R^{-1}-\frac{1}{n-1}\Id):\g\rightarrow\g$ is a derivation on $\g$.
\end{proof}

Conversely, we can derive a Reynolds operator on an $n$-Lie algebra from a derivation.
\begin{pro}
Let $D:\g\rightarrow \g$ be a derivation on an $n$-Lie algebra $(\g,[\cdot,\cdots,\cdot]_{\g})$. If $(D+\frac{1}{n-1}\Id):\g\rightarrow\g$ is invertible, then $(D+\frac{1}{n-1}\Id)^{-1}$ is a Reynolds operator.
\end{pro}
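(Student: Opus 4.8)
The plan is to exploit the fact that the preceding proposition was proved through a chain of reversible manipulations, so that the assignment $R \mapsto R^{-1}-\frac{1}{n-1}\Id$ is essentially a bijection between invertible Reynolds operators and the corresponding derivations. Accordingly, I would set $R:=(D+\frac{1}{n-1}\Id)^{-1}$, so that $R^{-1}=D+\frac{1}{n-1}\Id$ and hence $D=R^{-1}-\frac{1}{n-1}\Id$, and then run the earlier computation backwards to recover the Reynolds identity \eqref{n-Reynolds} for $R$.

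First I would write out the derivation identity \eqref{eq:der} for $D$ and substitute $D=R^{-1}-\frac{1}{n-1}\Id$ into both sides. Expanding the $n$ summands on the right produces $\sum_{i=1}^n[x_1,\cdots,R^{-1}x_i,\cdots,x_n]_{\g}$ together with a scalar contribution $-\frac{n}{n-1}[x_1,\cdots,x_n]_{\g}$ coming from the $n$ copies of $\frac{1}{n-1}\Id$, while the left side gives $R^{-1}[x_1,\cdots,x_n]_{\g}-\frac{1}{n-1}[x_1,\cdots,x_n]_{\g}$. Using $\frac{n}{n-1}-\frac{1}{n-1}=1$, the scalar terms combine to yield the intermediate relation
\[
\sum_{i=1}^n[x_1,\cdots,R^{-1}x_i,\cdots,x_n]_{\g}-[x_1,\cdots,x_n]_{\g}=R^{-1}[x_1,\cdots,x_n]_{\g},
\]
which is exactly the identity appearing midway through the proof of the previous proposition. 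Next I would apply $R$ to both sides and then substitute $x_i=Ru_i$, legitimate since $R$ is invertible; this turns $R^{-1}x_i$ into $u_i$ and $x_i$ into $Ru_i$, producing
\[
[Ru_1,\cdots,Ru_n]_{\g}=\sum_{i=1}^n R[Ru_1,\cdots,u_i,\cdots,Ru_n]_{\g}-R[Ru_1,\cdots,Ru_n]_{\g},
\]
with each $u_i$ sitting in the $i$-th slot. Finally, using skew-symmetry of the bracket to move $u_i$ from the $i$-th position to the last position introduces the sign $(-1)^{n-i}$, which rewrites the $i$-th summand as $(-1)^{n-i}R[Ru_1,\cdots,\widehat{Ru_i},\cdots,Ru_n,u_i]_{\g}$ and recovers precisely \eqref{n-Reynolds}.

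Since the argument is a reversal of the previous computation rather than a genuinely new one, I do not expect a serious obstacle; the only points demanding care are the bookkeeping of the scalar coefficients (verifying $\frac{n}{n-1}-\frac{1}{n-1}=1$, which is exactly where the constant $\frac{1}{n-1}$ is forced) and the sign bookkeeping when passing between the internal-slot form and the standard tail form of the Reynolds identity via skew-symmetry. One should also note explicitly that invertibility of $R=(D+\frac{1}{n-1}\Id)^{-1}$ is what licenses both the application of $R$ and the substitution $x_i=Ru_i$ over all $u_i\in\g$.
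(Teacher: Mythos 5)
Your proposal is correct and is essentially the paper's own argument: the paper likewise writes the derivation identity for $P=D+\frac{1}{n-1}\Id$ in the form $P[u_1,\cdots,u_n]_{\g}=\sum_{i=1}^n[u_1,\cdots,Pu_i,\cdots,u_n]_{\g}-[u_1,\cdots,u_n]_{\g}$ (your scalar bookkeeping $\frac{n}{n-1}-\frac{1}{n-1}=1$ is exactly this step), then substitutes $Pu_i=x_i$ and applies $P^{-1}$, with the same $(-1)^{n-i}$ sign adjustment from skew-symmetry, to arrive at \eqref{n-Reynolds}. Your relabeling via $R=P^{-1}$ and the observation that specializing $x_i=Ru_i$ suffices are only cosmetic differences.
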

\begin{proof}
Let $D:\g\rightarrow \g$ be a derivation on an $n$-Lie algebra $(\g,[\cdot,\cdots,\cdot]_{\g}).$
By \eqref{eq:der}, for $u_1,u_2,\cdots,u_n\in \g$, we have
\begin{eqnarray*}
(D+\frac{1}{n-1}\Id)[u_1,\cdots,u_n]_{\g}=\sum_{i=1}^n[u_1, \cdots, (D+\frac{1}{n-1}\Id)u_i, \cdots, u_n]_{\g}-[u_1,\cdots,u_n]_{\g}.
\end{eqnarray*}
For convenience, we denote $P=D+\frac{1}{n-1}\Id.$
If $P$ is invertible, we put $Pu_i=x_i, 1\leq i\leq n,$
we get
\begin{eqnarray*}
[P^{-1}x_1,\cdots,P^{-1}x_n]_{\g}&=&\sum^n_{i=1}(-1)^{n-i} P^{-1}[P^{-1}x_1,\cdots,\widehat{P^{-1}x_i},\cdots,P^{-1}x_n,x_i]_{\g}\\
&&-P^{-1}[P^{-1}x_1,\cdots,P^{-1}x_n]_{\g},
\end{eqnarray*}
which implies that $P^{-1}$ is a Reynolds operator on an $n$-Lie algebra $(\g,[\cdot,\cdots,\cdot]_{\g}).$ The proof is finished.
\end{proof}

If $(D+\frac{1}{n-1}\Id):\g\rightarrow\g$ is not invertible, by Proposition 2.8 in \cite{gao-guo}, we have the following result.
\begin{pro}\label{invertible-derivation}
Let $\g$ be an $n$-Lie algebra and $D:\g\rightarrow\g$ be a derivation. For each $x\in \g,$ if the infinite sum  $(D+\frac{1}{n-1}\Id)^{-1}(x)=\sum^{\infty}_{m=0}(-1)^{m}(n-1)^{m+1}D^m(x)$ converges to an element in $\g,$ then $R:=\sum^{\infty}_{m=0}(-1)^{m}(n-1)^{m+1}D^m$ is a Reynolds operator on $\g.$
\end{pro}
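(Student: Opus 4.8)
The plan is to identify $R$ with the two-sided inverse of the operator $P:=D+\frac{1}{n-1}\Id$ and then to quote the computation from the proof of the previous proposition. Indeed $P=\frac{1}{n-1}\big(\Id+(n-1)D\big)$, so its formal Neumann inverse is $\frac{1}{n-1}\sum_{m=0}^{\infty}\big(-(n-1)D\big)^m=\sum_{m=0}^{\infty}(-1)^m(n-1)^{m+1}D^m=R$. The whole content of the statement is that, under the stated convergence, this formal inverse is an honest one.

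First I would control the partial sums $R_N:=\sum_{m=0}^{N}(-1)^m(n-1)^{m+1}D^m$. A direct telescoping, in which all intermediate powers of $D$ cancel in consecutive pairs, gives
\begin{eqnarray*}
PR_N=R_NP=\Id+(-1)^{N}(n-1)^{N+1}D^{N+1}.
\end{eqnarray*}
The remainder $(-1)^{N}(n-1)^{N+1}D^{N+1}x$ equals $-\frac{1}{n-1}$ times the $(N+1)$-st term of the series $\sum_m(-1)^m(n-1)^{m+1}D^m(x)$; since this series converges by hypothesis its general term tends to $0$, so the remainder tends to $0$ for every $x\in\g$. Because $R_N$ and $P$ commute (both being polynomials in $D$), for a fixed vector $Px$ one has $R_N(Px)=P(R_Nx)=x+(-1)^{N}(n-1)^{N+1}D^{N+1}x\to x$, while the left side converges to $R(Px)$; this already yields $RP=\Id$ with no extra assumption. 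Passing to the limit in $P(R_Nx)\to x$ (justified by the convergence, automatic when $D$ is locally nilpotent or $\g$ is finite dimensional, and in general as in Proposition 2.8 of \cite{gao-guo}) gives $P(Rx)=x$, i.e. $PR=\Id$. Thus $R=P^{-1}$.

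With $R=P^{-1}$ established, I would reuse the identity already obtained in the proof of the preceding proposition: for all $u_1,\cdots,u_n\in\g$,
\begin{eqnarray*}
P[u_1,\cdots,u_n]_{\g}=\sum_{i=1}^n[u_1,\cdots,Pu_i,\cdots,u_n]_{\g}-[u_1,\cdots,u_n]_{\g}.
\end{eqnarray*}
Setting $u_i=Rx_i$, so that $Pu_i=PRx_i=x_i$, applying $R=P^{-1}$ to both sides, and then using skew-symmetry to move each $x_i$ from the $i$-th slot to the final slot at the cost of the sign $(-1)^{n-i}$, one recovers precisely the Reynolds identity \eqref{n-Reynolds} for $R$. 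This last step is exactly the manipulation in the previous proposition, so no further computation is needed there.

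The only delicate point is the second paragraph, namely upgrading the pointwise convergence of the Neumann series to the operator identities $RP=PR=\Id$: the identity $RP=\Id$ comes for free from the commuting partial sums, whereas $PR=\Id$ is where the convergence hypothesis (and the cited result of \cite{gao-guo}) is genuinely used, since it requires interchanging $P$ with the limit defining $Rx$. Everything else is formal algebra inherited from the two preceding propositions.
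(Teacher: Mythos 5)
Your proposal is correct and follows exactly the route the paper intends: the paper omits the argument, saying only that it is ``similar to that of \cite[Proposition 2.8]{gao-guo}'', and your write-up is precisely the natural filling-in of that omitted proof --- identify $R$ with the Neumann-series inverse of $P=D+\frac{1}{n-1}\Id$ (your telescoping identity $PR_N=R_NP=\Id+(-1)^{N}(n-1)^{N+1}D^{N+1}$ and the sign bookkeeping both check out), then transfer the derivation identity $P[u_1,\cdots,u_n]_{\g}=\sum_{i=1}^n[u_1,\cdots,Pu_i,\cdots,u_n]_{\g}-[u_1,\cdots,u_n]_{\g}$ into the Reynolds identity by substituting $u_i=Rx_i$, exactly as in the two preceding propositions. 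Your flagging of the passage from pointwise convergence to $PR=\Id$ as the only genuinely analytic step is also apt, since that is where the convergence hypothesis (and the cited result of \cite{gao-guo}) is actually needed.
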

\begin{proof}
The proof is similar to that of \cite[Proposition 2.8]{gao-guo}, we omit the details.
\end{proof}

By Proposition \ref{invertible-derivation}, if $D$ is a nilpotent (more generally a locally nilpotent) derivation on $\g,$ then for all $x\in \g,$
the series $\sum^{\infty}_{n=0}D^n(x)$ is  a finite sum and convergent, then $R:=\sum^{\infty}_{m=0}(-1)^{m}(n-1)^{m+1}D^m$ is a Reynolds operator on $\g.$

\subsection{Cohomology of Reynolds operators on $n$-Lie algebras}
In this subsection, we construct a representation of the $n$-Lie algebra $(\g,[\cdot,\cdots,\cdot]_{R})$ on the vector space $\g,$ and define the cohomology of Reynolds operators on $n$-Lie algebras.
\begin{lem}\label{representation-R-G}
Let $R:\g\rightarrow\g$ be a Reynolds operator on an $n$-Lie algebra $(\g,[\cdot,\cdots,\cdot]_{\g}).$ Define
$\varrho_{R}:\wedge^{n-1}\g\rightarrow\gl(\g)$ by
 \begin{eqnarray}
 \varrho_{R}(x_1,\cdots,x_{n-1})(x)&=&[Rx_1,\cdots,Rx_{n-1},x]_{\g}+R[Rx_1,\cdots,Rx_{n-1},x]_{\g}\\
\nonumber &&-\sum^{n-1}_{i=1} R[Rx_1,\cdots,Rx_{i-1},x_{i},Rx_{i+1},\cdots,Rx_{n-1},x]_{\g},
  \end{eqnarray}
where $x_1,\cdots,x_{n-1},x\in \g.$ Then $(\g;\varrho_{R})$ is a representation of the $n$-Lie algebra $(\g,[\cdot,\cdots,\cdot]_{R})$.
\end{lem}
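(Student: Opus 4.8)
The plan is to verify directly that the map $\varrho_R$ satisfies the two defining axioms of a representation from Definition \ref{defi-representation}, namely \eqref{n-representation-1} and \eqref{n-representation-2}, where the ambient bracket is now the derived bracket $[\cdot,\cdots,\cdot]_R$ rather than $[\cdot,\cdots,\cdot]_\g$. The guiding principle is that $\varrho_R$ should be the representation coming from the semidirect product $\g\ltimes_{\varrho_R}\g$; in fact the cleanest route is to recognize $\varrho_R$ as the adjoint-type action induced by Theorem \ref{Reynolds-n-Lie algebra}. First I would observe, using part (a) of that theorem, that the formula for $\varrho_R(x_1,\cdots,x_{n-1})(x)$ can be rewritten in terms of $R$ applied to brackets; in particular the identity $[Rx_1,\cdots,Rx_{n-1},x]_\g = $ (terms of the form appearing in $\varrho_R$) should let me relate $\varrho_R$ to the structure maps already shown to form an $n$-Lie algebra.

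The concrete strategy I would follow is to exhibit $\varrho_R$ as a pullback of a known action along $R$. Note that the $\g$-valued cochain $\varrho_R(x_1,\cdots,x_{n-1})$ looks like $\ad^R_{x_1,\cdots,x_{n-1}}$ plus a correction, where $\ad^R$ denotes the adjoint representation of the $n$-Lie algebra $(\g,[\cdot,\cdots,\cdot]_R)$. So the first step is to compute $\ad^R_{x_1,\cdots,x_{n-1}}(x) = [x_1,\cdots,x_{n-1},x]_R$ from \eqref{induce-n-Lie} and compare it with the stated formula for $\varrho_R$, isolating the discrepancy. Since the adjoint map of any $n$-Lie algebra automatically gives a representation (this is recorded right after \eqref{eq:der} and is the content of the Filippov identity), if $\varrho_R$ agrees with $\ad^R$ the result is immediate; if it differs by a controlled term, I would check that the correction term itself is compatible with both axioms. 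I expect the correction to be exactly the piece that makes $\varrho_R$ land in the semidirect product $\g\ltimes_{\varrho_R}\g$ consistently with part (a), so that the Reynolds identity \eqref{n-Reynolds} is precisely what forces axiom \eqref{n-representation-2} to hold.

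For the fundamental equation \eqref{n-representation-1}, $[\varrho_R(\mathfrak X),\varrho_R(\mathfrak Y)]=\varrho_R(\mathfrak X\circ_R\mathfrak Y)$ where $\circ_R$ is built from $[\cdot,\cdots,\cdot]_R$, the verification reduces to repeatedly expanding both $\varrho_R$ factors via its definition, applying the Filippov identity \eqref{FI-Identity} for the original bracket $[\cdot,\cdots,\cdot]_\g$, and then collapsing the result using the Reynolds identity \eqref{n-Reynolds} together with part (a) of Theorem \ref{Reynolds-n-Lie algebra} to convert compositions $R[\cdots]_\g$ into $[\cdots]_R$ brackets. The same mechanism handles \eqref{n-representation-2}.

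The main obstacle will be the bookkeeping in the composite terms $R[Rx_1,\cdots,x_i,\cdots,x]_\g$ when two copies of $\varrho_R$ are nested, because each $\varrho_R$ already carries three groups of terms (the bare bracket, the $R$-corrected bracket, and the sum over insertions), so the product $\varrho_R(\mathfrak X)\varrho_R(\mathfrak Y)$ produces on the order of nine families of terms, and after antisymmetrization in $\mathfrak X,\mathfrak Y$ one must see a large cancellation. The key technical lever that makes these cancellations work is the Reynolds identity in the form of part (a), $R([x_1,\cdots,x_n]_R)=[Rx_1,\cdots,Rx_n]_\g$, which lets me trade an outer $R$ against a bracket of $R$-images; I would apply it systematically to normalize every term into either a pure $[\cdot,\cdots,\cdot]_\g$-bracket of $R$-images or a single outer $R$, at which point the Filippov identity for $[\cdot,\cdots,\cdot]_\g$ finishes the job. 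I would organize the computation by grouping terms according to how many factors of $R$ they carry, since the Reynolds identity preserves this grading up to the correction term, and check the cancellation grade by grade.
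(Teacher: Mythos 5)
Your operative argument in the third paragraph --- expand the two axioms \eqref{n-representation-1} and \eqref{n-representation-2} for $\varrho_R$ relative to $[\cdot,\cdots,\cdot]_R$, apply the Filippov identity \eqref{FI-Identity} for the original bracket, and collapse the mixed terms using the Reynolds identity \eqref{n-Reynolds} together with part (a) of Theorem \ref{Reynolds-n-Lie algebra} --- is exactly the paper's proof, which is precisely this brute-force verification ending in two long displayed cancellations.

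However, the shortcut you advertise as ``the cleanest route'' is a dead end, and you should strike it. First, $\varrho_R$ is \emph{not} the adjoint representation of $(\g,[\cdot,\cdots,\cdot]_R)$: comparing with \eqref{induce-n-Lie}, the adjoint action $\ad^R_{x_1,\cdots,x_{n-1}}(x)=[x_1,\cdots,x_{n-1},x]_R$ places $Rx$ inside undecorated $[\cdot,\cdots,\cdot]_{\g}$-brackets, whereas $\varrho_R$ never applies $R$ to the module argument $x$ and instead carries outer $R$'s. A concrete counterexample already at $n=2$: on the Lie algebra with basis $e_1,e_2$ and $[e_1,e_2]_{\g}=e_2$, the map $Re_1=e_1$, $Re_2=0$ is a Reynolds operator, and $\varrho_R(e_2)(e_1)=[Re_2,e_1]_{\g}+R[Re_2,e_1]_{\g}-R[e_2,e_1]_{\g}=R(e_2)=0$, while $[e_2,e_1]_R=[e_2,Re_1]_{\g}=-e_2$. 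Second, your fallback --- ``check that the correction term itself is compatible with both axioms'' --- is not a valid reduction even once the correction is identified: the axioms are quadratic in $\rho$ (through $[\rho(\frkX),\rho(\frkY)]$ and the compositions in \eqref{n-representation-2}), so cross-terms between $\ad^R$ and the correction appear and the two pieces cannot be verified separately. Third, your ``normalization'' heuristic does not terminate as stated: rewriting a pure bracket of $R$-images inside an outer $R$ via \eqref{n-Reynolds} reintroduces doubly nested terms of the form $R[\cdots,R[\cdots]_{\g},\cdots]_{\g}$, and in the actual computation such terms are not normalized away but must be matched and cancelled in pairs (this is visible in the paper's expansion). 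None of this sinks the proposal, since the direct expansion you describe is self-contained and is the paper's argument; but as written, the first two paragraphs would fail before the third one does the work.
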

\begin{proof}
By \eqref{FI-Identity}, for $x_1,\cdots,x_{n-1},y_1,\cdots,y_{n-1} \in \g,$ we have
{\footnotesize
\begin{eqnarray*}
&&[\varrho_{R}(x_1,\cdots,x_{n-1}),\varrho_{R}(y_1,\cdots,y_{n-1})](x)-\varrho_{R}\Big(\sum\limits_{i=1}^{n-1}(y_1,\cdots,y_{i-1},[x_1,\cdots,x_{n-1},y_i]_{R},y_{i+1},\cdots,y_{n-1})\Big)(x)\\
&=&\varrho_{R}(x_1,\cdots,x_{n-1})\Big([Ry_1,\cdots,Ry_{n-1},x]_{\g}-\sum^{n-1}_{i=1} R[Ry_1,\cdots,y_{i},\cdots,Ry_{n-1},x]_{\g}+R[Ry_1,\cdots,Ry_{n-1},x]_{\g}\Big)\\
&&-\varrho_{R}(y_1,\cdots,y_{n-1})\Big([Rx_1,\cdots,Rx_{n-1},x]_{\g}-\sum^{n-1}_{i=1} R[Rx_1,\cdots,x_{i},\cdots,Rx_{n-1},x]_{\g}+R[Rx_1,\cdots,Rx_{n-1},x]_{\g}\Big)\\
&&-\varrho_{R}\Big(\sum\limits_{i=1}^{n-1}(y_1,\cdots,y_{i-1},[x_1,\cdots,x_{n-1},y_i]_{R},y_{i+1},\cdots,y_{n-1})\Big)(x)\\
&=&[Rx_1,\cdots,Rx_{n-1},[Ry_1,\cdots,Ry_{n-1},x]_{\g}]_{\g}-\sum^{n-1}_{i=1}(-1)^{n-i} R[Rx_1,\cdots,\widehat{Rx_{i}},\cdots,Rx_{n-1},[Ry_1,\cdots,Ry_{n-1},x]_{\g},x_{i}]_{\g}\\
&&+R[Rx_1,\cdots,Rx_{n-1},[Ry_1,\cdots,Ry_{n-1},x]_{\g}]_{\g}-\sum^{n-1}_{i=1}(-1)^{n-i} [Rx_1,\cdots,Rx_{n-1},R[Ry_1,\cdots,\widehat{Ry_{i}},\cdots,Ry_{n-1},x,y_{i}]_{\g}]_{\g}\\
&&+\sum^{n-1}_{j=1}(-1)^{n-j}\sum^{n-1}_{i=1}(-1)^{n-i} R[Rx_1,\cdots,\widehat{Rx_{j}},\cdots,Rx_{n-1}, R[Ry_1,\cdots,\widehat{Ry_{i}},\cdots,Ry_{n-1},x,y_{i}]_{\g},x_{j}]_{\g}\\
&&-\sum^{n-1}_{i=1}(-1)^{n-i}R[Rx_1,\cdots,Rx_{n-1},R[Ry_1,\cdots,\widehat{Ry_{i}},\cdots,Ry_{n-1},x,y_{i}]_{\g}]_{\g}+[Rx_1,\cdots,Rx_{n-1},R[Ry_1,\cdots,Ry_{n-1},x]_{\g}]_{\g}\\
&&-\sum^{n-1}_{i=1}(-1)^{n-i} R[Rx_1,\cdots,\widehat{Rx_i},\cdots,Rx_{n-1},R[Ry_1,\cdots,Ry_{n-1},x]_{\g},x_{i}]_{\g}+R[Rx_1,\cdots,Rx_{n-1},R[Ry_1,\cdots,Ry_{n-1},x]_{\g}]_{\g}\\
&&-[Ry_1,\cdots,Ry_{n-1},[Rx_1,\cdots,Rx_{n-1},x]_{\g}]_{\g}+\sum^{n-1}_{i=1}(-1)^{n-i} R[Ry_1,\cdots,\widehat{Ry_{i}},\cdots,Ry_{n-1},[Rx_1,\cdots,Rx_{n-1},x]_{\g},y_{i}]_{\g}\\
&&-R[Ry_1,\cdots,Ry_{n-1},[Rx_1,\cdots,Rx_{n-1},x]_{\g}]_{\g}+\sum^{n-1}_{i=1}(-1)^{n-i}[Ry_1,\cdots,Ry_{n-1},R[Rx_1,\cdots,\widehat{Rx_{i}},\cdots,Rx_{n-1},x,x_{i}]_{\g}]_{\g}\\
&&-\sum^{n-1}_{j=1}(-1)^{n-j}\sum^{n-1}_{i=1}(-1)^{n-i} R[Ry_1,\cdots,\widehat{Ry_{j}},\cdots,Ry_{n-1},R[Rx_1,\cdots,\widehat{Rx_{i}},\cdots,Rx_{n-1},x,x_{i}]_{\g},y_{j}]_{\g}\\
 &&+\sum^{n-1}_{i=1}(-1)^{n-i}R[Ry_1,\cdots,Ry_{n-1},R[Rx_1,\cdots,\widehat{Rx_{i}},\cdots,Rx_{n-1},x,x_{i}]_{\g}]_{\g}-[Ry_1,\cdots,Ry_{n-1},R[Rx_1,\cdots,Rx_{n-1},x]_{\g}]_{\g}\\
&&+\sum^{n-1}_{i=1}(-1)^{n-i}R[Ry_1,\cdots,\widehat{Ry_{i}},\cdots,Ry_{n-1},R[Rx_1,\cdots,Rx_{n-1},x]_{\g},y_{i}]_{\g}-R[Ry_1,\cdots,Ry_{n-1},R[Rx_1,\cdots,Rx_{n-1},x]_{\g}]_{\g}\\
&&-\sum^{n-1}_{i=1}[Ry_1,\cdots,Ry_{i-1},[Rx_1,\cdots,Rx_{n-1},Ry_i]_{\g},Ry_{i+1},\cdots,Ry_{n-1},x]_{\g}\\
&&+\sum^{n-1}_{i=1}\sum^{n-1}_{j=1,j\neq i}(-1)^{n-j}R[Ry_1,\cdots,\widehat{Ry_{j}},\cdots,[Rx_1,\cdots,Rx_{n-1},Ry_i]_{\g},\cdots,Ry_{n-1},x,y_{j}]_{\g}\\
&&+\sum^{n-1}_{i=1}\sum^{n-1}_{j=1}R[Ry_1,\cdots,Ry_{i-1},[Rx_1,\cdots,x_j,\cdots,Rx_{n-1},Ry_i]_{\g},Ry_{i+1},\cdots,Ry_{n-1},x]_{\g}\\
&&+\sum^{n-1}_{i=1}R[Ry_1,\cdots,Ry_{i-1},[Rx_1,\cdots,Rx_{n-1},y_i]_{\g},Ry_{i+1},\cdots,Ry_{n-1},x]_{\g}\\
&&-\sum^{n-1}_{i=1}R[Ry_1,\cdots,Ry_{i-1},[Rx_1,\cdots,Rx_{n-1},Ry_i]_{\g},Ry_{i+1},\cdots,Ry_{n-1},x]_{\g}\\
&&-\sum^{n-1}_{i=1}R[Ry_1,\cdots,Ry_{i-1},[Rx_1,\cdots,Rx_{n-1},Ry_i]_{\g},Ry_{i+1},\cdots,Ry_{n-1},x]_{\g}\\
&=&0,
\end{eqnarray*}}
and
{\footnotesize
\begin{eqnarray*}
&&\Big(\varrho_{R}(x_1,\cdots,x_{n-2},[y_{1},\cdots,y_{n}]_{R})-\sum_{i=1}^n(-1)^{n-i}\varrho_{R} (y_{1},\cdots,\widehat{y_{i}},\cdots y_{n})\varrho_{R} (x_{1},\cdots,x_{n-2},y_i)\Big)(x)\\
&=&[Rx_1,\cdots,Rx_{n-2},[Ry_{1},\cdots,Ry_{n}]_{\g},x]_{\g}-\sum^{n-2}_{i=1}R[Rx_1,\cdots,x_i,\cdots,Rx_{n-2},[Ry_{1},\cdots,Ry_{n}]_{\g},x]_{\g}\\
&&-R[Rx_1,\cdots,Rx_{n-2},[y_{1},\cdots,y_{n}]_{R},x]+R[Rx_1,\cdots,Rx_{n-2},[Ry_{1},\cdots,Ry_{n}]_{\g},x]_{\g}\\
&&-\sum_{i=1}^n(-1)^{n-i}[Ry_{1},\cdots,\widehat{Ry_i},\cdots,Ry_{n},[Rx_1,\cdots,Rx_{n-2},Ry_i,x]_{\g}]_{\g}\\
&&+\sum_{i=1}^n(-1)^{n-i}\sum_{j=1,j\neq i}^nR[Ry_1,\cdots,\widehat{y_i},\cdots,y_j,\cdots,Ry_n,[Rx_1,\cdots,Rx_{n-2},Ry_i,x]_{\g}]_{\g}\\
&&-\sum_{i=1}^n(-1)^{n-i}R[Ry_{1},\cdots,\widehat{Ry_i},\cdots,Ry_{n},[Rx_1,\cdots,Rx_{n-2},Ry_i,x]_{\g}]_{\g}\\
&&+\sum_{i=1}^n(-1)^{n-i}[Ry_{1},\cdots,\widehat{Ry_i},\cdots,Ry_{n},\sum_{k=1}^{n-2}[Rx_1,\cdots,x_k,\cdots,Rx_{n-2},Ry_i,x]_{\g}]_{\g}\\
&&-\sum_{i=1}^n(-1)^{n-i}\sum_{j=1,j\neq i}^nR[Ry_1,\cdots,\widehat{y_i},\cdots,y_j,\cdots,Ry_n,\sum_{k=1}^{n-2}[Rx_1,\cdots,x_k,\cdots,Rx_{n-2},Ry_i,x]_{\g}]_{\g}\\
&&+\sum_{i=1}^n(-1)^{n-i}R[Ry_{1},\cdots,\widehat{Ry_i},\cdots,Ry_{n},\sum_{k=1}^{n-2}[Rx_1,\cdots,x_k,\cdots,Rx_{n-2},Ry_i,x]_{\g}]_{\g}\\
&&+\sum_{i=1}^n(-1)^{n-i}[Ry_{1},\cdots,\widehat{Ry_i},\cdots,Ry_{n},[Rx_1,\cdots,Rx_{n-2},y_i,x]_{\g}]_{\g}\\
&&-\sum_{i=1}^n(-1)^{n-i}\sum_{j=1,j\neq i}^nR[Ry_1,\cdots,\widehat{y_i},\cdots,y_j,\cdots,Ry_n,[Rx_1,\cdots,Rx_{n-2},y_i,x]_{\g}]_{\g}\\
&&+\sum_{i=1}^n(-1)^{n-i}R[Ry_{1},\cdots,\widehat{Ry_i},\cdots,Ry_{n},[Rx_1,\cdots,Rx_{n-2},y_i,x]_{\g}]_{\g}\\
&&-\sum_{i=1}^n(-1)^{n-i}[Ry_{1},\cdots,\widehat{Ry_i},\cdots,Ry_{n},R[Rx_1,\cdots,Rx_{n-2},Ry_i,x]_{\g}]_{\g}\\
&&+\sum_{i=1}^n(-1)^{n-i}\sum_{j=1,j\neq i}^nR[Ry_1,\cdots,\widehat{y_i},\cdots,y_j,\cdots,Ry_n,R[Rx_1,\cdots,Rx_{n-2},Ry_i,x]_{\g}]_{\g}\\
&&-\sum_{i=1}^n(-1)^{n-i}R[Ry_{1},\cdots,\widehat{Ry_i},\cdots,Ry_{n},R[Rx_1,\cdots,Rx_{n-2},Ry_i,x]_{\g}]_{\g}\\
&=&0.
\end{eqnarray*}}
Therefore, we deduce that  $(\g;\varrho_{R})$ is a representation of the $n$-Lie algebra $(\g,[\cdot,\cdots,\cdot]_{R})$.
\end{proof}
Let ${\rm d_{R}}:C_{\Ni}^{m}(\g;\g)\rightarrow C_{\Ni}^{m+1}(\g;\g),$ $(m\geq 1)$ be the corresponding coboundary operator of the $n$-Lie algebra $(\g,[\cdot,\cdots,\cdot]_{R})$ with coefficients in the representation $(\g;\varrho_{R}).$ More precisely, ${\rm d_{R}}:C_{\Ni}^{m}(\g;\g)\rightarrow C_{\Ni}^{m+1}(\g;\g)$ $(m\geq 1)$ is given by
\begin{eqnarray*}
&&({\rm d_{R}}f)(\mathfrak{X}_1,\cdots,\mathfrak{X}_m,x_{m+1})\\
&=&\sum_{1\leq j<k\leq m}(-1)^{j}f(\mathfrak{X}_1,\cdots,\widehat{\mathfrak{X}_j},\cdots,\mathfrak{X}_{k-1},\mathfrak{X}_{j}\circ\mathfrak{X}_{k},\mathfrak{X}_{k+1},\cdots,\mathfrak{X}_m,x_{m+1})\\
&&+\sum_{j=1}^{m}(-1)^{j}f(\mathfrak{X}_1,\cdots,\widehat{\mathfrak{X}_j},\cdots,\mathfrak{X}_{m},[\mathfrak{X}_{j},x_{m+1}]_{R})\\
&&+\sum_{j=1}^{m}(-1)^{j+1}\varrho_{R}(\mathfrak{X}_{j})f(\mathfrak{X}_1,\cdots,\widehat{\mathfrak{X}_j},\cdots,\mathfrak{X}_m,x_{m+1})\\
&&+\sum_{i=1}^{n-1}(-1)^{n+m-i+1}\varrho_{R}(x^1_{m},\cdots,\widehat{x^i_{m}},\cdots,x^{n-1}_{m},x_{m+1})f(\mathfrak{X}_1,\cdots,\mathfrak{X}_{m-1},x^i_m),
\end{eqnarray*}
for all $\mathfrak{X}_i=x^1_i\wedge \cdots \wedge x^{n-1}_i\in \wedge^{n-1}\g, i=1,2,\cdots,m$ and $x_{m+1}\in \g.$

It is obvious that $f\in C_{\Ni}^{1}(\g;\g)$ is closed if and only if
\begin{eqnarray*}
&&f(\sum_{i=1}^{n}(-1)^{n-i}[Rx_1,\cdots,\widehat{Rx_{i}},\cdots,Rx_n,x_i]_{\g})\\
&=&f[Rx_1,\cdots,Rx_n]_{\g}+\sum_{i=1}^{n}(-1)^{n-i}[Rx_1,\cdots,\widehat{Rx_i},\cdots,Rx_n,f(x_i)]_{\g}\\
&&-R\Big(\sum_{i=1}^{n}\sum_{j=1,j\neq i}^{n}(-1)^{n-j}[Rx_1,\cdots,\widehat{Rx_j},\cdots,Rx_{i-1},f(x_i),Rx_{i+1},\cdots,Rx_n,x_j]_{\g}\Big)\\
&&+\sum_{i=1}^{n}(-1)^{n-i}R[Rx_1,\cdots,\widehat{Rx_i},\cdots,Rx_n,f(x_i)]_{\g},
\end{eqnarray*}
where $x_1,\cdots,x_n\in \g.$

For any $\mathfrak{X}\in \wedge^{n-1}\g, $ we define $\delta_{R}(\mathfrak{X}):\g\rightarrow\g$ by
\begin{eqnarray}\label{delta-define}
~\delta_{R}(\mathfrak{X})x=R[\mathfrak{X},x]_{\g}-[\mathfrak{X},Rx]_{\g}-R[\mathfrak{X},Rx]_{\g}, \quad \forall x\in \g.
\end{eqnarray}
\begin{pro}
Let $R$ be a Reynolds operator on an $n$-Lie algebra $(\g,[\cdot,\cdots,\cdot]_{\g}).$
Then $\delta_{R}(\mathfrak{X})$ is a $1$-cocycle on the $n$-Lie algebra $(\g,[\cdot,\cdots,\cdot]_{R})$ with coefficients in $(\g;\varrho_{R}).$
\end{pro}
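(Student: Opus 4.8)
The plan is to verify directly that $\mathrm{d}_{R}\big(\delta_{R}(\mathfrak{X})\big)=0$. For a fixed $\mathfrak{X}=a_1\wedge\cdots\wedge a_{n-1}\in\wedge^{n-1}\g$, the map $\delta_{R}(\mathfrak{X})\colon\g\to\g$ is an element of $C^1_{\Ni}(\g;\g)=\Hom(\g,\g)$ (note that this fixed $\mathfrak{X}$ is the \emph{parameter} of the cochain and must be kept notationally separate from the arguments at which the coboundary is tested). Writing out the coboundary operator $\mathrm{d}_{R}$ at $m=1$, the closedness of $\delta_{R}(\mathfrak{X})$ is equivalent to the single identity
\begin{eqnarray*}
\delta_{R}(\mathfrak{X})\big([y_1,\cdots,y_{n-1},z]_{R}\big)
&=&\varrho_{R}(y_1,\cdots,y_{n-1})\delta_{R}(\mathfrak{X})(z)\\
&&+\sum_{i=1}^{n-1}(-1)^{n-i}\varrho_{R}(y_1,\cdots,\widehat{y_i},\cdots,y_{n-1},z)\delta_{R}(\mathfrak{X})(y_i)
\end{eqnarray*}
for all $y_1,\cdots,y_{n-1},z\in\g$; this is exactly the closedness condition displayed just before the statement, specialized to $f=\delta_{R}(\mathfrak{X})$. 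Thus the whole proof reduces to establishing this one equation.

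To do so, I would substitute the defining formula \eqref{delta-define}, the bracket \eqref{induce-n-Lie}, and the expression for $\varrho_{R}$ from Lemma \ref{representation-R-G} into both sides, rewriting everything in terms of the original bracket $[\cdot,\cdots,\cdot]_{\g}$ and the operator $R$ alone. On the left-hand side I would first apply Theorem \ref{Reynolds-n-Lie algebra}(a), that is $R[y_1,\cdots,y_{n-1},z]_{R}=[Ry_1,\cdots,Ry_{n-1},Rz]_{\g}$, to remove the inner $R$-bracket. The two tools that drive the simplification are the Reynolds identity \eqref{n-Reynolds} --- used repeatedly in the form that lets one trade an expression $[Ry_1,\cdots,Ry_{n-1},R(\,\cdot\,)]_{\g}$ for $R$ applied to brackets, thereby commuting the outer $R$'s through --- and the Filippov identity \eqref{FI-Identity}, used to expand the nested $[\cdot,\cdots,\cdot]_{\g}$-brackets and pair off matching terms.

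A good deal of this is already available: the terms produced here are of the same three types (brackets of $R$-images, together with single- and double-$R$ corrections) that appear in the verification of the two representation axioms for $\varrho_{R}$ in the proof of Lemma \ref{representation-R-G}, so I would organize the computation to recycle those groupings and isolate only the genuinely new terms carrying the parameter $\mathfrak{X}$. I expect the sole real obstacle to be combinatorial: each term carries a sign $(-1)^{n-i}$ together with one or two omitted indices, and the cancellation closes only after \eqref{FI-Identity} is invoked in several distinct slot-configurations, so the difficulty lies entirely in tracking these signs and omitted-index patterns carefully --- exactly as in the long display establishing Lemma \ref{representation-R-G} --- rather than in any new structural ingredient beyond \eqref{n-Reynolds}, \eqref{FI-Identity}, and Theorem \ref{Reynolds-n-Lie algebra}(a).
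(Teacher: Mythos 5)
Your proposal is correct and follows essentially the same route as the paper, whose entire proof reads ``By direct calculation we can get the conclusion'': you correctly reduce the claim to the single $m=1$ closedness identity coming from the coboundary formula (your displayed equation is indeed equivalent, after expanding $\varrho_{R}$ and \eqref{induce-n-Lie}, to the closedness condition displayed before the statement), and the ingredients you name---\eqref{delta-define}, Lemma \ref{representation-R-G}, Theorem \ref{Reynolds-n-Lie algebra}(a), the Reynolds identity \eqref{n-Reynolds} and the Filippov identity \eqref{FI-Identity}---are exactly what that verification requires. Your outline is in fact more explicit than the paper's one-line proof, so beyond noting that the approaches agree there is nothing further to compare.
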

\begin{proof}
By direct calculation we can get the conclusion.
\end{proof}

\begin{thm}\label{cohomology-Reynolds}
Let $R$ be a Reynolds operator on an $n$-Lie algebra $(\g,[\cdot,\cdots,\cdot]_{\g})$ with respect to a representation $(\g;\varrho_R).$ Define the set of $m$-cochains by
\begin{eqnarray*}
C_{R}^{m}(\g;\g)=
\left\{\begin{array}{rcl}
{}C_{\Ni}^{m}(\g;\g),\quad m\geq 1,\\
{}\wedge^{n-1}\g,\quad m=0.
\end{array}\right.
\end{eqnarray*}
Define ${\rm d}:C_{R}^{m}(\g;\g)\rightarrow C_{R}^{m+1}(\g;\g)$ by
\begin{eqnarray*}
{\rm d}=
\left\{\begin{array}{rcl}
{}{\rm d_{R}},\quad m\geq 1,\\
{}\delta_{R},\quad m=0,
\end{array}\right.
\end{eqnarray*}
where $\delta_{R}$ is given by \eqref{delta-define}.
Thus,  $(\oplus^{+\infty}_{m=0} C_{R}^{m}(\g;\g),{\rm d})$  is a cochain complex.
\end{thm}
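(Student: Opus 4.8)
The plan is to verify the single defining property of a cochain complex, namely that $\mathrm{d}\circ\mathrm{d}=0$ in every degree, and to observe that this claim splits cleanly into two pieces, each of which has already been established in the preceding Lemma and Proposition. Since $\mathrm{d}$ equals $\mathrm{d_{R}}$ in every positive degree and equals $\delta_{R}$ only in degree zero, the condition $\mathrm{d}\circ\mathrm{d}=0$ reduces to exactly two assertions: that $\mathrm{d_{R}}\circ\mathrm{d_{R}}=0$ on $C_{\Ni}^{m}(\g;\g)$ for every $m\geq 1$, and that $\mathrm{d_{R}}\circ\delta_{R}=0$ on $C_{R}^{0}(\g;\g)=\wedge^{n-1}\g$.

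For the first assertion I would invoke Lemma \ref{representation-R-G}, which guarantees that $(\g;\varrho_{R})$ is a genuine representation of the $n$-Lie algebra $(\g,[\cdot,\cdots,\cdot]_{R})$. By construction, $\mathrm{d_{R}}$ is precisely the coboundary operator $\partial_{\varrho_{R}}$ attached to this representation, i.e. the instance of the general operator $\partial_{\rho}$ obtained by taking $\rho=\varrho_{R}$. It was already recorded (citing \cite{Casas,Takhtajan1}) that $\partial_{\rho}\circ\partial_{\rho}=0$ for an arbitrary representation $\rho$, so applying this with $\rho=\varrho_{R}$ immediately yields $\mathrm{d_{R}}\circ\mathrm{d_{R}}=0$ for all $m\geq 1$. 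For the second assertion, I would note that $\delta_{R}(\mathfrak{X})$ lands in $C_{\Ni}^{1}(\g;\g)$, so the identity $\mathrm{d_{R}}(\delta_{R}(\mathfrak{X}))=0$ is exactly the statement that $\delta_{R}(\mathfrak{X})$ is a $1$-cocycle of $(\g,[\cdot,\cdots,\cdot]_{R})$ with coefficients in $(\g;\varrho_{R})$, which is the content of the preceding Proposition. Combining the two cases gives $\mathrm{d}\circ\mathrm{d}=0$ throughout, establishing that $(\oplus^{+\infty}_{m=0} C_{R}^{m}(\g;\g),\mathrm{d})$ is a cochain complex.

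The genuine computational effort, namely verifying the representation axioms \eqref{n-representation-1}--\eqref{n-representation-2} for $\varrho_{R}$ and the $1$-cocycle condition for $\delta_{R}$, has already been discharged in Lemma \ref{representation-R-G} and the Proposition immediately above. The main point to be careful about is therefore not any fresh calculation but the bookkeeping at the degree-zero to degree-one junction: one must confirm that the two differentials $\delta_{R}$ and $\mathrm{d_{R}}$ glue together exactly as the $1$-cocycle property demands, so that no separate verification of $\mathrm{d}\circ\mathrm{d}=0$ across the $m=0$ boundary is needed beyond the already-proven closedness of $\delta_{R}(\mathfrak{X})$.
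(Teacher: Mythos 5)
Your proposal is correct and matches the paper's intended argument exactly: the paper gives no separate proof for this theorem precisely because, as you observe, $\mathrm{d_{R}}\circ\mathrm{d_{R}}=0$ on $C_{\Ni}^{m}(\g;\g)$ for $m\geq 1$ follows from Lemma \ref{representation-R-G} combined with the cited general fact $\partial_{\rho}\circ\partial_{\rho}=0$, while $\mathrm{d_{R}}\circ\delta_{R}=0$ at the degree-zero junction is exactly the $1$-cocycle statement of the preceding Proposition. No further verification is needed, and your bookkeeping at the $m=0$ boundary is the same as the paper's.
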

\begin{defi}
The cohomology of the cochain complex $(\oplus^{+\infty}_{m=0} C_{R}^{m}(\g;\g),{\rm d})$ is defined to be the cohomology of the Reynolds operator $R.$
\end{defi}

Denote the set of $m$-cocycles by $\huaZ_{R}^m(\g;\g),$ the set of $m$-coboundaries by $\huaB_{R}^m(\g;\g)$ and $m$-th cohomology group for the Reynolds operator $R$ by
\begin{eqnarray}
\huaH_{R}^m(\g;\g)=\huaZ_{R}^m(\g;\g)/\huaB_{R}^m(\g;\g),\quad m\geq0.
\end{eqnarray}

We can use these cohomology groups to characterize infinitesimal deformations of Reynolds operators.

\subsection{Infinitesimal deformations of Reynolds operators on $n$-Lie algebras}
Next we use the cohomology theory to characterize the infinitesimal deformations of Reynolds operators.

Let $(\g,[\cdot,\cdots,\cdot]_{\g})$ be an $n$-Lie algebra over $\mathbb R$ and $\mathbb R[t]$ be the polynomial ring in one variable $t.$
Then $\mathbb R[t]/(t^2)\otimes_{\mathbb R}\g$ is a $\mathbb R[t]/(t^2)$-module, moreover, $\mathbb R[t]/(t^2)\otimes_{\mathbb R}\g$ is an $n$-Lie algebra over $\mathbb R[t]/(t^2)$, where the $n$-Lie algebra structure is defined by
\begin{eqnarray*}
[f_1(t)\otimes_{\mathbb R} x_1,\cdots,f_n(t)\otimes_{\mathbb R} x_n]_{t}= f_1(t)\cdots f_n(t)\otimes_{\mathbb R}[x_1,\cdots,x_n]_{\g},
\end{eqnarray*}
for $f_{i}(t)\in \mathbb R[t]/(t^2),1\leq i\leq n,x_1,\cdots,x_n\in \g.$

In the sequel, all the vector spaces are finite dimensional vector spaces over $\mathbb R$ and we denote $f(t)\otimes_{\mathbb R} x$ by $f(t)x,$ where $f(t)\in \mathbb R[t]/(t^2).$

\begin{defi}
Let $R:\g\rightarrow \g$ be a Reynolds operator on an $n$-Lie algebra $(\g,[\cdot,\cdots,\cdot]_{\g})$ and
$\frkR:\g\rightarrow\g$ a linear map. If $R_t=R+t\frkR$ is still a Reynolds operator on the $n$-Lie algebra
$(\mathbb R[t]/(t^2)\otimes_{\mathbb R}\g,[\cdot,\cdots,\cdot]_{t})$, we say that $\frkR$ generates an {\bf infinitesimal  deformation} of the Reynolds operator $R$.
\end{defi}
Since $R_t=R+t\frkR$ is a Reynolds operator on the $n$-Lie algebra $(\mathbb R[t]/(t^2)\otimes_{\mathbb R}\g,[\cdot,\cdots,\cdot]_{t})$,
for any $x_1,\cdots,x_n\in \g,$
we have
\begin{eqnarray*}
 [R_t(x_1),\cdots,R_t(x_n)]_{t}=\sum^n_{i=1}(-1)^{n-i} R_t[R_t(x_1),\cdots,\widehat{R_t( x_i)},\cdots,R_t(x_n),x_i]_{t}-R_t[R_t(x_1),\cdots,R_t(x_n)]_{t},
\end{eqnarray*}
which implies that
\begin{eqnarray}
\label{equivalent-1}\qquad&&[\frkR x_1,Rx_2,\cdots,Rx_n]_{\g}+\cdots+[Rx_1,Rx_2,\cdots,\frkR x_n]_{\g}\\
\nonumber &=&\sum_{i=1}^{n}\frkR[Rx_1,\cdots,x_i,\cdots,Rx_n]_{\g}-\frkR[Rx_1,\cdots,Rx_n]_{\g}-\sum_{i=1}^{n}R[Rx_1,\cdots,\frkR x_i,\cdots,Rx_n]_{\g}\\
\nonumber&&+\sum_{i=1}^{n}\sum_{j=1,j\neq i}^{n}R[Rx_1,\cdots,x_i,\cdots,Rx_{i-1},\frkR(x_j),Rx_{i+1},\cdots,Rx_n]_{\g}.
\end{eqnarray}

 Note that \eqref{equivalent-1} means that $\frkR$ is a $1$-cocycle of the $n$-Lie algebra $(\g,[\cdot,\cdots,\cdot]_{R})$ with coefficients in $\g$, that is, $\rm d{\frkR}=0.$

\begin{defi}
Let $R$ be a Reynolds operator on an $n$-Lie algebra $\g$. Two infinitesimal deformations $R^1_{t}= R+t\frkR_{1}$ and  $R^2_{t}=R+t\frkR_{2}$ are said to be {\bf equivalent} if there exist $\mathfrak{X}\in\wedge^{n-1}\g$ , such that the pair
$$(\phi=\Id_{\g}+t \ad_{\mathfrak{X}},\psi=\Id_{\g}+t\ad_{\mathfrak{X}}-t[\mathfrak{X},R])$$
is a homomorphism from $R^1_{t}$ to $R^2_{t}$, where $[\mathfrak{X},R]:\g\rightarrow\g$ is defined by
\begin{eqnarray}
[\mathfrak{X},R](x)=[\mathfrak{X},Rx]_{\g},\quad \forall x\in \g.
\end{eqnarray}
 In particular,
an infinitesimal deformation $R_{t}=R+t\frkR_{1}$ of a Reynolds operator $R$ is said to be {\bf trivial} if there exist $\mathfrak{X}\in \wedge^{n-1}\g$ such that $(\Id_{\g}+t\ad_{\mathfrak{X}},\Id_{\g}+t\ad_{\mathfrak{X}}-t[\mathfrak{X},R])$  is a homomorphism from $R_{t}$ to $R.$
\end{defi}
\begin{thm}
Let $R$ be a Reynolds operator on an $n$-Lie algebra $\g.$
If two infinitesimal deformations $R^1_{t}= R+t\frkR_{1}$ and  $R^2_{t}=R+t\frkR_{2}$ of $R$ are equivalent, then
$\frkR_{1}$ and $\frkR_{2}$ are in the same cohomology class.
\end{thm}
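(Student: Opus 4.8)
The plan is to exhibit the difference $\frkR_1-\frkR_2$ as the coboundary $\delta_R(\mathfrak{X})$ of the very element $\mathfrak{X}\in\wedge^{n-1}\g$ that witnesses the equivalence. Since $\mathfrak{X}\in\wedge^{n-1}\g=C_R^0(\g;\g)$ and ${\rm d}=\delta_R$ in degree $0$, this identifies $\frkR_1-\frkR_2={\rm d}\mathfrak{X}$ as a $1$-coboundary. As each $\frkR_i$ is already a $1$-cocycle by \eqref{equivalent-1}, i.e. ${\rm d}\frkR_i=0$, it will follow that $\frkR_1$ and $\frkR_2$ determine the same class in $\huaH_R^1(\g;\g)$.

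First I would unwind the hypothesis that $(\phi,\psi)=(\Id_\g+t\ad_{\mathfrak{X}},\ \Id_\g+t\ad_{\mathfrak{X}}-t[\mathfrak{X},R])$ is a homomorphism of Reynolds operators from $R^1_t$ to $R^2_t$. This consists of two requirements. That $\phi$ be an $n$-Lie algebra homomorphism over $\mathbb R[t]/(t^2)$ holds automatically: expanding $[\phi x_1,\cdots,\phi x_n]_t$ modulo $t^2$ and comparing with $\phi[x_1,\cdots,x_n]_t$, the two agree precisely because $\ad_{\mathfrak{X}}$ is a derivation of $\g$, as recorded right after \eqref{eq:der}. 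The substantive condition is the compatibility \eqref{condition-1}, namely $\phi\circ R^1_t=R^2_t\circ\psi$.

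The central computation is to expand both sides of $\phi\circ R^1_t=R^2_t\circ\psi$ modulo $t^2$ and extract the coefficient of $t$. On the left, $(\Id_\g+t\ad_{\mathfrak{X}})(R+t\frkR_1)$ contributes $R+t(\frkR_1+\ad_{\mathfrak{X}}\circ R)$; on the right, $(R+t\frkR_2)(\Id_\g+t\ad_{\mathfrak{X}}-t[\mathfrak{X},R])$ contributes $R+t(\frkR_2+R\circ\ad_{\mathfrak{X}}-R\circ[\mathfrak{X},R])$. Equating the linear terms gives $\frkR_1-\frkR_2=R\circ\ad_{\mathfrak{X}}-R\circ[\mathfrak{X},R]-\ad_{\mathfrak{X}}\circ R$, and evaluating on $x\in\g$ while using $\ad_{\mathfrak{X}}y=[\mathfrak{X},y]_{\g}$ and $[\mathfrak{X},R](x)=[\mathfrak{X},Rx]_{\g}$ yields
\begin{eqnarray*}
(\frkR_1-\frkR_2)(x)=R[\mathfrak{X},x]_{\g}-[\mathfrak{X},Rx]_{\g}-R[\mathfrak{X},Rx]_{\g}.
\end{eqnarray*}
The right-hand side is exactly $\delta_R(\mathfrak{X})x$ as defined in \eqref{delta-define}, so $\frkR_1-\frkR_2=\delta_R(\mathfrak{X})={\rm d}\mathfrak{X}$, which finishes the proof.

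The step I expect to demand the most care is matching this linear term with $\delta_R$ sign-for-sign: one must verify that the three operators $R\circ\ad_{\mathfrak{X}}$, $-R\circ[\mathfrak{X},R]$, and $-\ad_{\mathfrak{X}}\circ R$ reassemble into $R[\mathfrak{X},\cdot]_{\g}-[\mathfrak{X},R\cdot]_{\g}-R[\mathfrak{X},R\cdot]_{\g}$ with no stray signs. This is precisely where the particular form of $\psi$, and in particular its correction term $-t[\mathfrak{X},R]$, is essential; without it the $R[\mathfrak{X},Rx]_{\g}$ contribution would not appear and the difference would fail to be $\delta_R$-exact.
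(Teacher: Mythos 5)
Your proposal is correct and is essentially identical to the paper's proof: both expand the compatibility condition \eqref{condition-1}, $\phi\circ R^1_t=R^2_t\circ\psi$, to first order in $t$ and identify $\frkR_1(x)-\frkR_2(x)=R[\mathfrak{X},x]_{\g}-[\mathfrak{X},Rx]_{\g}-R[\mathfrak{X},Rx]_{\g}=({\rm d}\mathfrak{X})(x)$ via \eqref{delta-define}. Your expansion carries out explicitly the computation the paper only states, and your observation that $\phi$ is automatically an $n$-Lie algebra homomorphism (since $\ad_{\mathfrak{X}}$ is a derivation) is a correct, if unstated in the paper, side remark.
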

\begin{proof}
Let $(\phi,\psi)$ be two linear maps, where $\phi=\Id_{\g}+t\ad_{\mathfrak{X}}$ and $\psi=\Id_{\g}+t\ad_{\mathfrak{X}}-t[\mathfrak{X},R]$,
which give an equivalence between two infinitesimal deformations $R^1_{t}=R+t\frkR_{1}$ and  $R^2_{t}=R+t\frkR_{2}$ of the Reynolds operator $R$. By \eqref{condition-1}, we have
\begin{eqnarray*}
 \frkR_1(x)&=&\frkR_2(x)+R[\mathfrak{X},x]_{\g}-[\mathfrak{X},Rx]_{\g}-R[\mathfrak{X},Rx]_{\g}\\
          &=& \frkR_2(x)+({\rm d}\mathfrak{X})(x),\quad \forall x\in \g,
\end{eqnarray*}
which implies that $\frkR_1$ and $\frkR_2$ are in the same cohomology class.
\end{proof}

\section{NS-$n$-Lie algebras, Reynolds operators and Nijenhuis operators}\label{sec:GM}
In this section,  we introduce the notion of an NS-$n$-Lie algebra and show that  both Reynolds operators and Nijenhuis operators  on an $n$-Lie algebra   induce   NS-$n$-Lie algebras.

\subsection{NS-$n$-Lie algebras and Reynolds operators on $n$-Lie algebras }

First we introduce the notion of an NS-$n$-Lie algebra, which reduces to an NS-Lie algebra introduced in \cite{Das-1} when $n=2.$
\begin{defi}\label{defi-NS-n-Lie algebra}
Let $A$ be a vector space together with two multilinear maps $\{\cdot,\cdots,\cdot\}:\wedge^ {n-1}A\otimes A\rightarrow A$ and $[\cdot,\cdots,\cdot]:\wedge^{n}A\rightarrow A$. The tuple $(A,\{\cdot,\cdots,\cdot\},[\cdot,\cdots,\cdot])$ is called an
{\bf NS-$n$-Lie algebra} if the following identities hold:
\begin{eqnarray}
\label{NS-n-Lie-1}\{x_1,\cdots,x_{n-1},\{y_1,\cdots,y_{n-1},y_n\}\}&=&\{y_1,\cdots,y_{n-1},\{x_1,\cdots,x_{n-1},y_n\}\}\\
\nonumber&&+\sum\limits^{n-1}_{j=1}\big\{y_1,\cdots,y_{j-1},\Courant{x_1,\cdots,x_{n-1},y_j},y_{j+1},\cdots,y_n\big\}\\
\label{NS-n-Lie-2}\{\Courant{y_1,\cdots,y_n},x_1,\cdots,x_{n-1}\}&=&\sum\limits^{n}_{j=1}(-1)^{n-j}\{y_1,\cdots,\widehat{y_{j}},\cdots,y_n,\{y_{j},x_1,\cdots,x_{n-1}\}\},\\
\label{NS-n-Lie-3}\big[x_1,\cdots,x_{n-1},\Courant{y_1,\cdots,y_{n-1},y_n}\big]&=&\sum\limits^{n}_{j=1}(-1)^{n-j}\big[ y_1,\cdots,\widehat{y_j},\cdots,y_n,\Courant{x_1,\cdots,x_{n-1},y_j}\big]\\
\nonumber&&-\{x_1,\cdots,x_{n-1},[y_1,\cdots,y_n]\}\\
\nonumber&&+\sum\limits^{n}_{j=1}(-1)^{n-j}\{y_1,\cdots,\widehat{y_j},\cdots,y_n,[x_1,\cdots,x_{n-1},y_j]\},
\end{eqnarray}
where $x_i,y_{j}\in A, 1\leq i\leq n-1,1\leq j\leq n$ and $\Courant{\cdot,\cdots,\cdot}:\wedge^{n}\g\rightarrow\g$ is a multilinear map defined by
\begin{eqnarray}
\label{NS-n-Lie-4} \Courant{y_1,\cdots,y_n}=\sum\limits^{n}_{i=1}(-1)^{n-i}\{y_1,\cdots,\widehat{y_i},\cdots,y_n,y_i\}+[y_1,\cdots,y_n].
\end{eqnarray}
\end{defi}

\begin{rmk}
Let $(A,\{\cdot,\cdots,\cdot\},[\cdot,\cdots,\cdot])$  be an NS-$n$-Lie algebra. On the one hand, if $\{\cdot,\cdots,\cdot\}=0,$
we get that $(A,[\cdot,\cdots,\cdot])$ is an $n$-Lie algebra. On the other hand, if $[\cdot,\cdots,\cdot]=0,$ then $(A,\{\cdot,\cdots,\cdot\})$
is an $n$-pre-Lie algebra which was introduced in \cite{Liu-Ma}.
Thus, NS-$n$-Lie algebras are  generalizations of both $n$-Lie algebras and $n$-pre-Lie algebras. Moreover, an NS-$n$-Lie algebra reduces to an NS-Lie algebra introduced in \cite{Das-1} when $n=2.$
\end{rmk}

\begin{thm}
Let $(A,\{\cdot,\cdots,\cdot\},[\cdot,\cdots,\cdot])$ be an NS-$n$-Lie algebra. Then $(A, \Courant{\cdot,\cdots,\cdot})$
is an $n$-Lie algebra, which is called the {\bf sub-adjacent $n$-Lie algebra} of $(A,\{\cdot,\cdots,\cdot\},[\cdot,\cdots,\cdot])$, and denoted by $A^c$.
Moreover, define a skew-symmetric multilinear map $L:\otimes^{n-1} A\rightarrow \gl(A)$ by
$$ L(x_1,\cdots,x_{n-1})(x_n)=\{x_1,\cdots,x_{n-1},x_n\},\quad \forall~x_1,\cdots,x_{n-1}\in A^c,x_n\in A.$$
Then $(A;L)$ is a representation of the $n$-Lie algebra $A^c.$
\end{thm}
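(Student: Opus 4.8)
The claim has two parts: first that $(A,\Courant{\cdot,\cdots,\cdot})$ is an $n$-Lie algebra, and second that $L$ defines a representation of this sub-adjacent algebra on $A$. For the first part, skew-symmetry of $\Courant{\cdot,\cdots,\cdot}$ is immediate from \eqref{NS-n-Lie-4}, since the first summand is an alternating sum over the omitted slot and $[\cdot,\cdots,\cdot]$ is already skew-symmetric on $\wedge^n A$. The real content is the Filippov Identity \eqref{FI-Identity} for $\Courant{\cdot,\cdots,\cdot}$. The approach is to expand $\Courant{x_1,\cdots,x_{n-1},\Courant{y_1,\cdots,y_n}}$ using \eqref{NS-n-Lie-4} at both levels, and similarly expand the right-hand side $\sum_{i=1}^n\Courant{y_1,\cdots,\Courant{x_1,\cdots,x_{n-1},y_i},\cdots,y_n}$. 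Each of these produces a sum of terms that are either iterated $\{\cdot,\cdots,\cdot\}$-products, iterated $[\cdot,\cdots,\cdot]$-products, or mixed products. The idea is to sort all terms by type and match them using the three defining axioms: \eqref{NS-n-Lie-1} handles the purely left-symmetric $\{\cdot,\cdots,\{\cdots\}\}$ terms, \eqref{NS-n-Lie-2} handles terms where an inner bracket $\Courant{\cdot,\cdots,\cdot}$ sits in the last slot of an outer $\{\cdot,\cdots,\cdot\}$, and \eqref{NS-n-Lie-3} handles the mixed terms involving one $[\cdot,\cdots,\cdot]$.

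\textbf{Key steps for the representation.} Granting the sub-adjacent $n$-Lie algebra structure, I must verify the two representation axioms \eqref{n-representation-1} and \eqref{n-representation-2} of Definition \ref{defi-representation} for $L$. For \eqref{n-representation-1}, I would compute $[L(\mathfrak{X}),L(\mathfrak{Y})]$ applied to an element $x_n$, where $\mathfrak{X}=x_1\wedge\cdots\wedge x_{n-1}$ and $\mathfrak{Y}=y_1\wedge\cdots\wedge y_{n-1}$. Unwinding the commutator gives
\[
\{x_1,\cdots,x_{n-1},\{y_1,\cdots,y_{n-1},x_n\}\}-\{y_1,\cdots,y_{n-1},\{x_1,\cdots,x_{n-1},x_n\}\},
\]
which by the left-symmetry axiom \eqref{NS-n-Lie-1} equals $\sum_{j=1}^{n-1}\{y_1,\cdots,\Courant{x_1,\cdots,x_{n-1},y_j},\cdots,y_{n-1},x_n\}$; this is precisely $L(\mathfrak{X}\circ\mathfrak{Y})(x_n)$ once one checks that $\mathfrak{X}\circ\mathfrak{Y}$ uses exactly the sub-adjacent bracket $\Courant{\cdot,\cdots,\cdot}$ in its definition. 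For \eqref{n-representation-2}, the identity to establish is
\[
L(x_1,\cdots,x_{n-2},\Courant{y_1,\cdots,y_n})=\sum_{i=1}^n(-1)^{n-i}L(y_1,\cdots,\widehat{y_i},\cdots,y_n)L(x_1,\cdots,x_{n-2},y_i),
\]
and evaluating both sides on an arbitrary $x\in A$ reduces this exactly to axiom \eqref{NS-n-Lie-2} with the roles of the arguments read off correctly.

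\textbf{Main obstacle.} The genuinely laborious step is the Filippov Identity verification in the first part: after full expansion both sides carry on the order of $n^2$ terms of several distinct combinatorial types, and the bookkeeping of signs $(-1)^{n-i}$ together with the omitted-slot notation $\widehat{\cdot}$ is where errors creep in. The strategy to control this is to segregate terms by how many $[\cdot,\cdots,\cdot]$-factors they contain (zero, one, or the term coming from the outer bracket being a $[\cdot,\cdots,\cdot]$), and to apply exactly one of the three axioms within each class so that the classes cancel independently. In particular, the mixed-class cancellation is where axiom \eqref{NS-n-Lie-3} is used, and this is the axiom whose sign conventions must be tracked most carefully. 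By contrast, the representation part is short: it is essentially a direct translation of axioms \eqref{NS-n-Lie-1} and \eqref{NS-n-Lie-2} into the language of $L$, and poses no real difficulty once the first part is in hand.
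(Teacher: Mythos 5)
Your proposal is correct and follows essentially the same route as the paper: the paper likewise expands both sides of the Filippov Identity for $\Courant{\cdot,\cdots,\cdot}$ via \eqref{NS-n-Lie-4} and cancels the resulting groups of terms using \eqref{NS-n-Lie-1}, \eqref{NS-n-Lie-2} and \eqref{NS-n-Lie-3}, exactly the class-by-class bookkeeping you describe. The representation part is also handled identically: \eqref{n-representation-1} reduces to \eqref{NS-n-Lie-1} through the commutator computation with $\mathfrak{X}\circ\mathfrak{Y}$ built from the sub-adjacent bracket, and \eqref{n-representation-2} is a direct restatement of \eqref{NS-n-Lie-2}.
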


\begin{proof}
For $x_1,\cdots,x_{n-1},y_1,\cdots,y_{n}\in A,$ by \eqref{NS-n-Lie-1}, \eqref{NS-n-Lie-2} and \eqref{NS-n-Lie-3}, we have
\begin{eqnarray*}
&&\Courant{x_1,\cdots,x_{n-1},\Courant{y_1,\cdots,y_n}}-\sum\limits^{n}_{i=1}\Courant{y_1,\cdots,y_{i-1},\Courant{x_1,\cdots,x_{n-1},y_{i}},y_{i+1},\cdots,y_n}\\
&=&\sum\limits^{n-1}_{i=1}(-1)^{n-i}\{x_1,\cdots,\widehat{x_i},\cdots,x_{n-1},\Courant{y_1,\cdots,y_n},x_i\}+[x_1,\cdots,x_{n-1},\Courant{y_1,\cdots,y_n}]\\
&&+\{x_1,\cdots,x_{n-1},\sum\limits^{n}_{i=1}(-1)^{n-i}\{y_1,\cdots,\widehat{y_i},\cdots,y_n,y_i\}+[y_1,\cdots,y_n]\}\\
&&-\sum\limits^{n}_{i=1}(-1)^{n-i}\{y_1,\cdots,y_{i-1},\widehat{y_{i}},y_{i+1},\cdots,y_n,\sum\limits^{n-1}_{j=1}(-1)^{n-j}\{x_1,\cdots,\widehat{x_j},\cdots,x_{n-1},y_i,x_j\}\\
&&+\{x_1,\cdots,x_{n-1},y_i\}+[x_1,\cdots,x_{n-1},y_i]\}-\sum\limits^{n}_{i=1}(-1)^{n-i}[y_1,\cdots,\widehat{y_{i}},\cdots,y_n,\Courant{x_1,\cdots,x_{n-1},y_i}]\\
&&-\sum\limits^{n}_{i=1}\sum\limits^{n}_{k=1,k\neq i}(-1)^{n-k}\{y_1,\cdots,\widehat{y_k},\cdots,y_{i-1},\Courant{x_1,\cdots,x_{n-1},y_i},y_{i+1},\cdots,y_n,y_k\}\\
&=&\sum\limits^{n}_{i=1}\{x_1,\cdots,x_{n-1},\{y_1,\cdots,y_i,\cdots,y_n\}\}-\sum\limits^{n}_{i=1}(-1)^{n-i}\{y_1,\cdots,\widehat{y_{i}},\cdots,y_n,\{x_1,\cdots,x_{n-1},y_i\}\}\\
&&-\sum\limits^{n}_{i=1}\sum\limits^{n}_{k=1,k\neq i}(-1)^{n-k}\{y_1,\cdots,\widehat{y_k},\cdots,y_{i-1},\Courant{x_1,\cdots,x_{n-1},y_i},y_{i+1},\cdots,y_n,y_k\}\\
&&+\sum\limits^{n-1}_{i=1}\{x_1,\cdots,x_i,\cdots,x_{n-1},\Courant{y_1,\cdots,y_n}\}-\sum\limits^{n}_{i=1}(-1)^{n-i}\{y_1,\cdots,\widehat{y_{i}},\cdots,y_n,[x_1,\cdots,x_{n-1},y_i]\}\\
&&-\sum\limits^{n}_{i=1}\sum\limits^{n-1}_{j=1}(-1)^{i+j}\{y_1,\cdots,y_{i-1},\widehat{y_{i}},y_{i+1},\cdots,y_n,\{x_1,\cdots,\widehat{x_j},\cdots,x_{n-1},y_i,x_j\}\}\\
&&+[x_1,\cdots,x_{n-1},\Courant{y_1,\cdots,y_n}]+\{x_1,\cdots,x_{n-1},[y_1,\cdots,y_n]\}\\
&&-\sum\limits^{n}_{i=1}(-1)^{n-i}[y_1,\cdots,\widehat{y_{i}},\cdots,y_n,\Courant{x_1,\cdots,x_{n-1},y_i}]\\
&=&0,
\end{eqnarray*}
which implies that $(A, \Courant{\cdot,\cdots,\cdot})$ is an $n$-Lie algebra.

By \eqref{NS-n-Lie-1}, we have
 \begin{eqnarray*}
&&[L(\mathfrak{X}),L(\mathfrak{Y})](x_n)-L(\mathfrak{X}\circ\mathfrak{Y})(x_n)\\
&=&\{x_1,\cdots,x_{n-1},\{y_1,\cdots,y_{n-1},x_n\}\}-\{y_1,\cdots,y_{n-1},\{x_1,\cdots,x_{n-1},x_n\}\}\\
&&-\{\sum\limits^{n-1}_{i=1}y_1,\cdots,y_{i-1},\Courant{x_1,\cdots,x_{n-1},y_i},y_{i+1},\cdots,y_{n-1},x_n\}\\
&=&0,
 \end{eqnarray*}
which implies that \eqref{n-representation-1} in Definition \ref{defi-representation} holds, where $\mathfrak{X}=x_1\wedge \cdots\wedge x_{n-1}$, $\mathfrak{Y}=y_1\wedge \cdots\wedge y_{n-1}$.

By \eqref{NS-n-Lie-2}, we have
 \begin{eqnarray*}
&&L(x_1,\cdots,x_{n-2},\Courant{y_1,\cdots,y_n})(x_{n-1})-\sum\limits^{n}_{i=1}(-1)^{n-i}L(y_1,\cdots,\widehat{y_i},\cdots,y_n)L(x_1,\cdots,x_{n-2},y_i)(x_{n-1})\\
&=&\{\Courant{y_1,\cdots,y_n},x_1,\cdots,x_{n-2},x_{n-1}\}-\sum\limits^{n}_{i=1}(-1)^{n-i}\{y_1,\cdots,\widehat{y_i},\cdots,y_n,\{y_i,x_1,\cdots,x_{n-2},x_{n-1}\}\\
&=&0,
 \end{eqnarray*}
which implies that \eqref{n-representation-2} in Definition \ref{defi-representation} holds.
Therefore, $(A;L)$ is a representation of the sub-adjacent  $n$-Lie algebra $A^c.$
\end{proof}

\begin{thm}\label{Reynolds-NS-n-algebra}
Let $R:\g\rightarrow\g$ be a Reynolds operator on an $n$-Lie algebra $(\g,[\cdot,\cdots,\cdot]_{\g}).$
Then $(\g,\{\cdot,\cdots,\cdot\},[\cdot,\cdots,\cdot])$ is an NS-$n$-Lie algebra, where $\{\cdot,\cdots,\cdot\}$ and $[\cdot,\cdots,\cdot]$ are defined by
 \begin{eqnarray}
~\{x_1,\cdots,x_{n-1},x_n\}&=&[Rx_1,\cdots,Rx_{n-1},x_n]_{\g},\\
~ [x_1,\cdots,x_{n-1},x_n]&=&-[Rx_1,\cdots,Rx_{n}]_{\g}, \quad \forall x_1,\cdots,x_n\in\g.
 \end{eqnarray}

\end{thm}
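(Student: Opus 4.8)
The plan is to verify directly the three defining identities \eqref{NS-n-Lie-1}, \eqref{NS-n-Lie-2} and \eqref{NS-n-Lie-3}, expanding each occurrence of $\{\cdot,\cdots,\cdot\}$, $[\cdot,\cdots,\cdot]$ and $\Courant{\cdot,\cdots,\cdot}$ in terms of $R$ and the original bracket $[\cdot,\cdots,\cdot]_{\g}$. Both operations are well defined on the prescribed wedge domains, since $R$ is linear and $[\cdot,\cdots,\cdot]_{\g}$ is skew-symmetric. The preliminary step is to identify the sub-adjacent bracket: substituting the two operations into \eqref{NS-n-Lie-4} gives $\Courant{y_1,\cdots,y_n}=\sum_{i=1}^n(-1)^{n-i}[Ry_1,\cdots,\widehat{Ry_i},\cdots,Ry_n,y_i]_{\g}-[Ry_1,\cdots,Ry_n]_{\g}$, which by \eqref{induce-n-Lie} is precisely the induced bracket $[y_1,\cdots,y_n]_R$. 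The single fact that drives the entire proof is then Theorem \ref{Reynolds-n-Lie algebra}(a), giving $R\Courant{y_1,\cdots,y_n}=[Ry_1,\cdots,Ry_n]_{\g}$: this lets $R$ be absorbed whenever a sub-adjacent bracket appears inside another operation, collapsing every expansion onto brackets of the form $[R\cdots,R\cdots]_{\g}$.

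For \eqref{NS-n-Lie-1} I would expand the left-hand side to $[Rx_1,\cdots,Rx_{n-1},[Ry_1,\cdots,Ry_{n-1},y_n]_{\g}]_{\g}$, the first right-hand term to $[Ry_1,\cdots,Ry_{n-1},[Rx_1,\cdots,Rx_{n-1},y_n]_{\g}]_{\g}$, and each summand of the second right-hand term, after absorbing $R$ into $\Courant{x_1,\cdots,x_{n-1},y_j}$, to $[Ry_1,\cdots,[Rx_1,\cdots,Rx_{n-1},Ry_j]_{\g},\cdots,Ry_{n-1},y_n]_{\g}$. The resulting equation is exactly the Filippov identity \eqref{FI-Identity} of $\g$ applied to $Rx_1,\cdots,Rx_{n-1}$ acting on $Ry_1,\cdots,Ry_{n-1},y_n$, so this identity is immediate. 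For \eqref{NS-n-Lie-2} the left-hand side becomes $[[Ry_1,\cdots,Ry_n]_{\g},Rx_1,\cdots,Rx_{n-2},x_{n-1}]_{\g}$ after absorbing $R$; I would move the bracketed slot to the terminal position using skew-symmetry and apply the derivation property of $\ad_{Rx_1,\cdots,Rx_{n-2},x_{n-1}}$, then reindex each summand by moving the inner bracket $[Ry_j,Rx_1,\cdots,Rx_{n-2},x_{n-1}]_{\g}$ to the last slot. The signs combine (the two transpositions contribute $(-1)^{2(n-1)}=1$, leaving the factor $(-1)^{n-j}$) to reproduce the alternating sum on the right. The only work here is sign bookkeeping together with the skew-symmetry of $[\cdot,\cdots,\cdot]_{\g}$.

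The identity \eqref{NS-n-Lie-3} is the main obstacle, since it couples all three operations. Writing $A:=[Rx_1,\cdots,Rx_{n-1},[Ry_1,\cdots,Ry_n]_{\g}]_{\g}$ and $S:=\sum_{j=1}^n(-1)^{n-j}[Ry_1,\cdots,\widehat{Ry_j},\cdots,Ry_n,[Rx_1,\cdots,Rx_{n-1},Ry_j]_{\g}]_{\g}$, expanding shows the left-hand side equals $-A$; the second right-hand term $-\{x_1,\cdots,x_{n-1},[y_1,\cdots,y_n]\}$ equals $+A$; and — the key observation — the two cross sums (the one with $\Courant{\cdots}$ inside $[\cdot,\cdots,\cdot]$ and the one with $[\cdot,\cdots,\cdot]$ inside $\{\cdot,\cdots,\cdot\}$) are each equal to $-S$, since both place $R$ on all outer entries and carry one sign $-1$ from the defining formula of $[\cdot,\cdots,\cdot]$. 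Hence the right-hand side is $A-2S$, and the identity reduces to $A=S$. This last equality is again the Filippov identity \eqref{FI-Identity}: expanding $A$ by \eqref{FI-Identity} and moving each inner bracket $[Rx_1,\cdots,Rx_{n-1},Ry_i]_{\g}$ to the terminal slot (sign $(-1)^{n-i}$) yields precisely $S$. Thus both sides equal $-A$ and \eqref{NS-n-Lie-3} holds, completing the verification. I expect the sign tracking in this last identity, and in particular recognizing the collapse to $A=S$ rather than attempting a term-by-term cancellation, to be the only genuinely delicate point.
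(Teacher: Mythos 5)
Your proof is correct and follows essentially the same route as the paper: expand all operations, absorb $R$ into the sub-adjacent bracket via Theorem \ref{Reynolds-n-Lie algebra}(a) (equivalently the Reynolds identity \eqref{n-Reynolds}), and reduce each axiom to the Filippov identity \eqref{FI-Identity}. The paper writes out only \eqref{NS-n-Lie-1} and dismisses \eqref{NS-n-Lie-2} and \eqref{NS-n-Lie-3} as ``similar''; your sign bookkeeping and the collapse of \eqref{NS-n-Lie-3} to the single equality $A=S$ correctly supply exactly the details the paper omits.
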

 \begin{proof}
 For $x_1,\cdots,x_{n-1},y_1,\cdots,y_{n}\in \g$, by \eqref{FI-Identity}, \eqref{n-Reynolds} and \eqref{NS-n-Lie-4}, we have
  \begin{eqnarray*}
&&\{x_1,\cdots,x_{n-1},\{y_1,\cdots,y_{n-1},y_n\}\}-\{y_1,\cdots,y_{n-1},\{x_1,\cdots,x_{n-1},y_n\}\}\\
&&-\sum\limits^{n-1}_{j=1}\{y_1,\cdots,y_{j-1},\Courant{x_1,\cdots,x_{n-1},y_j},y_{j+1},\cdots,y_n\}\\
&=&[Rx_1,\cdots,Rx_{n-1},[Ry_1,\cdots,Ry_{n-1},y_n]_{\g}]_{\g}-[Ry_1,\cdots,Ry_{n-1},[Rx_1,\cdots,Rx_{n-1},y_n]_{\g}]_{\g}\\
&&-\sum\limits^{n-1}_{j=1}[Ry_1,\cdots,Ry_{j-1},[Rx_1,\cdots,Rx_{n-1},Ry_j]_{\g},Ry_{j+1},\cdots,y_n]_{\g}\\
&=&0.
\end{eqnarray*}
This implies that \eqref{NS-n-Lie-1} in Definition \ref{defi-NS-n-Lie algebra} holds.
 Similarly, we can verify that \eqref{NS-n-Lie-2} and \eqref{NS-n-Lie-3} hold.
 \emptycomment{\begin{eqnarray*}
 &&\{\Courant{y_1,\cdots,y_n},x_1,\cdots,x_{n-1}\}-\sum\limits^{n}_{j=1}(-1)^{n-j}\{y_1,\cdots,\widehat{y_{j}},\cdots,y_n,\{y_{j},x_1,\cdots,x_{n-1}\}\}\\
 &=&[R[Ry_1,\cdots,Ry_n]_{\g},Rx_1,\cdots,Rx_{n-2},x_{n-1}]_{\g}\\
 &&-\sum\limits^{n}_{j=1}(-1)^{n-j}[Ry_1,\cdots,\widehat{Ry_{j}},\cdots,Ry_n,[Ry_j,Rx_1,\cdots,Rx_{n-2},x_{n-1}]_{\g}]_{\g}\\
 &=&0.
 \end{eqnarray*}
 Moreover, we observe that
 \begin{eqnarray*}
 &&[x_1,\cdots,x_{n-1},\Courant{y_1,\cdots,y_{n-1},y_n}]-\sum\limits^{n}_{j=1}(-1)^{n-j}[ y_1,\cdots,\widehat{y_j},\cdots,y_n,\Courant{x_1,\cdots,x_{n-1},y_j}]\\
&&+\{x_1,\cdots,x_{n-1},[y_1,\cdots,y_n]\}-\sum\limits^{n}_{j=1}(-1)^{n-j}\{y_1,\cdots,\widehat{y_j},\cdots,y_n,[x_1,\cdots,x_{n-1},y_j]\}\\
 &=&
  \end{eqnarray*}}
Thus $(\g,\{\cdot,\cdots,\cdot\},[\cdot,\cdots,\cdot])$ is an NS-$n$-Lie algebra.
\end{proof}

\emptycomment{

\begin{pro}
Let $R$ and $R'$ be Reynolds operators on an $n$-Lie algebra $\g$, and $(\g,\{\cdot,\cdot,\cdot\},[\cdot,\cdot,\cdot])$, $(\g,\{\cdot,\cdot,\cdot\}',[\cdot,\cdot,\cdot]')$ be the induced NS-$n$-Lie algebras.  Let $(\phi,\psi)$ be a homomorphism  from $R$ to $R'$. Then $\psi$ is a homomorphism from the NS-$n$-Lie algebra $(\g,\{\cdot,\cdot,\cdot\},[\cdot,\cdot,\cdot])$ to $(\g,\{\cdot,\cdot,\cdot\}',[\cdot,\cdot,\cdot]')$.
\end{pro}
\begin{proof}
For all $u,v,w\in V,$ by \eqref{condition-1}-\eqref{condition-3} and \eqref{twisted-Ns}, we have
\begin{eqnarray*}
\psi(\{u,v,w\})&=&\psi(\rho(Tu,Tv)w)=\rho(\phi(Tu),\phi(Tv))\psi(w)=\rho(T'\psi(u),T'\psi(v))\psi(w)\\
&=&\{\psi(u),\psi(v),\psi(w)\}',\\
 \psi([u,v,w])&=&\psi(\Phi(Tu,Tv,Tw))=\Phi(\phi(Tu),\phi(Tv),\phi(Tw))=\Phi(T'\psi(u),T'\psi(v), T'\psi(w))\\
&=&[\psi(u),\psi(v),\psi(w)]',
\end{eqnarray*}
which implies that $\psi$ is a homomorphism between the induced NS-3-Lie algebras.
\end{proof}

Thus, Theorem \ref{construct-3-NS-Lie algebra} can be enhanced to a functor from the category of $\Phi$-twisted Rota-Baxter operators on a $3$-Lie algebra $\g$ with respect to a representation $(V;\rho)$ to the category of NS-3-Lie algebras.
}
\subsection{NS-$n$-Lie algebras and Nijenhuis operators on $n$-Lie algebras}
In this subsection, we show that a Nijenhuis operator on an $n$-Lie algebra gives rise to an NS-$n$-Lie algebra. First we recall some properties of Nijenhuis operators from \cite{Liu-Jie-Feng}.

Let $(\g,[\cdot,\cdots,\cdot]_{\g})$ be an $n$-Lie algebra, and $N:\g\rightarrow\g$ a linear map. Define an $n$-ary bracket $[\cdot,\cdots,\cdot]^{1}_{N}:\wedge^{n}\g\rightarrow\g$ by
\begin{eqnarray}\label{rep-Nijenhuis-1}
[x_1,x_2,\cdots,x_n]^{1}_{N}=\sum\limits^{n}_{i=1}[x_1,\cdots,Nx_i,\cdots,x_n]_{\g}-N[x_1,x_2,\cdots,x_n]_{\g}.
\end{eqnarray}
 Define the $n$-ary brackets $[\cdot,\cdots,\cdot]_N^j:\wedge^n\g\longrightarrow\g, (2\leq j\leq n-1)$ via induction by
    \begin{equation}\label{rep-Nijenhuis-2}
    [x_1,x_2,\cdots,x_n]_N^{j}=\sum_{i_1<i_2\cdots<i_{j}}[\cdots,Nx_{i_1},\cdots,Nx_{i_{j}},\cdots]_{\g}-N[x_1,x_2,\cdots,x_n]_N^{j-1}.
  \end{equation}
  In particular, we have
    \begin{equation}\label{rep-Nijenhuis-3}
    [x_1,x_2,\cdots,x_n]_N^{n-1}=\sum_{i_1<i_2\cdots<i_{n-1}}[\cdots,Nx_{i_1},\cdots,Nx_{i_{n-1}},\cdots]_{\g}-N[x_1,x_2,\cdots,x_n]_N^{n-2}.
  \end{equation}
\begin{defi}\cite{Liu-Jie-Feng}
Let $(\g,[\cdot,\cdots,\cdot]_{\g})$ be an $n$-Lie algebra. A linear map $N:\g\rightarrow\g$ is called a {\bf Nijenhuis operator} if
\begin{eqnarray}\label{Nijenhuis-1}
[Nx_1,Nx_2,\cdots,Nx_n]_{\g}=N([x_1,x_2,\cdots,x_n]^{n-1}_{N}),\quad \forall x_1,\cdots,x_n\in \g.
\end{eqnarray}
\end{defi}
In this case, the $n$-Lie algebra $\g$ carries a new $n$-Lie bracket $[\cdot,\cdots,\cdot]^{n-1}_{N}$, the $n$-Lie algebra $(\g,[\cdot,\cdots,\cdot]^{n-1}_{N})$ will be called the {\bf deformed $n$-Lie algebra} and denote this $n$-Lie algebra  by $\g_{N}.$ It is obvious that  $N$ is a homomorphism from  the deformed $n$-Lie algebra $(\g,[\cdot,\cdots,\cdot]^{n-1}_{N})$ to $(\g,[\cdot,\cdots,\cdot]_{\g})$.

\begin{lem}
Let $N$ be a Nijenhuis operator on an $n$-Lie algebra $(\g,[\cdot,\cdots,\cdot]_{\g}).$ Define $\rho_{N}:\wedge^{n-1}\g_{N}\rightarrow\gl(\g)$ by
\begin{eqnarray}\label{Nijenhuis-representation}
\rho_{N}(x_1,\cdots,x_{n-1})x=[Nx_1,\cdots,Nx_{n-1},x]_{\g},\quad \forall x_1,\cdots,x_{n-1}\in \g_{N}, x\in \g.
\end{eqnarray}
 Then $(\g;\rho_{N})$ is a representation of the deformed $n$-Lie algebra $(\g,[\cdot,\cdots,\cdot]^{n-1}_{N}).$
\end{lem}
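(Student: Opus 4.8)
The plan is to verify directly that $\rho_N$ satisfies the two defining identities \eqref{n-representation-1} and \eqref{n-representation-2} of Definition \ref{defi-representation}, now taken relative to the deformed bracket $[\cdot,\cdots,\cdot]^{n-1}_N$ of $\g_N$. The whole check reduces to two inputs: the Filippov identity \eqref{FI-Identity} for the \emph{original} bracket $[\cdot,\cdots,\cdot]_{\g}$, applied to the elements $Nx_i$ and $Ny_j$, together with the Nijenhuis condition \eqref{Nijenhuis-1}, which lets us replace $N\big([z_1,\cdots,z_n]^{n-1}_N\big)$ by $[Nz_1,\cdots,Nz_n]_{\g}$ wherever it occurs. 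Conceptually, $\rho_N$ is the pullback along $N$ of the adjoint representation of $(\g,[\cdot,\cdots,\cdot]_{\g})$, and \eqref{Nijenhuis-1} is exactly the compatibility that upgrades this pullback to a genuine representation of the deformed bracket.

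For \eqref{n-representation-1}, I would expand
\[
[\rho_N(\mathfrak{X}),\rho_N(\mathfrak{Y})](x)=[Nx_1,\cdots,Nx_{n-1},[Ny_1,\cdots,Ny_{n-1},x]_{\g}]_{\g}-[Ny_1,\cdots,Ny_{n-1},[Nx_1,\cdots,Nx_{n-1},x]_{\g}]_{\g},
\]
and apply \eqref{FI-Identity} to the first double bracket (with inner entries $Ny_1,\cdots,Ny_{n-1},x$). The summand in which the outer bracket lands on $x$ cancels the second term, leaving
\[
\sum_{i=1}^{n-1}[Ny_1,\cdots,[Nx_1,\cdots,Nx_{n-1},Ny_i]_{\g},\cdots,Ny_{n-1},x]_{\g}.
\]
On the other side, $\rho_N(\mathfrak{X}\circ\mathfrak{Y})(x)$ places $N\big([x_1,\cdots,x_{n-1},y_i]^{n-1}_N\big)$ into the $i$-th slot, where $\circ$ is formed with respect to $[\cdot,\cdots,\cdot]^{n-1}_N$; the Nijenhuis condition \eqref{Nijenhuis-1} rewrites this entry as $[Nx_1,\cdots,Nx_{n-1},Ny_i]_{\g}$, and the two expressions coincide.

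For \eqref{n-representation-2}, I would first use \eqref{Nijenhuis-1} on the left-hand side to replace $N\big([y_1,\cdots,y_n]^{n-1}_N\big)$ by $[Ny_1,\cdots,Ny_n]_{\g}$, so that it becomes $[Nx_1,\cdots,Nx_{n-2},[Ny_1,\cdots,Ny_n]_{\g},x]_{\g}$. The right-hand side unwinds to $\sum_{i=1}^n(-1)^{n-i}[Ny_1,\cdots,\widehat{Ny_i},\cdots,Ny_n,[Nx_1,\cdots,Nx_{n-2},Ny_i,x]_{\g}]_{\g}$. This is precisely identity \eqref{n-representation-2} for the adjoint representation of the original $n$-Lie algebra $(\g,[\cdot,\cdots,\cdot]_{\g})$, evaluated on $Nx_1,\cdots,Nx_{n-2},Ny_1,\cdots,Ny_n$ and $x$; since the adjoint map of any $n$-Lie algebra is a representation, the equality holds.

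The computations are routine once these reductions are in place; the only real care needed is bookkeeping — keeping straight which bracket ($[\cdot,\cdots,\cdot]_{\g}$ versus $[\cdot,\cdots,\cdot]^{n-1}_N$) and which sign appears at each step, and applying \eqref{Nijenhuis-1} in the correct slot. The single genuine structural input is the Nijenhuis condition \eqref{Nijenhuis-1}: without it the pullback $\rho_N$ would not intertwine with the deformed bracket, so that is where the argument carries its content rather than in any of the algebraic manipulations.
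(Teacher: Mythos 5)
Your proposal is correct and follows essentially the same route as the paper: verify \eqref{n-representation-1} and \eqref{n-representation-2} directly, using the Filippov identity \eqref{FI-Identity} for the original bracket on the $N$-images and the Nijenhuis condition \eqref{Nijenhuis-1} to replace $N\big([\cdots]^{n-1}_N\big)$ by the bracket of the $N$-images, which is exactly the paper's computation for the first identity and its implicit ``similarly'' for the second. Your added observation that $\rho_N$ is the pullback of the adjoint representation along the homomorphism $N:\g_N\rightarrow\g$ (noted in the paper just before the lemma) is a clean way to make the second verification explicit, but it does not change the substance of the argument.
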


\begin{proof}
For all $x_1,\cdots,x_{n-1}\in \g_{N},x\in \g,$ by \eqref{FI-Identity}, \eqref{rep-Nijenhuis-1}-\eqref{rep-Nijenhuis-3}, we have
\begin{eqnarray*}
&&[\rho_{N}(x_1,\cdots,x_{n-1}),\rho_{N}(y_1,\cdots,y_{n-1})](x)\\
&&\quad-\rho_{N}\Big(\sum\limits^{n-1}_{i=1}(y_1,\cdots,y_{i-1},[x_1,\cdots,x_{n-1},y_i]^{n-1}_{N},y_{i+1},\cdots,y_{n-1})\Big)(x)\\
&=&[Nx_1,\cdots,Nx_{n-1},[Ny_1,\cdots,Ny_{n-1},x]_{\g}]_{\g}-[Ny_1,\cdots,Ny_{n-1},[Nx_1,\cdots,Nx_{n-1},x]_{\g}]_{\g}\\
&&-\sum\limits^{n-1}_{j=1}[Ny_1,\cdots,Ny_{j-1},[Nx_1,\cdots,Nx_{n-1},Ny_j]_{\g},Ny_{j+1},\cdots,x]_{\g}\\
&=&0,
\end{eqnarray*}
which implies that \eqref{n-representation-1} in Definition \ref{defi-representation} holds.
 Similarly, we can show that \eqref{n-representation-2} also holds. Thus, $(\g;\rho_{N})$ is a representation of the deformed $n$-Lie algebra $(\g,[\cdot,\cdots,\cdot]^{n-1}_{N}).$
\end{proof}

At the end of this section, we show that a Nijenhuis operator induces an NS-$n$-Lie algebra.
\begin{thm}
Let $(\g,[\cdot,\cdots,\cdot]_{\g})$ be an $n$-Lie algebra and $N:\g\rightarrow\g$ be a Nijenhuis operator. Then there exists an NS-$n$-Lie algebra
on $\g$ given by
\begin{eqnarray}\label{Nijenhuis-NS}
~\{x_1,\cdots,x_n\}&=&[Nx_1,\cdots,Nx_{n-1},x_n]_{\g},\\
~ [x_1,\cdots,x_n]&=&-N[x_1,x_2,\cdots,x_n]^{n-2}_{N},\quad \forall x_1,\cdots,x_n\in \g.
\end{eqnarray}
\end{thm}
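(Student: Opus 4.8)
The plan is to verify the three defining identities \eqref{NS-n-Lie-1}--\eqref{NS-n-Lie-3} by reducing them, wherever possible, to the representation $(\g;\rho_{N})$ of the deformed $n$-Lie algebra $(\g,[\cdot,\cdots,\cdot]^{n-1}_{N})$ furnished by the lemma above. The step on which everything rests is to identify the sub-adjacent bracket $\Courant{\cdot,\cdots,\cdot}$ of the candidate NS-$n$-Lie algebra with the deformed bracket $[\cdot,\cdots,\cdot]^{n-1}_{N}$.

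To this end I would expand \eqref{NS-n-Lie-4} using \eqref{Nijenhuis-NS}. Since $\{y_1,\cdots,\widehat{y_i},\cdots,y_n,y_i\}=[Ny_1,\cdots,\widehat{Ny_i},\cdots,Ny_n,y_i]_{\g}$, moving $y_i$ back into its $i$-th slot by skew-symmetry of $[\cdot,\cdots,\cdot]_{\g}$ absorbs the factor $(-1)^{n-i}$, so that
\[
\sum_{i=1}^n(-1)^{n-i}\{y_1,\cdots,\widehat{y_i},\cdots,y_n,y_i\}=\sum_{i=1}^n[Ny_1,\cdots,y_i,\cdots,Ny_n]_{\g}.
\]
By \eqref{rep-Nijenhuis-3} the right-hand side is the leading sum of $[y_1,\cdots,y_n]^{n-1}_{N}$, hence equals $[y_1,\cdots,y_n]^{n-1}_{N}+N[y_1,\cdots,y_n]^{n-2}_{N}$. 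Adding the second summand $[y_1,\cdots,y_n]=-N[y_1,\cdots,y_n]^{n-2}_{N}$ cancels the term $N[y_1,\cdots,y_n]^{n-2}_{N}$ and yields the key identity $\Courant{y_1,\cdots,y_n}=[y_1,\cdots,y_n]^{n-1}_{N}$.

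With this identity in hand, observe that $\{x_1,\cdots,x_{n-1},x_n\}=\rho_{N}(x_1,\cdots,x_{n-1})(x_n)$ by \eqref{Nijenhuis-representation}. Identity \eqref{NS-n-Lie-1} then becomes exactly \eqref{n-representation-1} for $\rho_{N}$ relative to the deformed bracket, once $\Courant{x_1,\cdots,x_{n-1},y_j}$ is rewritten as $[x_1,\cdots,x_{n-1},y_j]^{n-1}_{N}$ through the key identity. Identity \eqref{NS-n-Lie-2} becomes \eqref{n-representation-2}: here one first uses the Nijenhuis condition \eqref{Nijenhuis-1} to replace $N\Courant{y_1,\cdots,y_n}=N[y_1,\cdots,y_n]^{n-1}_{N}$ by $[Ny_1,\cdots,Ny_n]_{\g}$, after which both sides reduce, up to the common sign produced by reordering the skew entries, to the two sides of the representation axiom. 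Thus \eqref{NS-n-Lie-1} and \eqref{NS-n-Lie-2} follow from the lemma above with no further computation.

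The remaining identity \eqref{NS-n-Lie-3} is where I expect the real difficulty, since it is the only one genuinely coupling $\{\cdot,\cdots,\cdot\}$ with the bracket $[\cdot,\cdots,\cdot]=-N[\cdot,\cdots,\cdot]^{n-2}_{N}$. I would substitute this bracket, the key identity $\Courant{\cdot,\cdots,\cdot}=[\cdot,\cdots,\cdot]^{n-1}_{N}$, and $\{\cdot,\cdots,\cdot\}=\rho_{N}(\cdot)(\cdot)$ into all four terms, expand the brackets $[\cdot,\cdots,\cdot]^{n-1}_{N}$ and $[\cdot,\cdots,\cdot]^{n-2}_{N}$ by the recursions \eqref{rep-Nijenhuis-1}--\eqref{rep-Nijenhuis-3}, and reduce everything to iterated brackets $[\cdot,\cdots,\cdot]_{\g}$ in which some entries are carried by $N$; the assertion is then that the total combination vanishes after repeated application of the Filippov identity \eqref{FI-Identity} together with the Nijenhuis relation \eqref{Nijenhuis-1}. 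The obstacle is purely combinatorial: the tower of brackets $[\cdot,\cdots,\cdot]^{j}_{N}$ generates many nested sub-sums, and the signs must be tracked with care when skew entries are permuted. To keep the bookkeeping manageable I would grade the terms by how many factors of $N$ act outermost, so that the Filippov identity can be applied level by level and the cross terms cancel in matched pairs; one may also recognize in \eqref{NS-n-Lie-3} the structure of a cocycle-type condition for the $2$-cochain $[\cdot,\cdots,\cdot]$ against $(\g;\rho_{N})$, which serves as a useful guide for organizing the cancellation.
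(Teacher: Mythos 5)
Your proposal is correct and is essentially the paper's own argument: both rest on the key identity $\Courant{x_1,\cdots,x_n}=[x_1,\cdots,x_n]^{n-1}_{N}$, derived exactly as you do from \eqref{NS-n-Lie-4} and \eqref{rep-Nijenhuis-3}, after which the axioms are checked against the Filippov identity \eqref{FI-Identity} and the Nijenhuis condition \eqref{Nijenhuis-1} --- indeed the paper's displayed verification of \eqref{NS-n-Lie-1} is the same computation that proves the representation lemma you invoke, and the paper disposes of \eqref{NS-n-Lie-2} and \eqref{NS-n-Lie-3} with the same ``similarly'' that your sketch makes explicit. Your routing of \eqref{NS-n-Lie-1}--\eqref{NS-n-Lie-2} through the lemma is a presentational economy rather than a different method, and your plan for \eqref{NS-n-Lie-3} does go through, in fact more easily than you fear: writing $[\cdot,\cdots,\cdot]=\Courant{\cdot,\cdots,\cdot}$ minus the leading sum of \eqref{rep-Nijenhuis-3}, the $\Courant{\cdot,\cdots,\cdot}$-terms cancel by the Filippov identity for the deformed bracket and the rest cancels after one application of \eqref{FI-Identity} to each remaining term, so the lower brackets $[\cdot,\cdots,\cdot]^{j}_{N}$ with $j\leq n-3$ and the combinatorial tower you worry about never enter.
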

\begin{proof}
For any $x_1,\cdots,x_n\in \g,$ by \eqref{NS-n-Lie-4} and \eqref{rep-Nijenhuis-2}, we have
 \begin{eqnarray*}
 \Courant{x_1,\cdots,x_n}&=&\sum\limits^{n}_{i=1}(-1)^{n-i}\{x_1,\cdots,\widehat{x_i},\cdots,x_n,x_i\}+[x_1,\cdots,x_n]\\
 &=&\sum\limits_{i_1<i_2\cdots<i_{n-1}}[\cdots,Nx_{i_{1}},\cdots,Nx_{i_{n-1}},\cdots]_{\g}-N[x_1,x_2,\cdots,x_n]^{n-2}_{N}\\
  &=&[x_1,\cdots,x_n]^{n-1}_{N}.
\end{eqnarray*}
Then by \eqref{FI-Identity} and \eqref{n-Reynolds}, we have
 \begin{eqnarray*}
&&\{x_1,\cdots,x_{n-1},\{y_1,\cdots,y_{n-1},y_n\}\}-\{y_1,\cdots,y_{n-1},\{x_1,\cdots,x_{n-1},y_n\}\}\\
&&-\sum\limits^{n-1}_{j=1}\{y_1,\cdots,y_{j-1},\Courant{x_1,\cdots,x_{n-1},y_j},y_{j+1},\cdots,y_n\}\\
&=&[Nx_1,\cdots,Nx_{n-1},[Ny_1,\cdots,Ny_{n-1},y_n]_{\g}]_{\g}-[Ny_1,\cdots,Ny_{n-1},[Nx_1,\cdots,Nx_{n-1},y_n]_{\g}]_{\g}\\
&&-\sum\limits^{n-1}_{j=1}[Ny_1,\cdots,Ny_{j-1},[Nx_1,\cdots,Nx_{n-1},Ny_j]_{\g},Ny_{j+1},\cdots,y_n]_{\g}\\
&=&0.
\end{eqnarray*}
This implies that \eqref{NS-n-Lie-1} in Definition \ref{defi-NS-n-Lie algebra} holds.
 Similarly, we can verify that \eqref{NS-n-Lie-2} and \eqref{NS-n-Lie-3} hold. Hence the conclusion follows.
\end{proof}

\section{Constructions of Reynolds   $n$-Lie algebras}\label{sec:Con}

\subsection{Constructions of Reynolds $(n+1)$-Lie algebras from Reynolds $n$-Lie algebras }

In \cite{BRP-2}, the authors constructed an $(n+1)$-Lie algebra $\g_{f}$ from an $n$-Lie algebra $\g$
using a linear function $f\in \g^*$ . In this section, we provide the condition for a Reynolds operator on an $n$-Lie algebra $\g$ also being a Reynolds operator on the $(n+1)$-Lie algebra $\g_{f}$.
\begin{lem}\cite{BRP-2}\label{constrcuted-alg}
Let $(\g,[\cdot,\cdots,\cdot]_{\g})$ be an $n$-Lie algebra and $\g^*$ the dual space of $\g.$ Suppose that $f\in \g^*$ satisfies $f([x_1,\cdots,x_n]_{\g})=0$ for all $x_i\in\g.$ Then $(\g,\{\cdot,\cdots,\cdot\})$ is an $(n+1)$-Lie algebra, where the bracket $\{\cdot,\cdots,\cdot\}:\wedge^{n+1}\g\rightarrow\g$ is given by
\begin{eqnarray}\label{constrcuted-n+1-Alge}
\{x_1,\cdots,x_{n+1}\}=\sum\limits^{n+1}_{i=1}(-1)^{i-1}f(x_i)[x_1,\cdots,\widehat{x_{i}},\cdots,x_{n+1}]_{\g},\quad \forall x_i\in \g, 1\leq i\leq n+1.
\end{eqnarray}
The $(n+1)$-Lie algebra constructed as above is denoted by $\g_{f}.$
\end{lem}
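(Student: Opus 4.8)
The plan is to verify the two axioms of an $(n+1)$-Lie algebra for the bracket $\{\cdot,\cdots,\cdot\}$ defined in \eqref{constrcuted-n+1-Alge}: skew-symmetry and the $(n+1)$-ary Filippov identity. Skew-symmetry is straightforward from the skew-symmetry of $[\cdot,\cdots,\cdot]_{\g}$ together with the signs $(-1)^{i-1}$, so the whole content is the identity
\begin{eqnarray*}
\{x_1,\cdots,x_n,\{y_1,\cdots,y_{n+1}\}\}=\sum_{i=1}^{n+1}\{y_1,\cdots,y_{i-1},\{x_1,\cdots,x_n,y_i\},y_{i+1},\cdots,y_{n+1}\}.
\end{eqnarray*}
The one observation that drives everything is that $f$ annihilates every element of the form $\{z_1,\cdots,z_{n+1}\}$: applying $f$ to \eqref{constrcuted-n+1-Alge} produces a sum of terms $\pm f(z_i)f([\cdots]_{\g})$, each vanishing by hypothesis.

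First I would expand the left-hand side. Writing the outer bracket via \eqref{constrcuted-n+1-Alge} with $z=\{y_1,\cdots,y_{n+1}\}$ in the last slot, the summand carrying $f(z)$ drops out by the observation above, and substituting $z$ leaves
\begin{eqnarray*}
\sum_{a=1}^{n}\sum_{b=1}^{n+1}(-1)^{a-1}(-1)^{b-1}f(x_a)f(y_b)\big[x_1,\cdots,\widehat{x_a},\cdots,x_n,[y_1,\cdots,\widehat{y_b},\cdots,y_{n+1}]_{\g}\big]_{\g}.
\end{eqnarray*}
Applying the Filippov identity \eqref{FI-Identity} to each summand, with acting tuple $(x_1,\cdots,\widehat{x_a},\cdots,x_n)$ and inner bracket $[y_1,\cdots,\widehat{y_b},\cdots,y_{n+1}]_{\g}$, rewrites it as a sum over the slots $c\neq b$ in which $y_c$ is replaced by $[x_1,\cdots,\widehat{x_a},\cdots,x_n,y_c]_{\g}$ while $y_b$ stays omitted.

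On the right, put $w_i=\{x_1,\cdots,x_n,y_i\}$. Since $f(w_i)=0$, expanding the outer bracket $\{y_1,\cdots,w_i,\cdots,y_{n+1}\}$ discards the term carrying $f(w_i)$; substituting $w_i$ from \eqref{constrcuted-n+1-Alge} then breaks the right-hand side into two families. The family produced by the $f(x_\bullet)$-part of $w_i$ consists of $n$-brackets with $[x_1,\cdots,\widehat{x_a},\cdots,x_n,y_i]_{\g}$ in slot $i$ and some $y_m$ omitted; under the relabelling $b\mapsto m$, $c\mapsto i$ it coincides termwise (signs included) with the expanded left-hand side. I expect the genuine obstacle to be the second family, coming from the summand $(-1)^n f(y_i)[x_1,\cdots,x_n]_{\g}$ of $w_i$, which contributes $n$-brackets carrying the fixed element $[x_1,\cdots,x_n]_{\g}$ in slot $i$. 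Such terms never occur on the left, so they must cancel internally. The plan is to pair the $(i,m)$-term (omit $y_m$, insert $[x_1,\cdots,x_n]_{\g}$ in slot $i$) against the $(m,i)$-term (omit $y_i$, insert in slot $m$): the two brackets differ only by transporting the inserted element across the $y$'s lying strictly between slots $i$ and $m$, and the resulting sign from skew-symmetry of $[\cdot,\cdots,\cdot]_{\g}$ is exactly what is needed to cancel the coefficients $(-1)^{m-1}f(y_m)f(y_i)$ and $(-1)^{i-1}f(y_i)f(y_m)$ against each other. Once this sign bookkeeping is carried out the second family vanishes, and combined with the matching of the first family this establishes the Filippov identity. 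The only delicate point throughout is the tracking of signs; the rest is mechanical.
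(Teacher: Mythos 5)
Your proposal is correct. Note that the paper offers no proof of this lemma at all---it is quoted from \cite{BRP-2}---and your argument is exactly the direct verification one would expect there: skew-symmetry is immediate, the hypothesis $f([\cdots]_{\g})=0$ enters only through the observation $f(\{z_1,\cdots,z_{n+1}\})=0$ (used to kill the $f(z)$ term on the left and the $f(w_i)$ terms on the right), the first family matches the Filippov-expanded left-hand side termwise under $(b,c)\leftrightarrow(m,i)$, and the delicate cancellation in the second family checks out: transporting $[x_1,\cdots,x_n]_{\g}$ from slot $i$ to slot $m$ (say $i<m$) costs $(-1)^{m-i-1}$, so the $(i,m)$ and $(m,i)$ terms carry coefficients proportional to $(-1)^{m-1}(-1)^{m-i-1}+(-1)^{i-1}=(-1)^{i}+(-1)^{i-1}=0$.
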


Next we combine this result with Reynolds operators.
\begin{thm}\label{n+1Reynold}
Let $(\g,[\cdot,\cdots,\cdot]_{\g},R)$ be a Reynolds $n$-Lie algebra. If $f\in \g^*$ satisfy $f([x_1,\cdots,x_n]_{\g})=0,$ for all $x_1,\cdots,x_n\in \g,$ then $R$ is a Reynolds operator on the $(n+1)$-Lie algebra $\g_f$ defined by \eqref{constrcuted-n+1-Alge} if and only if $R$ satisfies
\begin{eqnarray}\label{n+1Reynold-equation}
\sum\limits^{n+1}_{i=1}(-1)^{n+1-i}f(x_i)R[Rx_i,\cdots,\widehat{Rx_i},\cdots,Rx_{n+1}]_{\g}=0.
\end{eqnarray}
\end{thm}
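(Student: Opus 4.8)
The plan is to compute the Reynolds defect of $R$ on $\g_f$ explicitly and to show that it is a nonzero scalar multiple of the left-hand side of \eqref{n+1Reynold-equation}. Since $f([x_1,\cdots,x_n]_{\g})=0$, Lemma \ref{constrcuted-alg} guarantees that $(\g,\{\cdot,\cdots,\cdot\})=\g_f$ is an $(n+1)$-Lie algebra, so $R$ is a Reynolds operator on $\g_f$ exactly when, for all $x_1,\cdots,x_{n+1}\in\g$,
\[
\{Rx_1,\cdots,Rx_{n+1}\}=\sum_{i=1}^{n+1}(-1)^{n+1-i}R\{Rx_1,\cdots,\widehat{Rx_i},\cdots,Rx_{n+1},x_i\}-R\{Rx_1,\cdots,Rx_{n+1}\},
\]
i.e. identity \eqref{n-Reynolds} with $n$ replaced by $n+1$ and $[\cdot,\cdots,\cdot]_{\g}$ replaced by $\{\cdot,\cdots,\cdot\}$. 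I would write the difference of the two sides as a single expression and aim to reduce it to \eqref{n+1Reynold-equation}.

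First I would expand each $(n+1)$-bracket by \eqref{constrcuted-n+1-Alge}; every expansion splits according to the slot on which $f$ is evaluated. This produces two families of summands: those in which $f$ lands on the distinguished argument $x_i$ of a correction term, and those in which $f$ lands on one of the $R$-images $Rx_k$. The family-one summands occur only in the correction sum, while the family-two summands come from $\{Rx_1,\cdots,Rx_{n+1}\}$, from $R\{Rx_1,\cdots,Rx_{n+1}\}$, and from the correction sum.

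For the family-two summands I would fix the index $k$ carrying the scalar $f(Rx_k)$, factor it out, and swap the order of summation. The inner sum over the remaining indices is then precisely the $n$-Lie Reynolds identity \eqref{n-Reynolds} applied to the $n$-tuple $(x_l)_{l\neq k}$: the bracket $[Rx_1,\cdots,\widehat{Rx_k},\cdots,Rx_{n+1}]_{\g}$ together with $R[Rx_1,\cdots,\widehat{Rx_k},\cdots,Rx_{n+1}]_{\g}$ equals the associated Reynolds correction sum. Hence the family-two contributions coming from the first two terms cancel exactly the family-two part of the correction sum, and this whole family disappears. What remains is the family-one part, and a direct sign count shows it equals $(-1)^{n+1}$ times the left-hand side of \eqref{n+1Reynold-equation}. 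Therefore the defect vanishes if and only if \eqref{n+1Reynold-equation} holds, which proves both implications at once; the hypothesis $f([x_1,\cdots,x_n]_{\g})=0$ is used only to ensure $\g_f$ is an $(n+1)$-Lie algebra.

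The main obstacle will be the sign bookkeeping that drives the family-two cancellation. One must reconcile the outer sign $(-1)^{n+1-i}$ of the correction sum, the position-dependent sign generated when expanding $\{\cdot,\cdots,\cdot\}$ (which changes according to whether $k<i$ or $k>i$, since deleting $Rx_i$ shifts all later slots), and the internal sign $(-1)^{n-p}$ coming from \eqref{n-Reynolds}; and one must check that the skew-symmetric $n$-brackets appearing on the two sides carry the same ordering of their arguments, so that they match term by term rather than only up to a permutation.
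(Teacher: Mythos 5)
Your proposal is correct and follows essentially the same route as the paper's proof: expand the Reynolds defect of $R$ on $\g_f$ via \eqref{constrcuted-n+1-Alge}, cancel the entire family of terms carrying $f(Rx_k)$ by recognizing, for each fixed $k$, the $n$-ary Reynolds identity \eqref{n-Reynolds} applied to $(x_l)_{l\neq k}$, and identify the surviving $f(x_i)$-family with the left-hand side of \eqref{n+1Reynold-equation}. Your overall scalar $(-1)^{n+1}$ (the paper, orienting the defect the other way, writes it with factor $+1$, though its displayed computation contains sign typos) is immaterial to the equivalence, so no gap results.
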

\begin{proof}
By \eqref{n-Reynolds} and \eqref{constrcuted-n+1-Alge}, for all $x_1,\cdots,x_{n+1}\in \g,$ we have
\begin{eqnarray*}
&&\sum\limits^{n+1}_{i=1}(-1)^{n+1-i}R\{Rx_i,\cdots,\widehat{Rx_i},\cdots,Rx_{n+1},x_i\}-R\{Rx_1,\cdots,Rx_{n+1}\}-\{Rx_1,\cdots,Rx_{n+1}\}\\
&=&\sum\limits^{n+1}_{i=1}(-1)^{n+1-i}R\Big(\sum\limits^{i-1}_{j=1}(-1)^{j-1}f(Rx_j)[Rx_1,\cdots,\widehat{Rx_j},\cdots,\widehat{Rx_i},\cdots,Rx_{n+1},x_i]_{\g}\\
&&+\sum\limits^{n+1}_{j=i+1}(-1)^{j}f(Rx_j)[Rx_1,\cdots,\widehat{Rx_i},\cdots,\widehat{Rx_j},\cdots,Rx_{n+1},x_i]_{\g}+f(x_i)[Rx_1,\cdots,Rx_i,\cdots,Rx_{n+1}]_{\g}\Big)\\
&&-\sum\limits^{n+1}_{i=1}f(Rx_i)(-1)^{i-1}R[Rx_1,\cdots,\widehat{Rx_i},\cdots,Rx_{n+1}]_{\g}+\sum\limits^{n+1}_{i=1}f(Rx_i)(-1)^{i-1}R[Rx_1,\cdots,\widehat{Rx_i},\cdots,Rx_{n+1}]_{\g}\\
&&-\sum\limits^{n+1}_{i=1}f(Rx_i)(-1)^{i-1}\Big(\sum\limits^{i-1}_{j=1}(-1)^{n-j}R[Rx_1,\cdots,\widehat{Rx_j},\cdots,\widehat{Rx_i},\cdots,Rx_{n+1},x_j]_{\g}\\
&&+\sum\limits^{n+1}_{j=i+1}(-1)^{n+1-j}R[Rx_1,\cdots,\widehat{Rx_i},\cdots,\widehat{Rx_j},\cdots,Rx_{n+1},x_j]_{\g}\Big)\\
&=&\sum\limits^{n+1}_{i=1}(-1)^{n+1-i}f(x_i)R[Rx_i,\cdots,\widehat{Rx_i},\cdots,Rx_{n+1}]_{\g}.
\end{eqnarray*}
Therefore, $R$ is a Reynolds operator on the $(n+1)$-Lie algebra $(\g,\{\cdot,\cdots,\cdot\})$ if and only if \eqref{n+1Reynold-equation} holds.
\end{proof}

\begin{cor}\label{3-Lie-Reynold-algebra}
Let $(\g,[\cdot,\cdots,\cdot]_{\g},R)$ be a Reynolds $n$-Lie algebra. If $R$ is a Reynolds operator on the $(n+1)$-Lie algebra $\g_f$ defined by \eqref{constrcuted-n+1-Alge}, then $(\g,\{\cdot,\cdots,\cdot\}_{R},R)$ is a Reynolds $(n+1)$-Lie algebra, where $\{\cdot,\cdots,\cdot\}_{R}$ is defined by
\begin{eqnarray*}
\{x_1,x_2,\cdots,x_{n+1}\}_{R}&=&\sum\limits^{n+1}_{i=1}\sum\limits^{n+1}_{j=1,j\neq i}(-1)^{n-i+j}f(Rx_j)[Rx_1,\cdots,\widehat{Rx_{i}},\cdots,\widehat{Rx_{j}},\cdots,Rx_{n+1},x_i]_{\g}\\
&&+\sum\limits^{n+1}_{j=1}(-1)^{j-1}f(Rx_j)[Rx_{1},\cdots,\widehat{Rx_{j}},\cdots,Rx_{n+1}]_{\g}.
\end{eqnarray*}
\end{cor}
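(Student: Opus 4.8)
The plan is to read this corollary as an application of Theorem~\ref{Reynolds-n-Lie algebra} to the $(n+1)$-Lie algebra $\g_f$ produced by Lemma~\ref{constrcuted-alg}, rather than to re-verify any Filippov or Reynolds identity from scratch. By hypothesis $R$ is a Reynolds operator on $\g_f$, so Theorem~\ref{Reynolds-n-Lie algebra}, applied with $n$ replaced by $n+1$ and the bracket $[\cdot,\cdots,\cdot]_{\g}$ replaced by the $\g_f$-bracket $\{\cdot,\cdots,\cdot\}$ of \eqref{constrcuted-n+1-Alge}, immediately yields that the induced bracket
\begin{equation*}
\{x_1,\cdots,x_{n+1}\}_{R}=\sum_{i=1}^{n+1}(-1)^{n+1-i}\{Rx_1,\cdots,\widehat{Rx_i},\cdots,Rx_{n+1},x_i\}-\{Rx_1,\cdots,Rx_{n+1}\}
\end{equation*}
is an $(n+1)$-Lie bracket (part (b)) for which $R$ is again a Reynolds operator (part (c)); that is, $(\g,\{\cdot,\cdots,\cdot\}_R,R)$ is automatically a Reynolds $(n+1)$-Lie algebra. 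Consequently no new identity needs to be checked by hand, and the whole content of the corollary reduces to the purely algebraic task of showing that this induced bracket coincides with the explicit expression in the statement.

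To carry this out I would substitute the defining formula \eqref{constrcuted-n+1-Alge} of the $\g_f$-bracket into each summand above. Expanding $\{Rx_1,\cdots,\widehat{Rx_i},\cdots,Rx_{n+1},x_i\}$ amounts to choosing which of its $n+1$ slots carries the functional $f$. When $f$ is evaluated on one of the retained arguments $Rx_j$ with $j\neq i$, one obtains, after tracking the position of the two omitted slots and using skew-symmetry of $[\cdot,\cdots,\cdot]_{\g}$ to write the omissions in a uniform order, the double sum $\sum_{i}\sum_{j\neq i}(-1)^{n-i+j}f(Rx_j)[Rx_1,\cdots,\widehat{Rx_i},\cdots,\widehat{Rx_j},\cdots,Rx_{n+1},x_i]_{\g}$, which is the first term of the asserted formula. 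The remaining contributions come from the slot carrying $x_i$ together with the expansion of $-\{Rx_1,\cdots,Rx_{n+1}\}$; these single-omission terms are then collected into $\sum_{j}(-1)^{j-1}f(Rx_j)[Rx_1,\cdots,\widehat{Rx_j},\cdots,Rx_{n+1}]_{\g}$ with the help of the defining identity (a) of Theorem~\ref{Reynolds-n-Lie algebra}, the constraint $f([x_1,\cdots,x_n]_{\g})=0$, and the Reynolds relation \eqref{n+1Reynold-equation} that encodes the hypothesis that $R$ is a Reynolds operator on $\g_f$.

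The main obstacle is bookkeeping rather than conceptual. One must control the Koszul-type signs $(-1)^{k-1}$ coming from the position $k$ of the $f$-slot in \eqref{constrcuted-n+1-Alge}, which shift by one according to whether the second omitted index $j$ falls before or after $i$, and reconcile them via skew-symmetry so that every double-omission term acquires the single uniform sign $(-1)^{n-i+j}$. I expect the delicate step to be the consolidation of the single-omission terms, where the factors $f(x_i)$ and $f(Rx_j)$ must be matched against \eqref{n+1Reynold-equation} so that they reorganize into the stated $f(Rx_j)$-term; once the signs are aligned, the identification of the two brackets is immediate and the corollary follows at once from parts (b) and (c) of Theorem~\ref{Reynolds-n-Lie algebra}.
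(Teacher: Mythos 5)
Your overall strategy is the intended one: the paper gives no separate proof of this corollary, which is meant to follow by applying Theorem~\ref{Reynolds-n-Lie algebra}\,(b) and (c), with $n$ replaced by $n+1$, to the Reynolds $(n+1)$-Lie algebra $(\g,\{\cdot,\cdots,\cdot\},R)$ furnished by the hypothesis together with Theorem~\ref{n+1Reynold}; the displayed bracket $\{\cdot,\cdots,\cdot\}_{R}$ is then nothing but the expansion of the induced bracket \eqref{induce-n-Lie} via \eqref{constrcuted-n+1-Alge}. So your first paragraph captures the entire conceptual content correctly, and no new Filippov or Reynolds identity needs independent verification.

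However, the mechanism you propose for the ``delicate step'' would fail. Expanding
$\sum_{i=1}^{n+1}(-1)^{n+1-i}\{Rx_1,\cdots,\widehat{Rx_i},\cdots,Rx_{n+1},x_i\}-\{Rx_1,\cdots,Rx_{n+1}\}$
directly, the single-omission contributions are
\begin{equation*}
\sum_{i=1}^{n+1}(-1)^{i-1}\bigl(f(x_i)-f(Rx_i)\bigr)[Rx_1,\cdots,\widehat{Rx_i},\cdots,Rx_{n+1}]_{\g},
\end{equation*}
since the $f(x_i)$-slot carries sign $(-1)^{n+1-i}(-1)^{n}=(-1)^{i-1}$ and the last bracket contributes $-(-1)^{i-1}f(Rx_i)$. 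No identity at your disposal converts this into the stated $+\sum_{j}(-1)^{j-1}f(Rx_j)[Rx_1,\cdots,\widehat{Rx_j},\cdots,Rx_{n+1}]_{\g}$: in particular, \eqref{n+1Reynold-equation} cannot be ``matched against'' these terms as you suggest, because every bracket in \eqref{n+1Reynold-equation} sits inside an outer $R$, so it constrains only $R$-images of brackets and says nothing about the bare bracket terms arising in the expansion; likewise part (a) of Theorem~\ref{Reynolds-n-Lie algebra} and the constraint $f([x_1,\cdots,x_n]_{\g})=0$ give no relation between $f(x_i)$ and $f(Rx_i)$. A similar slip occurs in the double sum: for $j>i$ the honest sign is $(-1)^{n+1-i+j}$, not the uniform $(-1)^{n-i+j}$, and skew-symmetry cannot repair this since the $j<i$ and $j>i$ terms involve different index sets. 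The resolution is that the corollary's printed display should itself be read as the (typo-afflicted) expansion of the induced bracket --- compare the analogous slip in the proof of Theorem~\ref{n+1Reynold}, where $f(x_i)[Rx_1,\cdots,Rx_i,\cdots,Rx_{n+1}]_{\g}$ appears without the omission hat or sign. Your argument becomes correct if you define $\{\cdot,\cdots,\cdot\}_{R}$ to be the induced bracket of Theorem~\ref{Reynolds-n-Lie algebra} applied to $\g_f$ and record its expansion with the corrected signs and the $f(x_i)$-terms retained, rather than trying to force agreement with the literal display via \eqref{n+1Reynold-equation}.
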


\begin{ex}{\rm
Let $(\g,[\cdot,\cdot])$ be the $3$-dimensional Lie algebra given by
\begin{eqnarray*}
[e_1,e_2]=e_2,
\end{eqnarray*}
 where $\{e_1,e_2,e_3\}$ is a basis of $\g.$ By Lemma \ref{constrcuted-alg}, the trace function $f\in \g^*,$ where
$
\left\{\begin{array}{rcl}
{}f(e_1)=1,\\
{}f(e_2)=0,\\
{}f(e_3)=1,
\end{array}\right.
$
induces a $3$-Lie algebra $(\g_{f},\{\cdot,\cdot,\cdot\})$ defined with the same basis by
 \begin{eqnarray*}
\{e_1,e_2,e_3\}=e_2.
\end{eqnarray*}

Consider a linear map $R:\g\rightarrow\g$ defined by
$\left(\begin{array}{ccc}
a_{11}&a_{12}&a_{13}\\
a_{21}&a_{22}&a_{23}\\
a_{31}&a_{32}&a_{33}\end{array}\right)$ with respect to the basis $\{e_1,e_2,e_3\}$. Define
$$Re_1=a_{11}e_1+a_{21}e_2+a_{31}e_3,\quad Re_2=a_{12}e_1+a_{22}e_2+a_{32}e_3,\quad Re_3=a_{13}e_1+a_{23}e_2+a_{33}e_3.$$
Then $R$ is a Reynolds operator on the Lie algebra $\g$ if and only if
\begin{eqnarray*}
[Re_i,Re_j]=R[Re_i,e_j]+R[e_i,Re_j]-R[Re_i,Re_j],\quad i,j=1,2,3.
\end{eqnarray*}
By a straightforward computation, we conclude that $R$ is a Reynolds operator on the Lie algebra $\g$ if and only if
\begin{eqnarray*}
\begin{aligned}
(a_{21}a_{13}-a_{11}a_{23}+a_{23})a_{12}&=0;&(a_{22}a_{13}-a_{12}a_{23}-a_{13})a_{12}&=0;\\
(a_{21}a_{13}-a_{11}a_{23}+a_{23})a_{32}&=0;&(a_{22}a_{13}-a_{12}a_{23}-a_{13})a_{32}&=0;\\
(a_{11}+a_{22}-a_{11}a_{22}+a_{21}a_{12})a_{12}&=0;&(a_{11}+a_{22}-a_{11}a_{22}+a_{21}a_{12})a_{32}&=0;\\
(a_{11}a_{23}-a_{21}a_{13})(a_{22}+1)&=a_{23}a_{22}; &(a_{22}a_{13}-a_{12}a_{23})(a_{22}+1)&=a_{13}a_{22};\\
(a_{11}+a_{22}-a_{11}a_{22}+a_{21}a_{12})a_{22}&=a_{11}a_{22}-a_{21}a_{12}.
\end{aligned}
\end{eqnarray*}

By Theorem \ref{n+1Reynold}, if $R$ is also a Reynolds operator on the $3$-Lie algebra $\g_{f}$, then $R$ satisfies
\begin{eqnarray*}
f(e_1)R[Re_2,Re_3]+f(e_2)R[Re_3,Re_1]+f(e_3)R[Re_1,Re_2]=0,
\end{eqnarray*}
that is
\begin{eqnarray*}
(a_{12}a_{23}-a_{22}a_{13}+a_{11}a_{22}-a_{21}a_{12})a_{12}=0;\\
(a_{12}a_{23}-a_{22}a_{13}+a_{11}a_{22}-a_{21}a_{12})a_{22}=0;\\
(a_{12}a_{23}-a_{22}a_{13}+a_{11}a_{22}-a_{21}a_{12})a_{32}=0.
\end{eqnarray*}
So we have the following two cases to consider.
\begin{itemize}
\item[{\rm (ii)}]
If $a_{12}=a_{22}=a_{32}=0,$ then we deduce that $a_{13}=\frac{a_{11}a_{23}}{a_{21}}.$
\item[{\rm (ii)}]
If\begin{eqnarray*}
\left\{\begin{array}{rcl}
{}a_{21}a_{13}-a_{11}a_{23}+a_{23}=0;\\
{}a_{22}a_{13}-a_{12}a_{23}-a_{13}=0;\\
{}a_{11}+a_{22}-a_{11}a_{22}+a_{21}a_{12}=0;\\
{}a_{12}a_{23}-a_{22}a_{13}+a_{11}a_{22}-a_{21}a_{12}=0.
\end{array}\right.
\end{eqnarray*}
then we deduce that $a_{11}=-a_{22},a_{21}=-\frac{a^{2}_{22}}{a_{12}},a_{13}=a_{23}=0.$
\end{itemize}
Therefore, we can obtain
$$\left(\begin{array}{ccc}
a_{11}&0&\frac{a_{11}a_{23}}{a_{21}}\\
a_{21}&0&a_{23}\\
a_{31}&0&a_{33}
\end{array}\right)
 \quad\mbox{and}\quad
\left(\begin{array}{ccc}
-a_{22}&a_{12}&0\\
-\frac{a^2_{22}}{a_{12}}&a_{22}&0\\
a_{31}&a_{32}&a_{33}\end{array}\right)$$ are both  Reynolds operators on the Lie algebra $\g$ and the $3$-Lie algebra $\g_{f}$.}
\end{ex}

 \subsection{Constructions of Reynolds operators on $3$-Lie algebras from Reynolds operators on commutative associative algebras}
In this subsection, we construct Reynolds $3$-Lie algebras from commutative associative Reynolds algebras.
Recall from \cite{Das-2} that a Reynolds operator on an associative algebra $(\g,\cdot)$ is a linear map $R:\g\rightarrow\g$ satisfying
\begin{equation}\label{eq:Reynolds LA}
Rx\cdot Ry=R(Rx\cdot y+x\cdot Ry-Rx\cdot Ry),\quad \forall~x,y\in\g.
\end{equation}
\begin{lem}\label{lem:comm asso alg}{\rm(\cite{BaiRGuo})}
Let $(\g,\cdot)$ be a commutative associative algebra. Let $D\in\Der(\g)$ and $f\in \g^*$ satisfy
$f(D(x)\cdot y)=f(x\cdot D(y)).$
Then $(\g,\{\cdot,\cdot,\cdot\}_{f,D})$ is a $3$-Lie algebra, where the bracket is given by
\begin{eqnarray}\label{eq:Rota-Baxter 1}
&&\{x,y,z\}_{f,D}\triangleq \begin{vmatrix}f(x)& f(y)&f(z)\\ D(x)&D(y)&D(z)\\ x&y&z\end{vmatrix}\\
&&\nonumber\triangleq f(x)(D(y)\cdot z-D(z)\cdot y)+f(y)(D(z)\cdot x-D(x)\cdot z)+f(z)(D(x)\cdot y-D(y)\cdot x).
\end{eqnarray}
\end{lem}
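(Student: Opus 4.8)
The plan is to check the two defining axioms of a $3$-Lie algebra for $\{\cdot,\cdot,\cdot\}_{f,D}$, namely total skew-symmetry and the Filippov Identity \eqref{FI-Identity} with $n=3$. Skew-symmetry is immediate: the bracket is the Leibniz expansion of the formal determinant displayed in the statement, so interchanging any two of $x,y,z$ transposes two columns and changes the sign; since $(\g,\cdot)$ is commutative, the products $D(a)\cdot b$ occurring in that expansion are unambiguous and the sign change is genuine. Hence $\{\cdot,\cdot,\cdot\}_{f,D}$ is totally skew-symmetric, and it remains to establish the Filippov Identity.

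Before the main computation I would record two consequences of the hypothesis $f(D(x)\cdot y)=f(x\cdot D(y))$. First, the bilinear form $B(x,y):=f(D(x)\cdot y)$ is symmetric, and therefore $f$ annihilates every bracket, $f(\{x,y,z\}_{f,D})=0$, because each of the three summands of $f(\{x,y,z\}_{f,D})$ has the shape $f(a)\bigl(B(b,c)-B(c,b)\bigr)=0$. Second, applying $D$ to a bracket, the cross terms $D(a)\cdot D(b)$ cancel by commutativity, so $D\{x,y,z\}_{f,D}$ is obtained from $\{x,y,z\}_{f,D}$ by replacing each derivative factor $D(\cdot)$ with $D^2(\cdot)$. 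These two facts let me rewrite the adjoint operator $\ad_{x_1,x_2}(u):=\{x_1,x_2,u\}_{f,D}$ in the structural form
\[
\ad_{x_1,x_2}(u)=\alpha\cdot u+\beta\cdot D(u)+f(u)\,\gamma,
\]
where $\alpha=f(x_1)D(x_2)-f(x_2)D(x_1)$, $\beta=f(x_2)x_1-f(x_1)x_2$ and $\gamma=D(x_1)\cdot x_2-D(x_2)\cdot x_1$ are fixed elements of $\g$; note for later use that $f(\gamma)=0$ and $D(\beta)=-\alpha$, both by the hypothesis.

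With these reductions in hand, the Filippov Identity is exactly the statement that $\ad_{x_1,x_2}$ is a derivation of $\{\cdot,\cdot,\cdot\}_{f,D}$, i.e. $\ad_{x_1,x_2}\{y_1,y_2,y_3\}_{f,D}=\sum_{i=1}^{3}\{y_1,\dots,\ad_{x_1,x_2}(y_i),\dots,y_3\}_{f,D}$. I would prove this by expanding both sides with the structural form above, together with the Leibniz rule for $D$, associativity and commutativity, and then grouping the resulting monomials according to the factors $f(\cdot)$, $(\cdot)$, $D(\cdot)$, $D^2(\cdot)$ they carry. The summands containing $f(\{y_1,y_2,y_3\}_{f,D})$ or $f(\{x_1,x_2,y_i\}_{f,D})$ vanish by the first consequence above; the genuinely second-order terms coming from $\beta\cdot D(\{y_1,y_2,y_3\}_{f,D})$ on the left match those produced on the right when $\beta\cdot D(y_i)$ is inserted into slot $i$; and the remaining first- and zeroth-order terms cancel after using $D(\beta)=-\alpha$, $f(\gamma)=0$ and a single application of the symmetry of $B$.

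The hard part will be the bookkeeping in this last step: the expansion generates many monomials in $D^k(y_j)$ and in $x_1,x_2$, and the cancellations depend on pairing them correctly and on invoking the compatibility hypothesis at precisely the places where stray terms of the form $f(D(a)\cdot b)$ survive. Organizing everything through the operator identity for $\ad_{x_1,x_2}$, rather than expanding the determinant blindly, is what keeps the number of cases manageable and isolates the exact role of the hypothesis.
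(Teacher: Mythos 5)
Your proof is correct, but note that the paper itself offers nothing to compare it against: Lemma \ref{lem:comm asso alg} is quoted from \cite{BaiRGuo} without proof, and the source establishes it by a direct, brute-force expansion of the Filippov identity for the determinant bracket. Your route is genuinely different and tidier. The key move is the operator form $\ad_{x_1,x_2}(u)=\alpha\cdot u+\beta\cdot D(u)+f(u)\,\gamma$ together with the three preliminary facts, all of which are correct: $f\circ\{\cdot,\cdot,\cdot\}_{f,D}=0$ (symmetry of $B(x,y)=f(D(x)\cdot y)$ is exactly the hypothesis, via commutativity), $D\{x,y,z\}_{f,D}$ being the bracket with $D$ replaced by $D^2$ (the cross terms $D(a)\cdot D(b)$ cancel pairwise by commutativity), and $D(\beta)=-\alpha$, $f(\gamma)=0$. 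I checked that the final bookkeeping you defer actually closes: since $f(\{x_1,x_2,y_i\}_{f,D})=0$, the column substituted in slot $i$ on the right-hand side is $(0,\,D(\ad_{x_1,x_2}(y_i)),\,\ad_{x_1,x_2}(y_i))$ with $D(\ad_{x_1,x_2}(y_i))=D(\alpha)y_i+\beta\cdot D^2(y_i)+f(y_i)D(\gamma)$ by $D(\beta)=-\alpha$; grouping the cofactor expansion by the fixed elements $\alpha,\beta,D(\alpha),\gamma,D(\gamma)$, the $\alpha$-family reassembles $\alpha\cdot\{y_1,y_2,y_3\}_{f,D}$, the $\beta D^2(y_i)$-family reassembles $\beta\cdot D\{y_1,y_2,y_3\}_{f,D}$, and the $D(\alpha)$-, $\gamma$-, $D(\gamma)$-families as well as the $D(y_i)\cdot D(y_j)$ cross terms each vanish by commutativity and the antisymmetry of their scalar coefficients, while on the left $f(\{y_1,y_2,y_3\}_{f,D})\gamma=0$ kills the zeroth-order term. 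So the hypothesis on $f$ is used exactly where you say (through $f\circ\{\cdot,\cdot,\cdot\}_{f,D}=0$ and $f(\gamma)=0$) and nowhere else. What your approach buys over the reference's direct computation is a sharp localization of both hypotheses (symmetry of $B$, commutativity) and a cancellation scheme with only five families of terms instead of an undifferentiated mass of monomials; what it costs is the preliminary verification of the operator identity, which is itself a short expansion.
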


\begin{pro}
Let $(\g,\cdot,R)$ be a Reynolds commutative associative  algebra. Let $D\in \Der(\g)$ satisfying $DR=RD$ and $f\in \g^*$ satisfy $f(D(x)\cdot y)=f(x\cdot D(y)).$ Then $R$ is a Reynolds operator on the $3$-Lie algebra $(\g,\{\cdot,\cdot,\cdot\}_{f,D})$ if and only if $R$ satisfies
\begin{eqnarray}\label{eq:Reynolds-1-operator}
\begin{vmatrix}f(x)& f(y)&f(z)\\
 D(Rx)&D(Ry)&D(Rz)\\
  Rx&Ry&Rz
  \end{vmatrix}
=0.
\end{eqnarray}
\end{pro}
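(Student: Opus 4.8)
The plan is to specialize the $n$-Lie Reynolds identity \eqref{n-Reynolds} to $n=3$ for the bracket $\{\cdot,\cdot,\cdot\}_{f,D}$ and to collapse the resulting ``Reynolds defect'' onto the determinant \eqref{eq:Reynolds-1-operator}, using only the commutation $DR=RD$, the associative Reynolds identity \eqref{eq:Reynolds LA}, and the fact (Lemma \ref{lem:comm asso alg}) that $(\g,\{\cdot,\cdot,\cdot\}_{f,D})$ is a $3$-Lie algebra. For $x,y,z\in\g$, set
\[
\mathcal{R}:=\{Rx,Ry,Rz\}_{f,D}+R\{Rx,Ry,Rz\}_{f,D}-R\{Ry,Rz,x\}_{f,D}+R\{Rx,Rz,y\}_{f,D}-R\{Rx,Ry,z\}_{f,D},
\]
so that, by the $n=3$ case of \eqref{n-Reynolds}, $R$ is a Reynolds operator on $(\g,\{\cdot,\cdot,\cdot\}_{f,D})$ precisely when $\mathcal{R}=0$. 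Moving each un-$R$'d argument into a uniform slot by skew-symmetry of the $3$-Lie bracket rewrites this as
\[
\mathcal{R}=\{Rx,Ry,Rz\}_{f,D}-R\big(\{x,Ry,Rz\}_{f,D}+\{Rx,y,Rz\}_{f,D}+\{Rx,Ry,z\}_{f,D}-\{Rx,Ry,Rz\}_{f,D}\big).
\]

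Next I would expand every bracket through the explicit formula \eqref{eq:Rota-Baxter 1} and group all terms by the scalar coefficient $f(\cdot)$ they carry; since $f$ occupies the top row of the determinant, the coefficients are exactly $f(Rx),f(Ry),f(Rz)$ and $f(x),f(y),f(z)$. Using $DR=RD$ to write $D(Ru)=R(Du)$, the coefficient $A_x:=D(Ry)\cdot Rz-D(Rz)\cdot Ry$ of $f(Rx)$ in $\{Rx,Ry,Rz\}_{f,D}$ becomes a difference of products of two elements of $\Img R$; applying \eqref{eq:Reynolds LA} to each of these two products expresses $A_x=R(B_x)$ for an explicit combination $B_x$ of (mixed) products. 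The key step is to verify that $B_x$ coincides exactly with the coefficient of $f(Rx)$ obtained by collecting the $f(Rx)$-terms of the three inner brackets $\{Rx,y,Rz\}_{f,D}$, $\{Rx,Ry,z\}_{f,D}$ and $-\{Rx,Ry,Rz\}_{f,D}$. Hence the $f(Rx)$-contribution of $\{Rx,Ry,Rz\}_{f,D}$ cancels against $R$ of the inner $f(Rx)$-terms, and by the $S_3$-symmetry of the construction the same holds for $f(Ry)$ and $f(Rz)$. Consequently the whole term $\{Rx,Ry,Rz\}_{f,D}$ is absorbed.

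What survives is $R$ applied to the $f(x),f(y),f(z)$-families, which appear only inside $R(\cdots)$:
\[
\mathcal{R}=-R\big(f(x)(D(Ry)\cdot Rz-D(Rz)\cdot Ry)+f(y)(D(Rz)\cdot Rx-D(Rx)\cdot Rz)+f(z)(D(Rx)\cdot Ry-D(Ry)\cdot Rx)\big),
\]
and the argument of $R$ is precisely the cofactor expansion along the first row of the determinant in \eqref{eq:Reynolds-1-operator}. Therefore $\mathcal{R}=0$ — i.e. $R$ is a Reynolds operator on $(\g,\{\cdot,\cdot,\cdot\}_{f,D})$ — if and only if that determinant vanishes, giving \eqref{eq:Reynolds-1-operator}; just as in Theorem \ref{n+1Reynold}, the residual naturally appears under one application of $R$. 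The compatibility $f(D(x)\cdot y)=f(x\cdot D(y))$ is used only through Lemma \ref{lem:comm asso alg} to guarantee that $\{\cdot,\cdot,\cdot\}_{f,D}$ is a genuine $3$-Lie algebra. The main obstacle is the middle step's bookkeeping: the expansion produces many products of the four shapes $D(R\cdot)\cdot R(\cdot)$, $D(R\cdot)\cdot(\cdot)$, $D(\cdot)\cdot R(\cdot)$ and $R(\cdot)\cdot R(\cdot)$, and one must organize them so that each invocation of \eqref{eq:Reynolds LA} pairs the correct three summands; keeping the coefficient-by-coefficient grouping and exploiting the $S_3$-symmetry (so that only the $f(Rx)$-block is checked by hand) is what makes the cancellation transparent and avoids a brute-force expansion.
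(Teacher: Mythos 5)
Your proposal is correct, but it takes a genuinely different route from the paper's. The paper factors the verification through an intermediate Lie algebra: it defines $[x,y]_D=D(x)\cdot y-D(y)\cdot x$, checks in a short computation (using $DR=RD$ and \eqref{eq:Reynolds LA}) that $R$ is a Reynolds operator on the Lie algebra $(\g,[\cdot,\cdot]_D)$, and then, since the compatibility hypothesis $f(D(x)\cdot y)=f(x\cdot D(y))$ gives $f([x,y]_D)=0$ so that $\{\cdot,\cdot,\cdot\}_{f,D}$ is exactly the construction of Lemma \ref{constrcuted-alg} applied to $(\g,[\cdot,\cdot]_D)$, it invokes Theorem \ref{n+1Reynold} with $n=2$ to read off the obstruction \eqref{n+1Reynold-equation}, which is the first-row cofactor expansion of the determinant. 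Your monolithic expansion inlines both of these steps: your cancellation of the $f(Rx)$-block is, term for term, the identity $[Ry,Rz]_D=R\bigl([Ry,z]_D+[y,Rz]_D-[Ry,Rz]_D\bigr)$ multiplied by the scalar $f(Rx)$ (and its $S_3$-images), and your residual computation redoes the $n=2$ case of Theorem \ref{n+1Reynold} by hand. The paper's factorization buys modularity, since Theorem \ref{n+1Reynold} was already proved for all $n$; your direct computation buys self-containedness and makes visible exactly where each hypothesis enters --- in particular your remark that $f(D(x)\cdot y)=f(x\cdot D(y))$ is used only to make $\{\cdot,\cdot,\cdot\}_{f,D}$ a genuine $3$-Lie bracket agrees with the paper, where it enters only through $f([x,y]_D)=0$. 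One shared caveat: both your expansion and the paper's Theorem \ref{n+1Reynold} actually produce the condition with $R$ applied to the determinant, i.e.\ that $R$ of the left-hand side of \eqref{eq:Reynolds-1-operator} vanishes; identifying this with the literal statement \eqref{eq:Reynolds-1-operator} (which has no outer $R$) is a conflation the paper itself makes in its proof, and since you flag this explicitly, it is not a defect of your argument relative to the paper's.
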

\begin{proof}
For all $x,y\in\g$, define
$[x,y]_D=D(x)\cdot y-D(y)\cdot x.$
By direct calculations,  we can verify that $(\g,[\cdot,\cdot]_D)$ is a Lie algebra. Furthermore, assume that $R$ is a Reynolds operator on $(\g,\cdot)$ satisfying $DR=RD$. Then  we have
\begin{eqnarray*}
[Rx,Ry]_D&=&D(Rx)\cdot Ry-Rx\cdot D(Ry)\\
&=&R(Dx)\cdot Ry-Rx\cdot R(Dy)\\
&=&R\Big(R(Dx)\cdot y+Dx\cdot Ry-R(Dx)\cdot Ry\Big)-R\Big(Rx\cdot Dy+x\cdot R(Dy)-Rx\cdot R(Dy)\Big)\\
&=&R\Big(D(Rx)\cdot y-Dy\cdot Rx\Big)+R\Big(Dx\cdot Ry-D(Ry)\cdot x\Big) -R\Big(D(Rx)\cdot Ry-D(Ry)\cdot Rx\Big)\\
&=&R([Rx,y]_D+[x,Ry]_D-R[Rx,Ry]_D),
\end{eqnarray*}
which implies that $R$ is a Reynolds operator on the Lie algebra $(\g,[\cdot,\cdot]_D)$.
By Theorem \ref{n+1Reynold}, $R$ is a Reynolds operator on the $3$-Lie algebra $(\g,\{\cdot,\cdot,\cdot\}_{f,D})$ if and only if \eqref{eq:Reynolds-1-operator} holds .
\end{proof}

Let $(\g,\cdot)$ be a commutative associative algebra.
For $x_i,~y_i,~z_i\in \g,~i=1,~2,~3$, denote by
\begin{eqnarray*}
\begin{vmatrix}\vec{x}&\vec{y}&\vec{z}\end{vmatrix}&=&\begin{vmatrix}x_1& y_1&z_1\\ x_2&y_2&z_2\\ x_3&y_3&z_3\end{vmatrix}\\
&=&x_1\cdot(y_2\cdot z_3-y_3\cdot z_2)-x_2\cdot(y_1\cdot z_3-y_3\cdot z_1)+x_3\cdot(y_1\cdot z_2-y_2\cdot z_1),
\end{eqnarray*}
where $\vec{x}$, $\vec{y}$ and $\vec{z}$ denote the column vectors.

\begin{lem}\label{lem:calcu det}
Let $R$ be a Reynolds operator on a commutative associative algebra $(\g,\cdot)$ and $R(\vec{x}),R(\vec{y}),R(\vec{z})$ denote the images of the column vectors. Then we have
\begin{eqnarray*}
\begin{vmatrix}R(\vec{x})&R(\vec{y})&R(\vec{z})\end{vmatrix}&=&R\left(\begin{vmatrix}R(\vec{x})&R(\vec{y})&\vec{z}\end{vmatrix}+c.p.\right)-R\left(\begin{vmatrix}R(\vec{x})&R(\vec{y})&R(\vec{z})\end{vmatrix}\right).
\end{eqnarray*}
\end{lem}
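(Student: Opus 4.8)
The plan is to reduce the determinantal identity to a ternary statement about triple products and then reassemble the determinants. Both sides are $\mathbb{R}$-multilinear in the column vectors $\vec{x},\vec{y},\vec{z}$, and, by the cofactor expansion recalled just before the lemma, each determinant is a signed sum of triple products with one factor taken from each column. It therefore suffices to establish, for all $a,b,c\in\g$, a ternary Reynolds identity that expresses $Ra\cdot Rb\cdot Rc$ in terms of the triple products in which exactly one factor is left untransformed, together with the fully transformed product; summing this identity against the signs of the cofactor expansion will then recombine the pieces into the determinants appearing in the statement.

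To produce the ternary identity I would apply the binary Reynolds identity \eqref{eq:Reynolds LA} twice. Setting $w=Rx\cdot b+a\cdot Rb-Ra\cdot Rb$ (more precisely $w=Ra\cdot b+a\cdot Rb-Ra\cdot Rb$), the identity gives $Ra\cdot Rb=R(w)$, hence $Ra\cdot Rb\cdot Rc=R(w)\cdot Rc$. Applying \eqref{eq:Reynolds LA} once more to the pair $(w,c)$ yields
\begin{eqnarray*}
Ra\cdot Rb\cdot Rc&=&R\big(R(w)\cdot c+w\cdot Rc-R(w)\cdot Rc\big).
\end{eqnarray*}
Substituting $R(w)=Ra\cdot Rb$ and the defining expression for $w$, and using commutativity and associativity, the argument of $R$ expands into the three triple products $Ra\cdot Rb\cdot c$, $Ra\cdot b\cdot Rc$, $a\cdot Rb\cdot Rc$ together with copies of the fully transformed product $Ra\cdot Rb\cdot Rc$.

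Summing over the six signed permutations of the cofactor expansion, the three families of ``one plain factor'' products assemble into $\begin{vmatrix}R(\vec{x})&R(\vec{y})&\vec{z}\end{vmatrix}$ and its two cyclic companions; since a cyclic permutation of three columns is even, these enter with coefficient $+1$ and no extraneous signs appear. The fully transformed products likewise assemble into $\begin{vmatrix}R(\vec{x})&R(\vec{y})&R(\vec{z})\end{vmatrix}$, and the identity of the lemma then follows from the linearity of $R$.

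The only genuine difficulty is the bookkeeping. The signs $\mathrm{sgn}(\sigma)$ must be carried through both applications of \eqref{eq:Reynolds LA}, and, most delicately, the multiplicity with which the fully transformed product $Ra\cdot Rb\cdot Rc$ is generated must be tracked with care: it arises both from the explicit $-R(w)\cdot Rc$ and from the term $-Ra\cdot Rb\cdot Rc$ hidden inside $w\cdot Rc$, so its total contribution is what fixes the coefficient of $\begin{vmatrix}R(\vec{x})&R(\vec{y})&R(\vec{z})\end{vmatrix}$ on the right. Beyond this accounting there is no conceptual obstacle, and commutativity is exactly what allows the recombined sums to be read back as determinants with the stated column patterns.
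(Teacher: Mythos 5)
Your strategy coincides with the paper's own proof: expand each determinant as a signed sum $\sum_{\sigma\in S_3}\sgn(\sigma)\,(\cdots)$ of triple products, apply a ternary Reynolds identity termwise, and reassemble. But the step you explicitly postpone as ``bookkeeping'' --- the multiplicity of the fully transformed product --- is precisely where the argument collapses. Complete your own computation: with $s=Ra\cdot b+a\cdot Rb-Ra\cdot Rb$, so that $R(s)=Ra\cdot Rb$ by \eqref{eq:Reynolds LA}, the second application of \eqref{eq:Reynolds LA} gives
\begin{align*}
Ra\cdot Rb\cdot Rc&=R\bigl(R(s)\cdot c+s\cdot Rc-R(s)\cdot Rc\bigr)\\
&=R\bigl(Ra\cdot Rb\cdot c+Ra\cdot b\cdot Rc+a\cdot Rb\cdot Rc\bigr)-2\,R\bigl(Ra\cdot Rb\cdot Rc\bigr),
\end{align*}
because $Ra\cdot Rb\cdot Rc$ occurs once inside $s\cdot Rc$ and once more as $R(s)\cdot Rc$, each with sign $-1$. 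The multiplicity is $2$, not $1$. Summing against $\sgn(\sigma)$ therefore yields
\begin{align*}
\begin{vmatrix}R(\vec{x})&R(\vec{y})&R(\vec{z})\end{vmatrix}=R\left(\begin{vmatrix}R(\vec{x})&R(\vec{y})&\vec{z}\end{vmatrix}+c.p.\right)-2\,R\left(\begin{vmatrix}R(\vec{x})&R(\vec{y})&R(\vec{z})\end{vmatrix}\right),
\end{align*}
so your closing assertion that ``the identity of the lemma then follows'' is exactly the step that fails: your method proves the identity with the last term doubled.

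Moreover, no refinement of the bookkeeping can rescue the statement, because it is false as written: $R=\Id$ satisfies \eqref{eq:Reynolds LA} on any commutative associative algebra, and for it the left-hand side is $\begin{vmatrix}\vec{x}&\vec{y}&\vec{z}\end{vmatrix}$ while the right-hand side is $3\begin{vmatrix}\vec{x}&\vec{y}&\vec{z}\end{vmatrix}-\begin{vmatrix}\vec{x}&\vec{y}&\vec{z}\end{vmatrix}=2\begin{vmatrix}\vec{x}&\vec{y}&\vec{z}\end{vmatrix}$, whereas with the coefficient $2$ both sides agree. For comparison, the paper's proof simply asserts the coefficient-$1$ ternary identity in its second displayed equality, with no derivation; your instinct to deduce it from the binary identity \eqref{eq:Reynolds LA} is the honest route, and it exposes that the correct coefficient is $2$ (more generally $n-1$ for a product of $n$ Reynolds images, consistent with the heuristic $R^{-1}=\Id+D$ for a derivation $D$). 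So the gap is not in your determinant recombination --- commutativity and the evenness of cyclic column permutations work exactly as you say --- but in the claim itself: the lemma, and consequently the coefficient-$1$ Reynolds property asserted downstream in Proposition \ref{pro:ex2}, hold only after this coefficient is corrected.
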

\begin{proof}
Since $R$ is a Reynolds operator on $(\g,\cdot)$, we have
 \begin{eqnarray*}
\begin{vmatrix}R(\vec{x})&R(\vec{y})&R(\vec{z})\end{vmatrix}&=&\sum_{\sigma\in S_3}\sgn(\sigma)R(x_{\sigma(1)})R(y_{\sigma(2)})R(z_{\sigma(3)})\\
&=&R\big(\sum_{\sigma\in S_3}\sgn(\sigma)R(x_{\sigma(1)})R(y_{\sigma(2)})z_{\sigma(3)}+c.p.\big)\\&&
-R\big(\sum_{\sigma\in S_3}\sgn(\sigma)R(x_{\sigma(1)})R(y_{\sigma(2)})R(z_{\sigma(3)})\big)\\
&=&R\left(\begin{vmatrix}R(\vec{x})&R(\vec{y})&\vec{z}\end{vmatrix}+c.p.\right)-R\left(\begin{vmatrix}R(\vec{x})&R(\vec{y})&R(\vec{z})\end{vmatrix}\right).
\end{eqnarray*}
The proof is finished.
\end{proof}
\begin{lem}\label{lem:2derivation}{\rm(\cite{Dzhu-1})}
Let $(\g,\cdot)$ be a commutative associative algebra, $D_1,D_2\in\Der(\g)$ satisfy $D_1D_2=D_2D_1$. Then
$(\g,\llbracket \cdot,\cdot,\cdot\rrbracket)$ is a $3$-Lie algebra, where the bracket is given by
\begin{eqnarray}\label{eq:Rota-Baxter 2}
\llbracket x_1,x_2,x_3\rrbracket\triangleq \begin{vmatrix}x_1& x_2&x_3\\ D_1(x_1)&D_1(x_2)&D_1(x_3)\\ D_2(x_1)&D_2(x_2)&D_2(x_3)\end{vmatrix},\quad\forall x_1,x_2,x_3\in \g.
\end{eqnarray}
\end{lem}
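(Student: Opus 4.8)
The plan is to check the two axioms of a $3$-Lie algebra for $\llbracket\cdot,\cdot,\cdot\rrbracket$: skew-symmetry and the Filippov identity \eqref{FI-Identity} with $n=3$. Skew-symmetry is immediate, since interchanging two of the arguments interchanges two columns of the defining determinant and so flips its sign. For the Filippov identity I would use the standard reformulation: writing $\ad_{x_1,x_2}(y):=\llbracket x_1,x_2,y\rrbracket$, identity \eqref{FI-Identity} says exactly that $\ad_{x_1,x_2}$ is a derivation of the bracket $\llbracket\cdot,\cdot,\cdot\rrbracket$, so the whole lemma reduces to establishing this single derivation property.

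The first computational step is to expand $\ad_{x_1,x_2}$ as a first-order operator. Cofactor expansion of $\llbracket x_1,x_2,y\rrbracket$ along the column containing $y$ gives
\begin{equation*}
\ad_{x_1,x_2}(y)=a\cdot y-b\cdot D_1(y)+c\cdot D_2(y),
\end{equation*}
where $a=D_1(x_1)\cdot D_2(x_2)-D_1(x_2)\cdot D_2(x_1)$, $b=x_1\cdot D_2(x_2)-x_2\cdot D_2(x_1)$ and $c=x_1\cdot D_1(x_2)-x_2\cdot D_1(x_1)$ are the three relevant $2\times 2$ minors. Thus $\ad_{x_1,x_2}=L_a+Q$, where $L_a$ is multiplication by $a$ and $Q:=-b\,D_1+c\,D_2$ is a derivation of the associative product $\cdot$.

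Next I would record two auxiliary facts, writing $\mathrm{Def}(\Phi):=\Phi\llbracket y_1,y_2,y_3\rrbracket-\sum_{k=1}^3\llbracket y_1,\ldots,\Phi(y_k),\ldots,y_3\rrbracket$ for the obstruction to a linear map $\Phi$ being a derivation of the bracket (note $\mathrm{Def}$ is linear in $\Phi$). First, for any derivation $D$ of $(\g,\cdot)$, applying the Leibniz rule column-by-column to the determinant yields the defect formula
\begin{equation*}
\mathrm{Def}(D)=\sum_{k=1}^3\Big([D,D_1](y_k)\cdot C_{2,k}+[D,D_2](y_k)\cdot C_{3,k}\Big),
\end{equation*}
where $[D,D_i]=D D_i-D_i D$ and $C_{i,k}$ is the $(i,k)$ cofactor of the $3\times3$ matrix defining $\llbracket y_1,y_2,y_3\rrbracket$; taking $D=D_1$ or $D_2$ and using $D_1D_2=D_2D_1$ shows in particular that $D_1,D_2$ are themselves derivations of $\llbracket\cdot,\cdot,\cdot\rrbracket$. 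Second, I would compute the commutators $[Q,D_1](y)=D_1(b)\cdot D_1(y)-D_1(c)\cdot D_2(y)$ and $[Q,D_2](y)=D_2(b)\cdot D_1(y)-D_2(c)\cdot D_2(y)$, where the cross terms proportional to $D_2D_1-D_1D_2$ drop out precisely because $D_1$ and $D_2$ commute.

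With these in hand, $\mathrm{Def}(\ad_{x_1,x_2})=\mathrm{Def}(L_a)+\mathrm{Def}(Q)$. Since $Q$ is a derivation of $\cdot$, the defect formula plus the Laplace expansions $\sum_k D_1(y_k)C_{2,k}=\sum_k D_2(y_k)C_{3,k}=\llbracket y_1,y_2,y_3\rrbracket$ and the vanishing of the alien-cofactor sums $\sum_k D_2(y_k)C_{2,k}=\sum_k D_1(y_k)C_{3,k}=0$ collapse $\mathrm{Def}(Q)$ to a scalar multiple of the bracket; the multiplication operator $L_a$ is not a derivation of $\cdot$, but a direct expansion using $\sum_k y_k C_{2,k}=\sum_k y_k C_{3,k}=0$ handles it as well, giving
\begin{equation*}
\mathrm{Def}(L_a)=-2a\,\llbracket y_1,y_2,y_3\rrbracket,\qquad \mathrm{Def}(Q)=\big(D_1(b)-D_2(c)\big)\llbracket y_1,y_2,y_3\rrbracket.
\end{equation*}
It then remains only to verify the scalar identity $D_1(b)-D_2(c)=2a$. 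Expanding $D_1(b)$ and $D_2(c)$ by Leibniz and again invoking $D_1D_2=D_2D_1$ gives $D_1(b)=a+g$ and $D_2(c)=-a+g$ with the common term $g=x_1\cdot D_1D_2(x_2)-x_2\cdot D_1D_2(x_1)$, so $D_1(b)-D_2(c)=2a$ and hence $\mathrm{Def}(\ad_{x_1,x_2})=0$. I expect the main obstacle to be the bookkeeping in the penultimate step: one must organize the many terms produced when $D_1,D_2$ act on the variable coefficients $a,b,c$ so that the cofactor identities apply cleanly. Once those determinant identities are isolated, commutativity does the remaining work, and it is worth noting that the factor $2$ appears on both sides, so no hypothesis on the characteristic of the base field is needed.
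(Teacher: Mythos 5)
Your proof is correct, and it is worth noting at the outset that the paper offers no argument of its own to compare against: Lemma \ref{lem:2derivation} is simply imported from \cite{Dzhu-1} with a citation, so your write-up supplies a proof where the paper has none. I checked the computational backbone and it holds up: the reduction of the Filippov identity to the statement that $\ad_{x_1,x_2}=L_a+Q$ is a derivation of $\llbracket\cdot,\cdot,\cdot\rrbracket$ is exactly right for $n=3$; the defect formula $\mathrm{Def}(D)=\sum_k\bigl([D,D_1](y_k)\cdot C_{2,k}+[D,D_2](y_k)\cdot C_{3,k}\bigr)$ is valid for \emph{any} derivation $D$ of $(\g,\cdot)$ with no commutation hypothesis between $D$ and the $D_i$ (the row-replacement argument you implicitly use goes through verbatim); $Q=-b\,D_1+c\,D_2$ is indeed a derivation of the associative product precisely because $\g$ is commutative, which is where that hypothesis enters alongside $D_1D_2=D_2D_1$; your commutators $[Q,D_i]$, the Laplace and alien-cofactor collapses giving $\mathrm{Def}(Q)=\bigl(D_1(b)-D_2(c)\bigr)\llbracket y_1,y_2,y_3\rrbracket$ and $\mathrm{Def}(L_a)=-2a\,\llbracket y_1,y_2,y_3\rrbracket$, and the final scalar identity $D_1(b)=a+g$, $D_2(c)=-a+g$ all verify. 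What your organization buys, compared with the brute-force check one might fear (or with expanding all $3!\cdot 3$ Leibniz terms of the nested brackets), is that linearity of $\mathrm{Def}$ isolates the only two sources of failure --- the multiplication part $L_a$ and the variable coefficients of $Q$ --- and the determinant identities dispose of each in one line; in particular the factor $2$ cancels identically ($\mathrm{Def}(Q)=2a\,\llbracket\cdot\rrbracket$ exactly offsets $\mathrm{Def}(L_a)$), so, as you observe, no characteristic assumption is needed, and since skew-symmetry comes from column exchange in a determinant the bracket is even alternating, which is the correct notion in characteristic $2$. The only cosmetic slip is the label ``$Q$ is a derivation of $\cdot$'' being used inside the defect formula before you have flagged that the formula's proof never needed $[D,D_i]=0$; making that explicit would remove any appearance of circularity, but the mathematics is sound as written.
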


\begin{pro}\label{pro:ex2}
Let $(\g,\cdot,R)$ be a Reynolds commutative associative algebra. Let $D_1,D_2\in\Der(\g)$  satisfy $D_1D_2=D_2D_1$ and $RD_1=D_1R,~ RD_2=D_2R.$
Then $R$ is a Reynolds operator on the $3$-Lie algebra $(\g,\llbracket \cdot,\cdot,\cdot\rrbracket)$, where the bracket $\llbracket \cdot,\cdot,\cdot\rrbracket:\wedge^3\g\rightarrow\g$ is given by  $\eqref{eq:Rota-Baxter 2}$.
\end{pro}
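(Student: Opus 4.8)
The plan is to avoid expanding the $n=3$ instance of the Reynolds identity \eqref{n-Reynolds} directly, and instead exploit the determinant form \eqref{eq:Rota-Baxter 2} of the bracket together with Lemma \ref{lem:calcu det}. For each $x\in\g$ set the column vector $\vec{x}=(x,D_1(x),D_2(x))^{\top}$, so that by definition $\llbracket x_1,x_2,x_3\rrbracket=\begin{vmatrix}\vec{x_1}&\vec{x_2}&\vec{x_3}\end{vmatrix}$ in the determinant notation introduced before Lemma \ref{lem:calcu det}. The whole point of the hypotheses $RD_1=D_1R$ and $RD_2=D_2R$ is that $R$ then acts columnwise: $R(\vec{x})=(Rx,R D_1(x),R D_2(x))^{\top}=(Rx,D_1(Rx),D_2(Rx))^{\top}$. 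Hence applying $R$ to the entries of one column is the same as replacing the corresponding argument $x_i$ of the bracket by $Rx_i$. In particular $\begin{vmatrix}R(\vec{x_1})&R(\vec{x_2})&R(\vec{x_3})\end{vmatrix}=\llbracket Rx_1,Rx_2,Rx_3\rrbracket$, and each mixed determinant with exactly one un-applied column equals a $3$-Lie bracket with exactly one argument left un-$R$'d.

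Second, I would apply Lemma \ref{lem:calcu det} verbatim to the three column vectors $\vec{x_1},\vec{x_2},\vec{x_3}$. Since $R$ is a Reynolds operator on $(\g,\cdot)$ by hypothesis, translating each determinant through the dictionary of the first step, the lemma yields
\begin{eqnarray*}
\llbracket Rx_1,Rx_2,Rx_3\rrbracket &=& R\Big(\llbracket Rx_1,Rx_2,x_3\rrbracket+\llbracket x_1,Rx_2,Rx_3\rrbracket+\llbracket Rx_1,x_2,Rx_3\rrbracket\Big)\\
&&-R\llbracket Rx_1,Rx_2,Rx_3\rrbracket,
\end{eqnarray*}
the three inner brackets being exactly the translations of the ``$+\,c.p.$'' terms of the lemma. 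Note that no auxiliary condition such as \eqref{eq:Reynolds-1-operator} is required here, precisely because \eqref{eq:Rota-Baxter 2} is a genuine $3\times 3$ determinant of the columns $\vec{x_i}$: the obstruction that appeared in the $\{\cdot,\cdot,\cdot\}_{f,D}$ case, where one row is replaced by the functional $f$, is absent when both rows come from derivations commuting with $R$.

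Finally, it remains to recognize the right-hand side above as the $n=3$ instance of the Reynolds identity \eqref{n-Reynolds}, namely
\[
\llbracket Rx_1,Rx_2,Rx_3\rrbracket=R\llbracket Rx_2,Rx_3,x_1\rrbracket-R\llbracket Rx_1,Rx_3,x_2\rrbracket+R\llbracket Rx_1,Rx_2,x_3\rrbracket-R\llbracket Rx_1,Rx_2,Rx_3\rrbracket.
\]
This is pure sign bookkeeping: using that the determinant bracket $\llbracket\cdot,\cdot,\cdot\rrbracket$ is alternating, I would check $\llbracket Rx_2,Rx_3,x_1\rrbracket=\llbracket x_1,Rx_2,Rx_3\rrbracket$ (a cyclic, hence even, permutation) and $-\llbracket Rx_1,Rx_3,x_2\rrbracket=\llbracket Rx_1,x_2,Rx_3\rrbracket$ (a single transposition), so that the three signed terms of \eqref{n-Reynolds} coincide with the three one-bare brackets produced by Lemma \ref{lem:calcu det}. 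I expect this last matching, tracking the signs $(-1)^{3-i}$ against the alternating property of the determinant, to be the only delicate point; everything else reduces to the columnwise commutation of $R$ with $D_1,D_2$ and a direct appeal to Lemma \ref{lem:calcu det}.
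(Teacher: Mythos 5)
Your proposal is correct and follows essentially the same route as the paper's own proof: rewrite $\llbracket Rx_1,Rx_2,Rx_3\rrbracket$ as the determinant of the columns $R(\vec{x_i})$ using the commutation hypotheses $RD_1=D_1R$, $RD_2=D_2R$, apply Lemma \ref{lem:calcu det}, and translate back through the same dictionary. The only difference is presentational: you make explicit the final sign bookkeeping matching the ``$+\,c.p.$'' terms of the lemma with the signed sum $\sum_{i}(-1)^{3-i}R\llbracket \cdots,\widehat{Rx_i},\cdots,x_i\rrbracket$ in \eqref{n-Reynolds}, a step the paper leaves implicit.
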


\begin{proof} Since $RD_1=D_1R,\  RD_2=D_2R$, by Lemma \ref{lem:calcu det}, we have
\begin{eqnarray*}
\llbracket Rx_1,Rx_2,Rx_3\rrbracket&=& \begin{vmatrix}Rx_1& Rx_2&Rx_3\\ D_1(Rx_1)&D_1(Rx_2)&D_1(Rx_3)\\ D_2(Rx_1)&D_2(Rx_2)&D_2(Rx_3)\end{vmatrix}\\
&=&\begin{vmatrix}Rx_1& Rx_2&Rx_3\\ RD_1(x_1)&RD_1(x_2)&RD_1(x_3)\\ RD_2(x_1)&RD_2(x_2)&RD_2(x_3)\end{vmatrix}\\
&=&R\left(\begin{vmatrix}Rx_1& Rx_2&x_3\\ RD_1(x_1)&RD_1(x_2)&D_1(x_3)\\ RD_2(x_1)&RD_2(x_2)&D_2(x_3)\end{vmatrix}+c.p.\right)\\
&&-R\left(\begin{vmatrix}Rx_1& Rx_2&Rx_3\\ RD_1(x_1)&RD_1(x_2)&RD_1(x_3)\\ RD_2(x_1)&RD_2(x_2)&RD_2(x_3)\end{vmatrix}\right)\\
&=&R\left(\begin{vmatrix}Rx_1& Rx_2&x_3\\ D_1(Rx_1)&D_1(Rx_2)&D_1(x_3)\\ D_2(Rx_1)&D_2(Rx_2)&D_2(x_3)\end{vmatrix}+c.p.\right)\\
&&-R\left(\begin{vmatrix}Rx_1& Rx_2&Rx_3\\ D_1(Rx_1)&D_1(Rx_2)&D_1(Rx_3)\\ D_2(Rx_1)&D_2(Rx_2)&D_2(Rx_3)\end{vmatrix}\right)\\
&=&R(\llbracket Rx_1,Rx_2,x_3\rrbracket+c.p.)-R(\llbracket Rx_1,Rx_2,Rx_3\rrbracket).
\end{eqnarray*}
Thus, $R$ is a Reynolds operator on the $3$-Lie algebra  $(\g,\llbracket \cdot,\cdot,\cdot\rrbracket)$.
\end{proof}

\begin{lem}\label{lem:3derivation}{\rm(\cite{Filippov})}
Let $(\g,\cdot)$ be a commutative associative algebra. Let $D_i\in\Der(\g)$ such that $D_iD_j=D_jD_i,\ i,j=1,2,3$. Then
$(\g,\llbracket \cdot,\cdot,\cdot\rrbracket)$ is a $3$-Lie algebra, where the bracket is given by
\begin{eqnarray}\label{eq:Rota-Baxter 3}
\llbracket x,y,z\rrbracket:= \begin{vmatrix}D_1(x)&D_1(y)&D_1(z)\\ D_2(x)&D_2(y)&D_2(z)\\ D_3(x)&D_3(y)&D_3(z)\end{vmatrix},\quad \forall x,y,z\in\g.
\end{eqnarray}
\end{lem}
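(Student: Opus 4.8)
The plan is to verify the two defining axioms of a $3$-Lie algebra for the bracket $\llbracket\cdot,\cdot,\cdot\rrbracket$ in \eqref{eq:Rota-Baxter 3}: skew-symmetry and the Filippov Identity \eqref{FI-Identity} specialized to $n=3$. Skew-symmetry is immediate, since interchanging two of the arguments $x,y,z$ interchanges two columns of the defining determinant and hence changes the sign.

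The main point is to observe that for each fixed pair $x_1,x_2\in\g$ the operator
$$D_{x_1,x_2}:\g\to\g,\qquad D_{x_1,x_2}(z):=\llbracket x_1,x_2,z\rrbracket$$
is a derivation of $(\g,\cdot)$. Indeed, expanding the determinant \eqref{eq:Rota-Baxter 3} along the third column gives $D_{x_1,x_2}=\sum_{k=1}^{3}c_k D_k$, where each coefficient $c_k\in\g$ is, up to sign, the $2\times 2$ minor built from the entries $D_i(x_1),D_j(x_2)$. Since $(\g,\cdot)$ is commutative, for any $c\in\g$ and any derivation $D_k$ the map $z\mapsto c\,D_k(z)$ is again a derivation; thus $D_{x_1,x_2}$, being a $\g$-linear combination of the $D_k$, is a derivation. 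In this notation the Filippov Identity for $n=3$ is precisely the assertion that $\ad_{x_1,x_2}=D_{x_1,x_2}$ is a derivation of the ternary bracket, namely
$$D_{x_1,x_2}\big(\llbracket y_1,y_2,y_3\rrbracket\big)=\sum_{j=1}^{3}\llbracket y_1,\dots,D_{x_1,x_2}(y_j),\dots,y_3\rrbracket.$$

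To prove this I would write $A$ for the $3\times 3$ matrix with entries $A_{ij}=D_i(y_j)$, so that $\llbracket y_1,y_2,y_3\rrbracket=\det A$, and invoke two identities valid over any commutative ring: the Leibniz rule for the determinant, $D(\det A)=\sum_{i,j}\mathrm{cof}_{ij}(A)\,D(A_{ij})$ for any derivation $D$, and the adjugate identity $\sum_{j}\mathrm{cof}_{ij}(A)A_{kj}=\delta_{ik}\det A$. Putting $D=D_{x_1,x_2}$ and expanding each bracket on the right-hand side along its $D(y_j)$-column, the difference of the two sides collapses to $\sum_{i,j}\mathrm{cof}_{ij}(A)\,[D,D_i](y_j)$, where $[D,D_i]=DD_i-D_iD$ is the operator commutator. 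A short computation using $D_kD_i=D_iD_k$ gives $[D,D_i]=-\sum_k D_i(c_k)D_k$, after which the adjugate identity reduces the whole expression to $-\big(\sum_k D_k(c_k)\big)\det A$.

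The main obstacle, and the only place where the hypothesis $D_iD_j=D_jD_i$ is genuinely used, is therefore the vanishing of the \emph{divergence} $\sum_{k=1}^{3}D_k(c_k)$. I expect to confirm this by a direct expansion: each $D_k(c_k)$ is a sum of products of a second-order term $D_iD_j(x_a)$ with a first-order term $D_\ell(x_b)$, and the resulting twelve monomials cancel in pairs once the repeated derivations are commuted and the commutativity of the product in $\g$ is used. Establishing $\sum_k D_k(c_k)=0$ completes the Filippov Identity and hence shows that $(\g,\llbracket\cdot,\cdot,\cdot\rrbracket)$ is a $3$-Lie algebra.
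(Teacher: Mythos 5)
Your proposal is correct, but note that the paper itself gives no proof of this lemma at all: it is quoted from Filippov's original paper with a citation, so there is nothing internal to compare against. Your argument is a sound, self-contained replacement. The structure is right at every step: writing $D_{x_1,x_2}=\sum_k c_k D_k$ with the $c_k$ the signed $2\times 2$ minors, observing that $c\,D_k$ is again a derivation because $\g$ is commutative, and recasting the $n=3$ Filippov identity as the statement that $D_{x_1,x_2}$ is a derivation of the ternary bracket is exactly the classical ``divergence-free vector field'' mechanism behind Jacobian/Nambu brackets. The reduction via the determinant Leibniz rule and the alien-cofactor (adjugate) identity to $-\bigl(\sum_k D_k(c_k)\bigr)\det A$ is correct, including the sign: with $D=\sum_k c_kD_k$ and $D_kD_i=D_iD_k$ one gets $[D,D_i]=-\sum_k D_i(c_k)D_k$, and $\sum_j \mathrm{cof}_{ij}(A)A_{kj}=\delta_{ik}\det A$ collapses the double sum as you say. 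The one step you left as ``expected,'' the vanishing of the divergence $\sum_{k=1}^{3}D_k(c_k)$, does check out: with $c_1=D_2(x_1)D_3(x_2)-D_3(x_1)D_2(x_2)$ and its cyclic analogues, expanding produces exactly twelve monomials of the form $D_iD_j(x_a)\,D_\ell(x_b)$, and they cancel in pairs using $D_iD_j=D_jD_i$ together with commutativity of the product — this is the algebraic avatar of $\mathrm{div}(\nabla x_1\times\nabla x_2)=0$. What your route buys, compared with simply citing Filippov or with a brute-force expansion of the Filippov identity (which would involve comparing dozens of degree-two determinant terms), is a conceptual factorization: all of the hypothesis $D_iD_j=D_jD_i$ is isolated in one small divergence computation, and the rest is linear algebra over a commutative ring.
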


\begin{pro}
Let $(\g,\cdot,R)$ be a Reynolds commutative associative algebra, $D_1,D_2,D_3$ be derivations of $(\g,\cdot)$ satisfying $D_iD_j=D_jD_i$ and $RD_i=D_iR$ for $\ i,j=1,2,3, i\neq j$. Then $R$ is a Reynolds operator on the $3$-Lie algebra $(\g,\llbracket \cdot,\cdot,\cdot\rrbracket)$, where the bracket $\llbracket \cdot,\cdot,\cdot\rrbracket:\wedge^3\g\rightarrow\g$ is given by $\eqref{eq:Rota-Baxter 3}$.
\end{pro}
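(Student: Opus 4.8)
The plan is to follow the strategy of the proof of Proposition \ref{pro:ex2} almost verbatim, since the bracket \eqref{eq:Rota-Baxter 3} is again a $3\times 3$ determinant whose rows are obtained by applying derivations to the three arguments. The only structural difference is that here all three rows carry derivations, whereas in Proposition \ref{pro:ex2} the top row was the identity; the hypothesis $RD_i=D_iR$ for every $i$ lets all three rows be treated uniformly, so no new idea is needed.

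First I would rewrite $\llbracket Rx_1,Rx_2,Rx_3\rrbracket$ using the commutation relations. Since $D_i(Rx_j)=R(D_i(x_j))$ for all $i,j$, every entry of the defining determinant becomes the image under $R$ of the corresponding entry of the original matrix. Writing $\vec{x}_j$ for the column vector with entries $D_1 x_j, D_2 x_j, D_3 x_j$, the $j$-th column of the matrix for $\llbracket Rx_1,Rx_2,Rx_3\rrbracket$ is exactly $R(\vec{x}_j)$, so that
\begin{eqnarray*}
\llbracket Rx_1,Rx_2,Rx_3\rrbracket&=&\begin{vmatrix}R(\vec{x}_1)&R(\vec{x}_2)&R(\vec{x}_3)\end{vmatrix}.
\end{eqnarray*}

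Next I would invoke Lemma \ref{lem:calcu det}, which applies directly because $R$ is a Reynolds operator on $(\g,\cdot)$ and the determinant entries are ordinary products in the commutative associative algebra. This gives
\begin{eqnarray*}
\begin{vmatrix}R(\vec{x}_1)&R(\vec{x}_2)&R(\vec{x}_3)\end{vmatrix}&=&R\left(\begin{vmatrix}R(\vec{x}_1)&R(\vec{x}_2)&\vec{x}_3\end{vmatrix}+c.p.\right)-R\left(\begin{vmatrix}R(\vec{x}_1)&R(\vec{x}_2)&R(\vec{x}_3)\end{vmatrix}\right).
\end{eqnarray*}
It then remains to translate the right-hand side back into brackets. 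Using $RD_i=D_iR$ once more, a column $R(\vec{x}_j)$ equals the column with entries $D_1 Rx_j, D_2 Rx_j, D_3 Rx_j$, while a bare column $\vec{x}_j$ has entries $D_1 x_j, D_2 x_j, D_3 x_j$; hence each determinant on the right is recognized as the corresponding bracket, for instance $\begin{vmatrix}R(\vec{x}_1)&R(\vec{x}_2)&\vec{x}_3\end{vmatrix}=\llbracket Rx_1,Rx_2,x_3\rrbracket$, and the cyclic-permutation terms match up by the antisymmetry of the bracket. We therefore obtain
\begin{eqnarray*}
\llbracket Rx_1,Rx_2,Rx_3\rrbracket&=&R(\llbracket Rx_1,Rx_2,x_3\rrbracket+c.p.)-R(\llbracket Rx_1,Rx_2,Rx_3\rrbracket),
\end{eqnarray*}
which is precisely the Reynolds identity \eqref{n-Reynolds} in the case $n=3$, so $R$ is a Reynolds operator on $(\g,\llbracket\cdot,\cdot,\cdot\rrbracket)$.

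I do not expect any genuine obstacle: once Lemma \ref{lem:calcu det} is in hand the computation is purely formal. The only points deserving care are the bookkeeping of the three cyclic-permutation terms and noting the precise role of the two families of commutation hypotheses: the pairwise commutativity $D_iD_j=D_jD_i$ is needed only through Lemma \ref{lem:3derivation}, to guarantee that $\llbracket\cdot,\cdot,\cdot\rrbracket$ is a genuine $3$-Lie bracket in the first place, whereas the determinant manipulation above uses solely the relations $RD_i=D_iR$ together with the Reynolds property of $R$ on $(\g,\cdot)$. In particular the Leibniz rule for the $D_i$ is never invoked in this step; the derivations enter only as the operators appearing in the rows and through their commutation with $R$.
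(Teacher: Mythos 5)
Your proposal is a faithful reconstruction of the paper's intended argument: the paper's entire proof of this proposition is the sentence ``The proof is similar to the proof of Proposition \ref{pro:ex2}'', and your write-up --- commute $R$ past each $D_i$ so that every entry of the defining determinant becomes $R$ of the corresponding entry, invoke Lemma \ref{lem:calcu det}, then use $RD_i=D_iR$ once more to recognize the resulting determinants as brackets --- is precisely that proof, down to your accurate remark that $D_iD_j=D_jD_i$ enters only through Lemma \ref{lem:3derivation} and that the Leibniz rule is never used in the verification itself.

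One caution, however, which is a flaw of the paper that your blind reliance on its Lemma \ref{lem:calcu det} inherits: that lemma is stated with the wrong coefficient. Iterating the binary Reynolds identity on a triple product, with $w=Ra\cdot b+a\cdot Rb-Ra\cdot Rb$ so that $Ra\cdot Rb=Rw$, gives
\begin{eqnarray*}
Ra\cdot Rb\cdot Rc=Rw\cdot Rc=R\big(Ra\cdot Rb\cdot c+Ra\cdot b\cdot Rc+a\cdot Rb\cdot Rc\big)-2R\big(Ra\cdot Rb\cdot Rc\big),
\end{eqnarray*}
with coefficient $-2$, not $-1$; summing over permutations, the conclusion of Lemma \ref{lem:calcu det} should end in $-2R\left(\begin{vmatrix}R(\vec{x})&R(\vec{y})&R(\vec{z})\end{vmatrix}\right)$, and hence the identity your computation actually establishes is the twisted one with $-2R(\llbracket Rx_1,Rx_2,Rx_3\rrbracket)$, which is not \eqref{n-Reynolds}. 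A concrete counterexample to the proposition as stated: take $\g=\Real[t_1,t_2,t_3]$, $D_i=\partial/\partial t_i$, and $R=(\Id+\partial/\partial t_1)^{-1}=\sum_{m\geq 0}(-1)^m(\partial/\partial t_1)^m$, which is a Reynolds operator on $(\g,\cdot)$ commuting with all $D_i$; for $x_i=t_i$ one computes $\llbracket Rt_1,Rt_2,Rt_3\rrbracket=1$, while $R(\llbracket Rt_1,Rt_2,t_3\rrbracket+c.p.)-R(\llbracket Rt_1,Rt_2,Rt_3\rrbracket)=3-1=2$ (and indeed $3-2\cdot 1=1$ with the corrected coefficient). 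So, judged as a reproduction of the paper's proof, your proposal is exactly right; judged as a standalone argument, the step applying Lemma \ref{lem:calcu det} fails, and both this proposition and Proposition \ref{pro:ex2} are false with the definition \eqref{n-Reynolds} --- something worth verifying rather than taking on faith when the cited lemma does all the work.
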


\begin{proof}
The proof is similar to the proof of Proposition \ref{pro:ex2}.
\end{proof}


\begin{thebibliography}{a}
\bibitem{Arfa} A. Arfa, N. Ben Fraj and A. Makhlouf, Cohomology and deformations of $n$-Lie algebra morphisms, \emph{J. Geom. Phys.} {\bf132}
(2018), 64-74.

\bibitem{Bagger-J-1}J. Bagger and N. Lambert, Gauge symmetry and supersymmetry of multiple M2-branes gauge theories, \emph{Phys. Rev. D} {\bf77}
(2008), 065008.

\bibitem{Bagger-J} J. Bagger and N. Lambert,  Three-algebras and N=6 Chern-Simons gauge theories, \emph{Phys. Rev. D} {\bf79} (2009), no. 2, 025002, 8 pp.



\bibitem{BaiRGuo}
R. Bai, L. Guo, J. Li and Y. Wu, Rota-Baxter 3-Lie algebras, \emph{J. Math. Phys.} {\bf 54} (2013), 064504, 14 pp.


\bibitem{BRP-2}
 R. Bai, Y. Wu, J. Li and H. Zhou, Constructing $(n+1)$-Lie algebras from $n$-Lie algebras, \emph{J. Phys. A} {\bf 45} (2012), no. 47, 475206, 10 pp.

\bibitem{Casas} J. M. Casas, J. L. Loday and T. Pirashvili, Leibniz $n$-algebras, \emph{Forum Math.} {\bf 14} (2002), no. 2, 189-207.

\bibitem{Liu-Ma} M. Chen, J. Liu and Y. Mao, Lie $n$-algebras and cohomologies of relative Rota-Baxter operators on $n$-Lie algebras,  arXiv:2108.04076.

\bibitem{Das-1} A. Das, Twisted Rota-Baxter operators, Reynolds operators on Lie algebras and NS-Lie algebras,  arXiv:2009.09368.

\bibitem{Das-2} A. Das, Cohomology and deformations of twisted Rota-Baxter operators and NS-algebras, arXiv:2010.01156.

\bibitem{review}
J. A. de Azc\'arraga and J. M. Izquierdo, $n$-ary algebras: a review with applications, \emph{J. Phys. A} {\bf 43} (2010), 293001.


\bibitem{Dubreil-Jacotin} M.-L. Dubreil-Jacotin, \'{E}tude alg\'{e}brique des transformations de Reynolds, \emph{Colloque d$'$alg\`ebre sup\'{e}rieure, tenu \`a Bruxelles du 19 au 22 d\'{e}cembre} 1956, 1957, 9-27 pp.

\bibitem{Dzhu-1} A. S. Dzhumadil$'$daev, Identities and derivations for Jacobian algebras, \emph{Quantization, Poisson brackets and beyond} (Manchester, 2001), 245-278, \emph{Contemp. Math.}, 315, \emph{Amer. Math. Soc., Providence, RI}, 2002.


\bibitem{Filippov}  V. T. Filippov, $n$-Lie algebras, \emph{Sibirsk. Mat. Zh.} {\bf 26} (1985), 126-140, 191.

\bibitem{FM}  R. A. S. Fox adn J. B. Miller, Averaging and Reynolds operators in Banach algebra III, spectrum proposities of Reynolds operators, \emph{J. Math. Anal. Appl.} {\bf 24} (1968), 225-238.



\bibitem{Ka1} J.  Kamp\'{e} de F\'{e}riet, L$'$\'etat actuel du probl\`eme de la turbulence, \emph{La science Aeriennce} {\bf 3}
(1934), 9-34;  II,  ibid. {\bf 4} (1935), 12-52.

\bibitem{KA} Sh. M. Kasymov, On a theory of $n$-Lie algebras, \emph{Algebra i Logika} {\bf 26} (1987), 277-297.









\bibitem{LP}  P. Leroux, Construction of Nijenhuis operators and dendriform trialgebras, \emph{Int. J. Math. Math. Sci.} (2004), 2595-2615.

\bibitem{LG} P. Lei and L. Guo, Nijenhuis algebras, NS algebras and N-dendriform algebras, \emph{ Front. Math. China} {\bf 7} (2012), 827-846.


\bibitem{Liu-Jie-Feng} J. Liu, Y. Sheng, Y. Zhou and C. Bai, Nijenhuis operators on $n$-Lie algebras, \emph{Commun. Theor. Phys. (Beijing)} {\bf 65} (2016), no. 6, 659-670.


\bibitem{Makhlouf} A. Makhlouf, On deformations of $n$-Lie algebras. Non-associative and non-commutative algebra and operator theory, \emph{Springer Proc. Math. Stat.} {\bf 160}, Springer, Cham, (2016), 55-81.


\bibitem{Mi1}  J. B. Miller, M\"{o}bius transforms of Reynolds operators, \emph{J. Reine Angew. Math.} {\bf 218} (1965), 6-16.

\bibitem{Mi2}  J. B. Miller, Averaging and Reynolds operators on Banach algebra I, Representation by derivation and antiderivations, \emph{J. Math. Anal. Appl.} {\bf 14} (1966), 527-548.

\bibitem{N} Y. Nambu, Generalized Hamiltonian dynamics, \emph{Phys. Rev. D} {\bf 7} (1973), 2405-2412.

\bibitem{Re}  O. Reynolds, On the dynamical theory of incompressible viscous fluids and the determination of the criterion, \emph{Phil. Trans. Roy. Soc. A} {\bf 136} (1895), 123-164; reprinted in {\em Proc. Roy. Soc. London Ser. A} {\bf 451} (1995), no. 1941, 5-47.

\bibitem{Ro2}  G.-C. Rota, Reynolds operators, \emph{Proceedings of Symposia in Applied Mathematics}, Vol. XVI, Amer. Math. Soc., Providence, R.I.,  (1964), 70-83.






\bibitem{Takhtajan1} L. A. Takhtajan, Higher order analog of Chevalley-Eilenberg complex and deformation theory of $n$-gebras, \emph{St. Petersburg Math. J.} {\bf 6} (1995), no. 2, 429-438.



\bibitem{gao-guo} T. Zhang, X. Gao and L. Guo, Reynolds algebras and their free objects from bracketed words and rooted trees, \emph{J. Pure Appl. Algebra} {\bf 225} (2021), no. 12, 106766.























\end{thebibliography}
 \end{document}